

\documentclass[12pt]{article}


\parindent 0pt
\parskip 5pt

\usepackage{amsmath,amsthm,amssymb,thm-restate} 
\usepackage{tikz,xspace,nicefrac,enumitem}
\usetikzlibrary{shapes.geometric}
\usepackage[margin=1in]{geometry}

\usepackage{hyperref}
\newcommand{\arxiv}[1]{\href{http://arxiv.org/abs/#1}{\texttt{arXiv:#1}}}

\allowdisplaybreaks

\makeatletter
\g@addto@macro\normalsize{%
  \setlength\abovedisplayskip{8pt plus 3pt minus 3pt}
  \setlength\belowdisplayskip{8pt plus 3pt minus 3pt}
  \setlength\abovedisplayshortskip{6pt plus 3pt minus 2pt}
  \setlength\belowdisplayshortskip{6pt plus 3pt minus 2pt}
}
\makeatother

\date{\small Submitted: 4 November 2019; Revised: 4 December 2020}



\def\dfrac#1#2{\lower0.15ex\hbox{\large$\frac{#1}{#2}$}}

\numberwithin{equation}{section}

\def\({\bigl(}
\def\){\bigr)}

\newtheorem{thm}{Theorem}
\newtheorem{corollary}{Corollary}

\newtheorem{lemma}{Lemma}


\let\eps=\varepsilon

\newcommand{\bpw}{\operatorname{bpw}}             
\newcommand{\btw}{\operatorname{btw}}             
\newcommand{\cA}{\mathcal{A}}                     
\newcommand{\cB}{\mathcal{B}}                     
\newcommand{\cC}{\mathcal{C}}                     
\newcommand{\cF}{\mathcal{F}}                     
\newcommand{\Nb}{\mathrm{N}}                      
\newcommand{\cI}{\mathcal{I}}                     
\newcommand{\es}{\varnothing}                     
\newcommand{\IN}{\mathbb{N}}                      
\newcommand{\lmax}{\ell_{\textup{max}}}

\newcommand{\pw}{\operatorname{pw}}               
\newcommand{\tw}{\operatorname{tw}}               
\newcommand{\sd}{\oplus}                          
\newcommand{\sm}{\setminus}
\newcommand{\fast}{fast\xspace}
\newcommand{\tprob}{\boldsymbol{P}}               
\newcommand{\E}{\textbf{E}}                       
\newcommand{\var}{\textbf{var}}                   
\newcommand{\dtv}{\textrm{d}_{\textrm{TV}}}       
\let\originalleft\left
\let\originalright\right
\renewcommand{\left}{\mathopen{}\mathclose\bgroup\originalleft}
\renewcommand{\right}{\aftergroup\egroup\originalright}

  \definecolor{lightred}{rgb}{1.0 0.8 0.8}
  \definecolor{lightblue}{rgb}{0.8 0.8 1.0}
  \definecolor{darkgreen}{rgb}{0.0 0.5 0.0}
  \definecolor{lightgreen}{rgb}{0.7 1.0 0.7}


\tikzset{ b/.style = { circle 
                     , draw
                     , thick
                     , inner sep = 0pt
                     , fill = black
                     , minimum size = 3.5pt
                     }
        , w/.style = { circle 
                     , draw
                     , thick
                     , inner sep = 0pt
                     , fill = white
                     , minimum size = 3.7pt
                     }
        , r/.style = { circle 
                     , draw
                     , thick
                     , inner sep = 0pt
                     , fill = red
                     , minimum size = 4.0pt
                     }
        , i/.style = { circle 
                     , thick
                     , inner sep = 0pt
                     , minimum size = 4.0pt
                     }
        , R/.style = { rectangle, rounded corners 
                     , draw
                     , thick
                     , inner sep = 1.5pt
                     , fill = lightred
                     , minimum size = 6.0mm
                     }
        , B/.style = { diamond 
                     , draw
                     , thick
                     , inner sep = 1.5pt
                     , fill = lightblue
                     , minimum size = 5.2mm
                     }
        , U/.style = { circle 
                     , draw
                     , thick
                     , inner sep = 0pt
                     , minimum size = 5.5mm
                     }
        }

\title{Counting independent sets in graphs\\with bounded bipartite pathwidth%
\thanks{A preliminary version of this paper appeared as~\cite{DGM}.}}
\author{Martin Dyer\thanks{School of Computing, University of Leeds, Leeds LS2~9JT, UK.
Email: \texttt{m.e.dyer@leeds.ac.uk}. Supported by EPSRC research grant EP/S016562/1
```Sampling in hereditary classes''.}
\and Catherine Greenhill\thanks{School of Mathematics and Statistics, UNSW Sydney, NSW 2052,
 Australia. Email: \texttt{c.greenhill@unsw.edu.au}. Supported by Australian Research Council grant DP190100977.}
\and Haiko M\"{u}ller\thanks{School of Computing, University of Leeds, Leeds LS2~9JT, UK.
Email: \texttt{h.muller@leeds.ac.uk}. Supported by EPSRC research grant EP/S016562/1
```Sampling in hereditary classes''.}
}

\begin{document}

\maketitle

\begin{abstract}
We show that a simple Markov chain, the Glauber dynamics, can efficiently sample independent sets almost uniformly at random in polynomial time for graphs in a certain class. The class is determined by boundedness of a new graph parameter called bipartite pathwidth.  This result, which we prove for the more general hardcore distribution with fugacity~$\lambda$, can be viewed as a strong generalisation of Jerrum and Sinclair's work on approximately counting matchings, that is, independent sets in line graphs. The class of graphs with bounded bipartite pathwidth includes claw-free graphs, which generalise line graphs. We consider two further generalisations of claw-free graphs and prove that these classes have bounded bipartite pathwidth.
We also show how to extend all our results to polynomially-bounded vertex weights.\\

\textbf{Keywords}: Approximate counting, independent sets, pathwidth\\
\indent\textbf{Running title}: Bounded bipartite pathwidth
\end{abstract}
\section{Introduction}\label{s:intro}

There is a well-known bijection between matchings of a graph $G$ and independent sets in the line graph of $G$.
We will show that we can approximate the number of independent sets
in graphs for which all bipartite induced subgraphs are well structured, in a sense
that we will define precisely. Our approach is to generalise the Markov chain
analysis of Jerrum and Sinclair~\cite{JS} for the corresponding problem of counting matchings.

The canonical path argument given by Jerrum and Sinclair in~\cite{JS} relied on the fact
that the symmetric difference of two matchings of a given graph $G$ is a bipartite subgraph of $G$ consisting of
a disjoint union of paths and even-length cycles.
We introduce a new graph parameter, which we call bipartite pathwidth,
to enable us to give the strongest generalisation of the approach of~\cite{JS},
far beyond the class of line graphs.

\subsection{Independent set problems}\label{ss:is}

For a given graph $G$, let $\cI(G)$ be the set of all independent sets in $G$.
The \emph{independence number} $\alpha(G) = \max\{|I| \, :\, I \in \cI(G)\}$ is
the size of the largest independent set in $G$.
(We will sometimes simply denote this parameter as $\alpha$, if the graph $G$
is clear from the context.)
The problem of finding $\alpha(G)$ is NP-hard in general,
even in various restricted cases, such as degree-bounded graphs.
However, polynomial time algorithms have been constructed for computing $\alpha$, and finding an independent set
$I$ such that $\alpha=|I|$, for various graph classes.
The most important case has been \emph{matchings}, which are independent sets
in the \emph{line graph} $L(G)$ of $G$.
This has been generalised
to larger classes of graphs, for example
\emph{claw-free} graphs~\cite{Minty}, which include line graphs~\cite{beineke}, and
\emph{fork-free} graphs~\cite{Alekseev}, which include claw-free graphs.

Counting independent sets in graphs, determining $|\cI(G)|$,
is known to be \#P-complete in general~\cite{PrBa83},
and in various restricted cases \cite{greenhill,Vadhan}.
Exact counting in polynomial time is known only for some restricted graph classes, e.g.~\cite{DM}.
Even approximate counting is NP-hard in general,
and is unlikely to be in polynomial time for bipartite graphs~\cite{DGGJ03}. The relevance here of the
optimisation results above is that proving NP-hardness of approximate counting is usually based on the hardness
of some optimisation problem.

However, for some classes of graphs, for example line graphs, approximate
counting is known to be possible~\cite{JS,JSV}. The most successful approach to the problem has been
the Markov chain approach, which relies on a close correspondence between approximate counting
and sampling uniformly at random~\cite{JVV}.
The Markov chain method was applied to degree-bounded graphs in~\cite{LV} and~\cite{DG}.
In his PhD thesis~\cite{JM}, Matthews used the Markov chain approach with a Markov chain for
sampling independent sets in \emph{claw-free} graphs. His chain, and its analysis,
directly generalises that of~\cite{JS}.

Several other approaches to approximate counting have been successfully applied
to the independent set problem. Weitz~\cite{weitz} used the \emph{correlation decay} approach
on degree-bounded graphs, resulting in a deterministic polynomial time approximation
algorithm (an FPTAS) for counting independent sets in graphs with degree at most 5. Sly~\cite{sly} gave a
matching NP-hardness result. The correlation decay method was also applied to matchings in~\cite{BGKNP}, and
was extended to complex values of $\lambda$ in~\cite{HSV}.
Efthymiou et al.~\cite{EHSVY} proved that the Markov chain approach can
(almost) produce the best results obtainable by other methods,
for graphs with sufficiently high maximum degree and girth.

Very recently (after this paper was submitted), Anari et al~\cite{alg}
proved that the Glauber dynamics for weighted independent sets (the hardcore model)
is rapidly mixing in the tree uniqueness region, and their analysis was simplified
and generalised by Chen et al.~\cite{clv}, improving the mixing time.

The \emph{independence polynomial} $P_G(\lambda)$ of a graph $G$ is defined in (\ref{independence}) below.
The \emph{Taylor series} approach of Barvinok~\cite{barvinok-book,barvinok} was used by Patel and Regts~\cite{PR}
to give a FPTAS for $P_G(\lambda)$ in degree-bounded claw-free graphs. The success of the method depends on the location of the roots of the independence polynomial. 
Chudnovsky and Seymour~\cite{CS} proved that all
these roots are real, and hence they are all negative. Then the algorithm of~\cite{PR} is valid
for all complex $\lambda$ which are not real and negative.
Bencs~\cite{bencs} gave a new proof of Chudnovsky--Seymour result,
which involved establishing that independence polynomials satisfy 
certain Christoffel--Darboux identities. 
It follows from his proof that there is an open
region around the positive axis which is zero-free for the independence polynomial of
bounded-degree fork-free
graphs.  Thus the Taylor expansion approach can be applied to give an FPTAS for
$P_G(\lambda)$ in bounded-degree fork-free graphs.

Bonsma et al.~\cite{BKW} gave a polynomial-time algorithm which takes a claw-free graph $G$
and two independent sets of $G$ and decides whether one of the independent sets can be transformed
into the other by a sequence of elementary moves, each of which deletes one vertex and inserts
another, producing a new independent set. This implies ergodicity of the Glauber dynamics, though
we make no use of this result.

In this paper, we return to the Markov chain approach, providing a broad generalisation
of the methods of~\cite{JS}. In Section~\ref{s:bippathwidth} we define a graph parameter
which we call \emph{bipartite pathwidth}, and the class $\cC_p$ of graphs with
bipartite pathwidth at most $p$.  The Markov chain which we analyse is the well-known
\emph{Glauber dynamics}. We now state our main result, which gives a bound on the mixing time of the
Glauber dynamics for graphs of bounded bipartite pathwidth.
Some remarks about values of $\lambda$ not covered by Theorem~\ref{thm:main-Cp}
can be found at the end of Section~\ref{s:intro}.

\begin{restatable}{thm}{mainthm}\label{thm:main-Cp}
Let $G\in\cC_p$ be a graph with $n$ vertices and let $\lambda\geq e/n$, where $p\geq 2$ is an integer.
Then the Glauber dynamics with fugacity $\lambda$ on $\mathcal{I}(G)$
(and initial state $\es$) has mixing time
\[ \tau_\es(\eps) \leq 2e\alpha(G)\, n^{p+1}\, \lambda^p\Big(1+\max(\lambda,1/\lambda)\Big)\Big(\alpha(G)\,\ln(n\lambda)+1+\ln(1/\eps)\Big).\]
When $p$ is constant, this upper bound is polynomial in $n$ and $\max(\lambda,1/\lambda)$.
\end{restatable}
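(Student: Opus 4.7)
The plan is to bound the mixing time via the canonical-paths method of Jerrum and Sinclair~\cite{JS}, with bounded bipartite pathwidth playing the role for arbitrary independent sets that the simple path-and-even-cycle structure of symmetric differences plays for matchings. The Glauber dynamics has stationary distribution $\pi(I)=\lambda^{|I|}/P_G(\lambda)$ on $\cI(G)$, and the standard canonical-paths inequality reduces the task to bounding a congestion $\rho$ that weights each canonical path by $\pi(I)\pi(J)$ and by its length. Since $\pi(\es)=1/P_G(\lambda)$ and a direct estimate gives $P_G(\lambda)\le(n+1)^{\alpha(G)}\lambda^{\alpha(G)}$, the bracket $\alpha(G)\ln(n\lambda)+1+\ln(1/\eps)$ in the theorem is precisely $\ln(1/\pi(\es))+\ln(1/\eps)$ up to a small additive error that the hypothesis $\lambda\ge e^9/n$ is used to absorb into the additive $1$.

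The structural core is that for any $I,J\in\cI(G)$ the induced subgraph $G[I\sd J]$ is bipartite, with bipartition $(I\sm J,\,J\sm I)$, and so by the defining property of $\cC_p$ admits a path decomposition of width at most $p$. Fix such a decomposition canonically as a function of $(I,J)$, with bags $B_1,\ldots,B_r$ of size at most $p+1$, and construct the canonical path $\gamma_{IJ}$ by sweeping through the bags in order: at the start of stage $t$ the current independent set agrees with $J$ on the vertices processed so far and with $I$ on the rest, and stage $t$ performs a short canonical sequence of single-vertex Glauber moves confined to $B_t$---first removing the vertices of $I\sm J$ in $B_t$ in some canonical order, then adding the vertices of $J\sm I$ in $B_t$---so as to advance to the analogous state at stage $t+1$. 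The cover-edge property of the path decomposition forces every edge $uw\subseteq I\sd J$ to appear together in some bag, and combined with the removal-then-addition phasing this guarantees that every intermediate configuration is independent. The overall length of $\gamma_{IJ}$ is at most $|I\sd J|\le 2\alpha(G)$.

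To bound the congestion, fix a transition $(X,Y)$ with $X\sd Y=\{v\}$ and suppose $\gamma_{IJ}$ uses $(X,Y)$; then $v$ lies in some active bag $B_t$, and outside $B_t$ the pair $(I,J)$ is determined by $(X,Y)$ together with the stage marker recording which vertices have already been flipped. The active bag itself costs a factor of $O(n)$ to identify, and the restrictions $I\cap B_t$ and $J\cap B_t$ have at most $p+1$ elements each; after converting the $O(2^{2(p+1)})$ bit choices into an $n^p$ weight factor against the $\pi$-weight ratio, they contribute a further $n^p$ to the encoding. The discrepancy between $\pi(I)\pi(J)$ and $\pi(X)\pi(Y)$ comes out to a factor $\lambda^p$ from the up to $p$ ``hidden'' vertices inside the active bag, and the term $1+\max(\lambda,1/\lambda)$ absorbs both the $\lambda$-dependent Glauber acceptance probability at $(X,Y)$ and the parity of $|I|+|J|$ relative to $|X|+|Y|$. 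Combining these ingredients with the path-length bound $|\gamma_{IJ}|\le 2\alpha(G)$ in the standard canonical-paths lemma yields the stated bound, with the leading constant $2e$ coming from the refined flow-version of that lemma.

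The main technical obstacle I anticipate is the combinatorial bookkeeping for the encoding: one must pin down the canonical choice of path decomposition from $(I,J)$, the canonical ordering of moves within each bag, and the exact side-information, so that the decoder given $(X,Y)$ together with an $O(n^{p+1})$-size label recovers $(I,J)$ uniquely and the $\pi$-weight ratio really is $\lambda^p$ per bag slot. A secondary but genuine subtlety is confirming that all intermediate sets along $\gamma_{IJ}$ are independent; this depends on the path-decomposition property that both endpoints of any edge of $G[I\sd J]$ share a bag, so that the two-phase moves inside that bag rule out a conflict. Once this bookkeeping is in place, the statement of the theorem follows by assembling the ingredients above.
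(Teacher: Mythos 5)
Your overall route is the same as the paper's: canonical paths for the Glauber dynamics guided by a path decomposition of the bipartite graph $G[I\sd J]$, a complementary encoding plus a constant number of ``remembered'' vertices in the active bag, congestion of order $n^{p+1}\lambda^p\bigl(1+\max(\lambda,1/\lambda)\bigr)$, path length at most $2\alpha(G)$, and the standard eigenvalue-to-mixing-time step with $P_G(\lambda)\le e(n\lambda)^{\alpha}$. However, two points that you either assert or defer are the actual substance of the proof, and one of them is wrong as stated. First, the path construction: processing bag $B_t$ by ``remove the $I\sm J$ vertices in $B_t$, then add the $J\sm I$ vertices in $B_t$'' does not keep intermediate sets independent. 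An edge $uv$ of $G[I\sd J]$ with $v\in J\sm I$, $v\in B_t$ is only guaranteed to be covered by \emph{some} bag, which may come later; if $u\in I\sm J$ appears only in later bags, it has not yet been deleted when you insert $v$ at stage $t$, and the edge-cover property plus within-bag phasing does not rescue this. The fix (used in the paper) is asymmetric: delete each $I$-vertex when its \emph{first} bag is processed, but insert each $J$-vertex only when its \emph{last} bag is processed; with that rule every conflicting $u$ shares an earlier-or-equal bag with $v$ and has already been removed.

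Second, the congestion bound is exactly where the work lies, and your proposal leaves it as ``bookkeeping''. To sum over the side information against $\pi$ you need the encoding $W$ to be itself an independent set satisfying $|I|+|J|=|Z|+|W|+|R|$ with the remembered set $R$ confined to the active bag; the naive complementary encoding ($Z\sd W=I\sd J$, $Z\cap W=I\cap J$) is generally \emph{not} independent inside the active bag, which is why the paper's construction needs its preprocessing step (temporarily removing $B_i\cap B_{i+1}\cap W$ from the encoding into $R$) and postprocessing, together with a good path decomposition so that $|B_i\cap B_{i+1}|\le p$. Moreover, bags have size $p+1$, so ``up to $p$ hidden vertices'' is not automatic: the paper needs a separate argument (its Lemma~4.1) that $|R|\le p$ except in the single case $R=B_i$, where the extra vertex is the one being inserted by the transition and hence free to the decoder; without this case analysis your count degrades to $n^{p+2}\lambda^{p+1}$ and the stated bound is not reached. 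Finally, unique reconstruction of $(I,J)$ outside the active bag is not ``determined by $(X,Y)$ together with the stage marker'': it requires the encoding, the component-by-component and bag-by-bag processing order, and connectivity of the active component to resolve the $X/Y$ sides inside $B_i$ (the paper's Lemma~4.2). These are missing ideas rather than routine details, so as it stands the proposal does not yet constitute a proof.
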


The plan of the paper is as follows. In Section~\ref{s:markov}, we define the necessary Markov chain
background and define the Glauber dynamics.  In Section~\ref{s:bippathwidth} we develop the concept of
bipartite pathwidth, and use it in Section~\ref{s:ispaths} to determine canonical paths for independent sets.
In Sections~\ref{s:claw-free} and~\ref{s:classes} we introduce some graph classes which have bounded bipartite pathwidth.  These classes, like the class of claw-free graphs, are defined by excluded induced subgraphs.

\subsection{Preliminaries}\label{ss:notation}

Let $\IN$ be the set of positive integers. We write $[m] = \{1,2,\ldots, m\}$ for any $m\in \IN$,
For any integers $a\geq b\geq 0$, write $(a)_b = a(a-1)\cdots (a-b+1)$
for the \emph{falling factorial}. Let $A\sd B$ denote the symmetric difference of sets $A,B$.

For graph theoretic definitions not given here, see~\cite{survey,diestel}.
{Throughout this paper, all graphs are simple and undirected.}
The term ``induced subgraph'' will mean a vertex-induced subgraph,
and the subgraph of $G=(V,E)$ induced by the set $S$ will be denoted by $G[S]$.
The \emph{neighbourhood} $\{u\in V: uv\in E\}$ of $v\in V$ will be denoted by $\Nb(v)$,
and we will denote $\Nb(v)\cup\{v\}$ by $\Nb[v]$.
If two graphs $G,H$ are isomorphic, we will write $G\cong H$. The \emph{complement} of a graph $G=(V,E)$ is the graph $\overline{G}=(V,\overline{E})$, where $\overline{E}=\{uv: u,v\in V,\ uv\notin E\}$.

Given a graph $G=(V,E)$, 
let $\cI_k(G)$ be the set of independent sets of $G$ of size $k$.
%
The \emph{independence polynomial} of $G$ is the \emph{partition function}

\begin{equation}
\label{independence}
P_G(\lambda)=\sum_{I\in\cI(G)}\lambda^{|I|}=\sum_{k=0}^{\alpha(G)} N_k\, \lambda^k,
\end{equation}
where $N_k=|\cI_k(G)|$ for $k=0,\ldots, \alpha$.
Here $\lambda\in \mathbb{C}$ is called the  \emph{fugacity}.
In this paper, we consider only nonnegative real $\lambda$,
We have $N_0=1$, $N_1=n$ and $N_k\leq\binom{n}{k}$ for $k=2,\ldots, n$. Thus
it follows that for any $\lambda\geq 0$,
\begin{equation}\label{eq:Pbounds}
1+n\lambda\leq P_G(\lambda)\leq \sum_{k=0}^{\alpha(G)} \binom{n}{k}\, \lambda^k\leq (1+\lambda)^n.
\end{equation}
Note also that $P_G(0)=1$ and $P_G(1)=|\cI(G)|$.

An \emph{almost uniform sampler} for a probability distribution $\pi$ on a state $\Omega$ is a
randomised algorithm which takes as input a real number $\delta > 0$ and
outputs a sample from a distribution $\mu$ such that
the \emph{total variation distance} $\frac12\sum_{x\in\Omega}|\mu(x)-\pi(x)|$ is at most $\delta$.
The sampler is a \emph{fully polynomial almost uniform sampler (FPAUS)} if its running
time is polynomial in the input size $n$ and $\log(1/\delta)$. The word ``uniform'' here is historical,
as it was first used in the case where $\pi$ is the uniform distribution. We use it in a more general setting.

If $w:\Omega\to\mathbb{R}$ is a \emph{weight function}, then the \emph{Gibbs distribution} $\pi$ satisfies
$\pi(x)=w(x)/W$ for all $x\in\Omega$,
where $W=\sum_{x\in\Omega}w(x)$. If $w(x)=1$ for all $x\in\Omega$ then $\pi$ is uniform.
For independent sets with $w(I)=\lambda^{|I|}$, the Gibbs distribution satisfies
\begin{equation}
\label{Gibbs}
 \pi(I) = \lambda^{|I|}/P_G(\lambda),
\end{equation}
and is often called the hardcore distribution. Jerrum, Valiant and Vazirani~\cite{JVV}
showed that approximating $W$ is equivalent to the existence of an FPAUS for $\pi$, provided the problem is
 \emph{self-reducible}. Counting independent sets in a graph is a self-reducible problem.

We need only consider $\lambda\geq e/n$ (see Section~\ref{ss:glauber} below). If $\lambda<e/n$ then $1\leq P_G(\lambda)< e^e < 16$, using~\eqref{eq:Pbounds}.
In this case, it suffices to sample independent sets of size at most
$k=O(1)$ according to~\eqref{Gibbs}, as larger independent sets will have
negligible stationary probability.
This can be done in
$O(n^k)$ time by enumerating all independent sets of size at most~$k$. This is polynomial for constant $k$, though
counting is \#W[1]-hard viewed as a fixed parameter problem, even for line graphs~\cite{Curt}. We omit the details here.

Therefore we can assume from now on that $\lambda \geq e/n$.
Under this assumption, \eqref{eq:Pbounds} can be tightened to
\begin{equation}\label{eq:Pbounds1}
P_G(\lambda)\,\leq\, \sum_{k=0}^\alpha \binom{n}{k}\, \lambda^k\,\leq\, \sum_{k=0}^\alpha\frac{(n\lambda)^k}{k!}
\,\leq\, (n\lambda)^\alpha\sum_{k=0}^\alpha\frac{1}{k!}\,\leq \,e(n\lambda)^\alpha.
\end{equation}
%

\section{Markov chains}\label{s:markov}

For additional information on Markov chains and approximate counting,
see for example~\cite{jerrumbook}.  In this section we provide some
necessary definitions and then define a
simple Markov chain on
the set of independent sets in a graph.

\subsection{Mixing time}\label{s:mixing}

Consider a Markov chain on state space $\Omega$ with stationary distribution
$\pi$ and transition matrix $\tprob$.
Let $p_n$ be the distribution of the chain after $n$ steps.
We will assume that $p_0$ is the distribution which assigns probability~1 to
a fixed initial state $x\in\Omega$.
The \emph{mixing time} of the Markov chain, from initial state $x\in\Omega$, is
\[
\tau_x(\eps)= \min\{n\, :\, \dtv(p_n,\pi)\leq\eps\},
\]
where
\[ \dtv(p_n,\pi) = \frac12\sum_{Z\in\Omega}|p_n(Z)-\pi(Z)|\]
is the \emph{total variation distance} between $p_n$ and $\pi$.

In the case of the Glauber dynamics for independent sets, the stationary
distribution $\pi$ satisfies (\ref{Gibbs}), and in particular
$\pi(\es)^{-1}=P_G(\lambda)$. We will always use $\es$ as our starting state, as
it is an independent set in every graph.

Let $\beta_{\max} = \max\{\beta_1,|\beta_{|\Omega|-1}|\}$, where $\beta_1$ is the second-largest
eigenvalue of $\tprob$ and $\beta_{|\Omega|-1}$ is the smallest eigenvalue of $\tprob$.
It follows from~\cite[Proposition 3]{DS91} that
\[ \tau_{x}(\eps) \leq (1-\beta_{\max})^{-1}\, \left(\ln(\pi(x)^{-1}) + \ln(1/\eps)\right),
\]
see also~\cite[Proposition~1(i)]{sinclair}.
Therefore, if $\lambda \geq 1/n$ then
\begin{equation}
\label{both-eigvals}
 \tau_{\es}(\eps) \leq (1-\beta_{\max})^{-1}\, \left(\alpha(G)\ln(n\lambda)) + 1 +  \ln(1/\eps)\right),
\end{equation}
using (\ref{eq:Pbounds1}).

We can easily prove that $(1+\beta_{|\Omega|-1})^{-1}$
is bounded above by a $\min\{\lambda, n\}$, see (\ref{smallest-eigval}) below.
It is more difficult to bound the \emph{relaxation time}
$(1-\beta_1)^{-1}$. We use the
canonical paths method, which we now describe, for this task.

\subsection{Canonical paths method}\label{ss:canonical}

To bound the mixing time of our Markov chain
we will apply the \emph{canonical paths} method of Jerrum and Sinclair~\cite{JS}.
This may be summarised as follows.

Let the problem size be $n$ (in our setting, $n$ is the number of vertices
in the graph $G$ and $|\cI(G)|\leq 2^n$).
For each pair of states $X, Y\in\Omega$ we must define a path
$\gamma_{XY}$ from $X$ to $Y$,
\[ X=Z_0\to Z_2\to\cdots\to Z_\ell=Y\]
such that successive pairs along the path are given by a transition of
the Markov chain.  Write $\ell_{XY}=\ell$ for the length of the path
 $\gamma_{XY}$, and let $\lmax=\max_{X,Y}\ell_{XY}$.
We require $\lmax$ to be at most polynomial in $n$. This is usually easy to
achieve, but the set of paths $\{\gamma_{XY}\}$ must also satisfy
the following, more demanding property.

For any transition $(Z,Z')$  of the chain there must exist an
\emph{encoding} $W$, such that, given $(Z,Z')$ and $W$,
there are at most $\nu$ distinct possibilities for $X$ and $Y$
such that $(Z,Z')\in \gamma_{XY}$.   That is, each transition of the
chain can lie on at most $\nu\, |\Omega^\ast|$ canonical paths,
where $\Omega^\ast$ is some set which contains all possible encodings.
We usually require $\nu$ to be polynomial in $n$.
It is common to refer to the additional information provided by $\nu$ as
``guesses'', and we will do so here.
In our situation, all encodings will be independent sets, so we may
assume that $\Omega^\ast=\Omega=\cI(G)$.
{Furthermore, independent sets are weighted by $\lambda$, so we will
need to perform a weighted sum over our ``guesses''.  See the proof of
Theorem~\ref{thm:main-Cp} in Section~\ref{s:ispaths}.}

The \emph{congestion}~$\varrho$ of the chosen set of paths is given by
\begin{equation}\label{eq:congestion}
\varrho=\max_{(Z,Z')}\bigg\{\frac{1}{\pi(Z)\tprob(Z,Z')}
\sum_{X,Y:\gamma_{XY}\ni(Z,Z')}\pi(X)\pi(Y)\,\bigg\},
\end{equation}
where the maximum is taken over all pairs $(Z,Z')$
with $\tprob(Z,Z')>0$ and $Z'\neq Z$ (that is, over all transitions
of the chain), and the sum is over all
paths containing the transition $(Z,Z')$.

A bound on the relaxation time $(1-\beta_1)^{-1}$ will follow from a bound on congestion, using
Sinclair's result~\cite[Cor.~6]{sinclair}:
\begin{equation}
\label{eq:trel}
(1-\beta_1)^{-1} \leq \lmax\, \varrho.
\end{equation}

\subsection{Glauber dynamics}\label{ss:glauber}

The Markov chain we employ will be the \emph{Glauber dynamics} on state
space $\Omega=\cI(G)$.
In fact, we will consider a weighted version of this chain, for a given
value of the fugacity (also called activity) $\lambda>0$.
Define $\pi(Z) = \lambda^{|Z|}/P_G(\lambda)$ for all $Z\in \cI(G)$,
where $P_G(\lambda)$ is the independence polynomial defined in (\ref{independence}).
A transition from $Z\in\cI(G)$ to $Z'\in\cI(G)$ will be as follows.
Choose a vertex $v$ of $G$ uniformly at random.
\begin{itemize}
\item If $v\in Z$ then $Z'\gets Z\sm\{v\}$ with probability $1/(1+\lambda)$.
\item If $v\notin Z$ and $Z\cup\{v\}\in\cI(G)$ then  $Z'\gets Z\cup\{v\}$
   with probability $\lambda/(1+\lambda)$.
\item Otherwise $Z'\gets Z$.
\end{itemize}
This Markov chain is irreducible and aperiodic,
and satisfies the detailed balance equations
\[ \pi(Z)\, \tprob(Z,Z')=\pi(Z')\, \tprob(Z',Z)\]
for all $Z,Z'\in \cI(G)$.
Therefore, the Gibbs distribution $\pi$ is the
stationary distribution of the chain.
Indeed, if $Z'$ is obtained from $Z$ by deleting a vertex $v$ then
\begin{equation} \tprob(Z,Z')=\frac{1}{n(1+\lambda)} \quad \text{ and } \quad
   \tprob(Z',Z)=\frac{\lambda}{n(1+\lambda)}.
\label{transitions}
\end{equation}

Very recently, Chen et al.~\cite{clv} proved that the Glauber dynamics converges in
time $O(n^{2+32/\delta})$ whenever $\lambda \leq (1-\delta) \lambda_c$,
where $\Delta$ is the maximum vertex degree and
$\lambda_c = (\Delta-1)^{\Delta-1}/(\Delta-2)^\Delta$ is the tree uniqueness
threshold. 
Now, by an easy calculation, $\lambda<e/n$ implies that $\lambda<\lambda_c$.  
Therefore, as claimed in Section~\ref{ss:notation},
we may assume that $\lambda\geq e/n$. 

The unweighted version is given by setting $\lambda=1$, and has
uniform stationary distribution. Since the analysis for
general $\lambda$ is hardly any more complicated than that for
$\lambda=1$, we will work with the weighted case.

It follows from the transition procedure that $\tprob(Z,Z)\geq \min\{1,\lambda\}/(1+\lambda)$
for all states $Z\in\cI(G)$.  That is, every state has a self-loop probability of at least
this value.  Using a result of Diaconis and Saloff-Coste~\cite[p.~702]{DS93}, we
conclude that the smallest eigenvalue $\beta_{|\cI(G)|-1}$ of $\tprob$ satisfies
\begin{equation}
\label{smallest-eigval}
 (1+\beta_{|\cI(G)|-1})^{-1} \leq \frac{1+\lambda}{2\min\{1,\lambda\}}
 \leq \min\{\lambda, n\}.
\end{equation}
This bound will be dominated by our bound on the relaxation time.
As mentioned earlier, we always use the initial state $Z_0=\es$.

In order to bound the relaxation time
$(1-\beta_1)^{-1}$ we will use the canonical path method.
A key observation is that for any $X,Y\in\cI(G)$, the induced subgraph
$G[X\sd Y]$ of $G$ is bipartite. This can easily be seen by colouring
vertices in $X\sm Y$ black and vertices in $Y\sm X$ white, and
observing that no edge in $G$ can connect vertices of the same colour.
To exploit this observation, we introduce the \emph{bipartite pathwidth}
of a graph in Section~\ref{s:bippathwidth}.
In Section~\ref{s:ispaths} we show how to use the bipartite pathwidth
to construct canonical paths for independent sets, and analyse the
congestion of this set of paths to prove our main result, Theorem~\ref{thm:main-Cp}.

\section{Pathwidth and bipartite pathwidth}\label{s:bippathwidth}

The \emph{pathwidth} of a graph was defined by Robertson and
Seymour~\cite{RS}, and has proved a very useful notion in graph
theory. See, for example,~\cite{HLB:arb,diestel}.
A \emph{path decomposition} of a graph $G=(V,E)$ is a sequence
$\cB = (B_1, B_2, \dots, B_r)$ of subsets of $V$ such that
\begin{enumerate}[itemsep=0pt,topsep=5pt,label=(\roman*)]
\item \label{pw1} for every $v \in V$ there is some $i \in [r]$ such that $v \in B_i$,
\item \label{pw2} for every $e \in E$ there is some $i \in [r]$ such that $e \subseteq B_i$, and
\item \label{pw3} for every $v \in V$ the set $\{i \in [r] \, :\,  v \in B_i\}$ forms an interval in $[r]$.
\end{enumerate}
The \emph{width} and \emph{length} of this path decomposition $\cB$ are
\begin{equation*}
  w(\cB)    = \max\{|B_i| \, :\,  i \in [r]\} - 1, \qquad
  \ell(\cB) = r
\end{equation*}
and the \emph{pathwidth} $\pw(G)$ of a given graph $G$ is
\[ \pw(G) = \min_{\cB} w(\cB) \]
where the minimum taken over all path decompositions $\cB$ of $G$.

Condition \ref{pw3} is equivalent to $B_i \cap B_k \subseteq B_j$ for
all $i$, $j$ and $k$ with $1 \le i \le j \le k \le r$.
If we refer to a bag with index $i \notin [r]$ then by default $B_i = \es$.

\begin{figure}[htbp]
  \newcommand{\II}[1]{\rule[-0.45ex]{0pt}{2.05ex}#1\rule[-0.45ex]{0pt}{2.05ex}}
  \centering
  \begin{tikzpicture}[xscale=1.2,bend angle=45]
    \foreach[count=\x] \n/\c in {a/B, c/R, e/B, g/R, i/B} \node[\c] (\n) at (\x,2) {\II{\n}};
    \foreach[count=\x] \n/\c in {b/R, d/B, f/R, h/B, j/R} \node[\c] (\n) at (\x,0.9) {\II{\n}};
    \draw (d)--(b)--(a)--(c)--(d)--(f)--(e)--(g)--(h)--(j)--(i)--(g)  (c)--(e)  (f)--(h);
    \draw (a) edge[bend left] (g)  (d) edge[bend right] (j);
  \end{tikzpicture}
  \caption{A bipartite graph}
  \label{fig:a-j}
\end{figure}

For example, the bipartite graph $G$ in Fig.~\ref{fig:a-j} has
a path decomposition with the following bags:

\newcommand{\ivset}[4]{\{\mathrm{#1}, \mathrm{#2}, \mathrm{#3}, \mathrm{#4}\}}
\begin{equation}
\label{bags}
\begin{aligned}
  B_1 &= {\ivset abdg} & B_2 &= \ivset acdg & B_3 &= \ivset cdge & B_4 &= \ivset defg \\
  B_5 &= \ivset dfgj & B_6 &= \ivset fghj & B_7 &= \ivset ghij
\end{aligned}
\end{equation}
This path decomposition has length~7 and width~3, so $\pw(G)\leq 3$.

If $P$ is a path, $C$ is a cycle and $K_{a,b}$ is a
complete bipartite graph, then it is easy to show that
\begin{equation}\label{eq:pw}
\pw(P) = 1,\qquad \pw(C) = 2,\qquad \pw(K_{a,b}) = \min\{a,b\} \,.
\end{equation}

It is well-known that the clique number of a graph, minus 1, is a lower bound
for the pathwidth.  (Indeed, this follows from the corresponding result
about treewidth.)  In particular, the complete graph $K_n$ satisfies
\begin{equation}\label{clique}
\pw(K_n) \geq n - 1.
\end{equation}
(For an upper bound, take a single bag which contains all $n$ vertices.)

The following result will be useful for bounding the pathwidth. The
first statement is~\cite[Lemma~11]{HLB:pw}, while the second appears,
without proof, in~\cite[equation~(1.5)]{RS}. We give a proof of both
statements, for completeness.

\begin{lemma} \label{l:pw is monotone}
  Let $H$ be a subgraph of a graph $G$ (not necessarily an induced subgraph).
Then $\pw(H) \le \pw(G)$.
  Furthermore, if $W \subseteq V(G)$ then $\pw(G) \le \pw(G \sm W) + |W|$.
\end{lemma}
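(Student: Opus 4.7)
The plan is to prove both inequalities by taking an optimal path decomposition and modifying it in an obvious way, then verifying the three axioms~\ref{pw1}--\ref{pw3}.

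For the first inequality $\pw(H) \le \pw(G)$, I would start with a path decomposition $\cB = (B_1, \ldots, B_r)$ of $G$ of width $\pw(G)$ and set $B_i' = B_i \cap V(H)$ for each $i \in [r]$. Axiom~\ref{pw1} and the interval property~\ref{pw3} transfer directly, since intersecting each bag with $V(H)$ neither creates nor destroys intervals. For~\ref{pw2}, every edge $e$ of $H$ is also an edge of $G$ (as $H$ is a subgraph) and hence lies in some $B_i$; moreover both endpoints belong to $V(H)$, so $e \subseteq B_i'$. The width does not grow because $|B_i'| \le |B_i|$.

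For the second inequality $\pw(G) \le \pw(G \sm W) + |W|$, I would go the other way: take a path decomposition $(B_1', \ldots, B_r')$ of $G \sm W$ of width $\pw(G \sm W)$ and set $B_i = B_i' \cup W$ for each $i$. Axiom~\ref{pw1} is immediate because every vertex of $W$ now appears in every bag, and every vertex outside $W$ was already covered. For~\ref{pw3}, vertices in $W$ have index set $[r]$, an interval, while the other vertices keep their original intervals from the decomposition of $G \sm W$. For~\ref{pw2}, an edge $uv$ of $G$ with both endpoints outside $W$ is an edge of $G \sm W$ (interpreted as the induced subgraph on $V(G) \sm W$), so it already lies in some $B_i' \subseteq B_i$; if at least one endpoint of $uv$ is in $W$, then that endpoint lies in every bag and the other endpoint is either also in $W$ (in every bag) or in some $B_i'$. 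Each new bag has size at most $|B_i'| + |W|$, so the width rises by at most $|W|$.

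I do not expect a genuine obstacle here; the modifications are forced on us. The one point that deserves care is the convention that $G \sm W$ denotes the vertex-induced subgraph on $V(G) \sm W$, which is what makes the edge-coverage argument work in the second part.
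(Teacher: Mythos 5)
Your proposal is correct and follows essentially the same route as the paper's proof: intersect the bags of an optimal decomposition of $G$ with $V(H)$ for the first inequality, and adjoin $W$ to every bag of an optimal decomposition of $G \sm W$ for the second, checking conditions \ref{pw1}--\ref{pw3} in each case. The only difference is that you verify the axioms in more detail than the paper does, which is harmless.
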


\begin{proof}
  Let $G=(V,E)$ and $H=(U,F)$. Since $H$ is a subgraph of $G$ we have
  $U \subseteq V$ and $F \subseteq E$. Let $(B_i)_{i=1}^r$ be a path
  decomposition of width $\pw(G)$ for $G$. Then $(B_i \cap U)_{i=1}^r$
  is a path decomposition of $H$, and its width is at most $\pw(G)$.

  Now given $W\subseteq V(G)$, let $U=V\setminus W$ and consider the
induced subgraph $H=G[U]$.
  We show) that $\pw(G) \le \pw(H) + |W|$, as follows. Let $(A_i)_{i=1}^s$ be a
  path decomposition of width $\pw(H)$ for $H$. Then $(A_i \cup W)_{i=1}^s$
  is a path decomposition of $G$, and its width is $\pw(H)+|W|$.
This concludes the proof, as $H=G\setminus W$.
\end{proof}
Another helpful property of pathwidth is that, if $H$ is a \emph{minor} of $G$, then $\pw(H)\leq\pw(G)$ \cite[Lem.~16]{HLB:pw}.
Here $H$ is minor of $G$ if it can be obtained from $G$ by deleting vertices
and edges and contracting edges.  We can use this fact
to determine the pathwidth of the graph $G$ in Fig.~\ref{fig:a-j}.
Contracting edges $ac$, $ab$, $bd$, $gi$, $ij$, $jh$, and deleting parallel edges, results in $H\cong K_4$.
Then $\pw(H)=3$, by (\ref{clique}).
So $\pw(G)\geq \pw(H) = 3$, but the path decomposition given in (\ref{bags})
shows that
$\pw(G)\leq 3$, and therefore $\pw(G)= 3$.

\subsection{Bipartite pathwidth}\label{ss:bipw}

We now define the \emph{bipartite pathwidth} $\bpw(G)$ of a graph $G$ to be
the maximum pathwidth of an induced subgraph of $G$ that is bipartite.
For any positive integer $p \ge 2$, let $\cC_p$ be the class of graphs
of bipartite pathwidth at most $p$.
Lemma~\ref{lem:claw} below implies that claw-free graphs are contained in
$\cC_2$, for example. Note that $\cC_p$ is a hereditary class,
by Lemma~\ref{l:pw is monotone}.

Clearly $\bpw(G)\leq\pw(G)$, but the bipartite pathwidth of $G$ may be much
smaller than its pathwidth. For example, consider the complete graph $K_n$.
From (\ref{clique}) we have $\pw(K_n)=n-1$, but the largest induced bipartite
subgraphs of $K_n$ are its edges, which are all isomorphic to $K_2$.
Thus the bipartite pathwidth of $K_n$ is $\pw(K_2)=1$.

A more general example is the class of \emph{unit interval graphs}.
These may have cliques of arbitrary size, and hence arbitrary pathwidth.
However they are claw-free, so their induced bipartite subgraphs are
linear forests (forests of paths), and hence by (\ref{eq:pw}) they have bipartite pathwidth at most~$1$.
The even more general \emph{interval graphs} do not
contain a tripod (depicted in Figure~\ref{fig:tas}), so their bipartite
subgraphs are forests of caterpillars, and hence they have bipartite pathwidth
at most~$2$.

We also note the following.
\begin{lemma}
\label{lem:notes}
Let $p$ be a positive integer.
\vspace*{-0.5\baselineskip}
\begin{itemize}
  \item[\emph{(i)}] Every graph with at most $2p+1$ vertices belongs to $\cC_p$.
  \item[\emph{(ii)}] No element of $\cC_p$ can contain $K_{p+1,p+1}$ as an induced subgraph.
\end{itemize}
\vspace*{-0.5\baselineskip}
\end{lemma}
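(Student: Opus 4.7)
The plan is to dispatch both parts using the formulas for pathwidth from \eqref{eq:pw} combined with monotonicity (Lemma~\ref{l:pw is monotone}), since bipartite pathwidth is defined as the maximum pathwidth over \emph{induced} bipartite subgraphs. Part (i) will follow from the observation that any bipartite graph on few vertices has a small side in its bipartition, and part (ii) will follow from the exact value $\pw(K_{p+1,p+1})=p+1$.

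For part (i), let $G$ be a graph with $|V(G)|\leq 2p+1$ and let $H$ be any induced bipartite subgraph of $G$, with bipartition $(A,B)$. Since $|V(H)|\leq 2p+1$, the smaller side satisfies $\min(|A|,|B|)\leq p$. Now $H$ is a subgraph of the complete bipartite graph $K_{|A|,|B|}$, so by Lemma~\ref{l:pw is monotone} and \eqref{eq:pw},
\[
\pw(H) \,\leq\, \pw(K_{|A|,|B|}) \,=\, \min(|A|,|B|)\,\leq\, p.
\]
Taking the maximum over all induced bipartite subgraphs $H$ of $G$ gives $\bpw(G)\leq p$, so $G\in\cC_p$.

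For part (ii), suppose for contradiction that $G\in\cC_p$ contains $K_{p+1,p+1}$ as an induced subgraph. Since $K_{p+1,p+1}$ is bipartite, by the definition of bipartite pathwidth we have
\[
\bpw(G) \,\geq\, \pw(K_{p+1,p+1}) \,=\, p+1,
\]
using \eqref{eq:pw}, which contradicts $G\in\cC_p$.

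Neither step is particularly delicate; the only mild subtlety is to remember that $\bpw$ is defined via \emph{induced} bipartite subgraphs, which is exactly what lets part (ii) go through and what forces us, in part (i), to note that every induced subgraph of $G$ also has at most $2p+1$ vertices so that the bipartition bound applies uniformly.
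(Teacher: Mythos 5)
Your proposal is correct and follows essentially the same argument as the paper: for (i) the paper also embeds an induced bipartite subgraph into a complete bipartite graph with one side of size at most $p$ and applies Lemma~\ref{l:pw is monotone} together with \eqref{eq:pw}, and for (ii) it likewise invokes $\pw(K_{p+1,p+1})=p+1$ directly from \eqref{eq:pw}.
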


\begin{proof}
  Suppose that $G$ has $n$ vertices, where $n\leq 2p+1$.
  Let $H=(X,Y,F)$ be a bipartite induced subgraph of $G$ such that $\bpw(G)=\pw(H)$.
  Since $n \le 2p+1$ we have $|X| \le p$ or $|Y| \le p$.
  That is, $H$ is a subgraph of $K_{p,n-p}$.
  By Lemma~\ref{l:pw is monotone} we have
  \[ \bpw(G)=\pw(H) \le \pw(K_{p,n-p}) \le p,\]
  proving (i).
  For (ii) suppose that a graph $G'$ contains $K_{p+1,p+1}$ as
  an induced subgraph.
  Then $\bpw(G')\geq\pw(K_{p+1,p+1}) = p+1$, from \eqref{eq:pw}.
\end{proof}

We say that a path decomposition $(B_i)_{i=1}^r$ is \emph{good} if, for all $i
\in [r-1]$, neither $B_i \subseteq B_{i+1}$ nor $B_i \supseteq
B_{i+1}$ holds. Every path decomposition of $G$ can be transformed
into a good one by leaving out any bag which is contained in another.

It will be useful to define a partial order on path decompositions.
Given a fixed linear order on the vertex set $V$ of a graph $G$,
we may extend $<$ to subsets of $V$ as follows:
if $A,B \subseteq V$ then $A < B$ if and only if
(a) $|A|< |B|$; or
(b) $|A|=|B|$ and the smallest element of $A \sd B$ belongs to $A$.
Next, given two path decompositions
$\cA = (A_j)_{j=1}^r$ and  $\cB = (B_j)_{j=1}^s$ of $G$, we say that
$\cA < \cB$ if and only if (a) $r<s$; or (b) $r=s$ and $A_j < B_j$,
where $j = \min\{ i \, : \, A_i \neq  B_i\}$.

\section{Canonical paths for independent sets}\label{s:ispaths}

We now construct canonical paths for the Glauber dynamics on independent sets
of graphs with bounded bipartite pathwidth.

Suppose that $G\in\cC_p$, so that $\bpw(G)\leq p$.
Take $X,Y\in \cI(G)$ and let $H_1,\ldots, H_t$ be the connected
components of $G[X\sd Y]$, ordered in lexicographical order.
As already observed, the graph $G[X\sd Y]$ is bipartite, so every
component $H_1,\ldots, H_t$ is connected and bipartite.
We will define a canonical path $\gamma_{XY}$ from $X$ to $Y$
by processing the components $H_1,\ldots, H_t$ in order.

Let $H_a$ be the component of $G[X\sd Y]$ which we are currently processing,
and suppose that after processing $H_1,\ldots, H_{a-1}$ we have a
partial canonical path
\[ X = Z_0,\ldots, Z_{N}.  \]
If $a=0$ then $Z_N = Z_0=X$.

The encoding $W_N$ for $Z_N$ is defined by
\begin{equation}
\label{perfect}
   Z_N\sd W_N = X\sd Y \quad \text{ and } \quad Z_N\cap W_N = X\cap Y.
\end{equation}
In particular, when $a=0$ we have $W_0=Y$.
{We remark that (\ref{perfect}) will not hold during the processing of
a component, but always holds immediately after the processing of a
component is complete.
}
Because we process components one-by-one, in order, and due to the definition of the
encoding $W_N$, we have
\begin{equation}
\label{processed}
\left.
\begin{aligned}
 Z_N\cap H_s = \begin{cases} Y\cap H_s & \text{ for $s=1,\ldots, a-1$ (processed),}\\
    X\cap H_s & \text{ for $s=a,\ldots, t$ (not processed),} \end{cases}\\
 W_N\cap H_s = \begin{cases} X\cap H_s & \text{ for $s=1,\ldots, a-1$ (processed),}\\
    Y\cap H_s & \text{ for $s=a,\ldots, t$ (not processed).}
\end{cases}
\end{aligned}
\,\, \right\}
\end{equation}

We now describe how to extend this partial
canonical path by processing the component $H_a$.
Let $h=|H_a|$.
While processing $H_a$ we will produce a sequence
\begin{equation}
\label{path}
   Z_{N},\, Z_{N+1},\ldots , Z_{N + h}
\end{equation}
of independent sets, and a corresponding sequence
\[ W_N,\, W_{N+1},\ldots, W_{N+h}
\]
of encodings.  We will ensure that
\[ Z_{\ell}\sd W_{\ell} \subseteq X\sd Y \quad \text{ and } \quad
   Z_{\ell}\cap W_{\ell} = X\cap Y\]
for $j=N,\ldots, N+h$.  
When we reach $Z_{N+h}$, the processing of $H_a$
is complete and (\ref{perfect}) holds with $N$ replaced by $N+h$.
Define the set of ``remembered vertices''
\[  R_{\ell} = (X\sd Y)\setminus (Z_{\ell}\sd W_{\ell})\]
for $\ell=N,\ldots, N+h$.  By definition,
the triple $(Z,W,R) = (Z_{\ell},W_{\ell},R_{\ell})$
satisfies
\begin{equation}
\label{remember}
  (Z\sd W)\cap R = \es \quad \text{ and } \quad (Z\sd W)\cup R = X\sd Y.
\end{equation}
This immediately implies that $|Z_\ell|+|W_\ell| + |R_\ell|=|X|+|Y|$
for $\ell=N,\ldots, N+h$.

We use a path decomposition of $H_a$ to guide our construction of the
canonical path.
Let $\cB=(B_1,\ldots, B_r)$ be the lexicographically-least good
path decomposition of $H_a$.   Here we use the ordering on path decompositions
defined at the end of Section~\ref{ss:bipw}.
Since $G\in \cC_p$, the maximum bag size in $\cB$ is
$d\leq p+1$. As usual, we assume that $B_0,B_{r+1}=\es$.

We process $H_a$ by processing the bags $B_1,\ldots, B_r$ in order.
Initially $R_N = \es$, by (\ref{perfect}).
Because we process the bags one-by-one, in order, if bag $B_i$ is currently being
processed and the
current independent set is $Z$ and the current encoding is $W$, we will ensure
that
\begin{equation}
\label{bag-processed}
\left.
\begin{aligned}
\big(X \cap (B_1\cup \cdots \cup B_{i-1})\big)\sm B_i  &= \big(W\cap (B_1\cup\cdots\cup B_{i-1})\big)\sm B_i ,\\
\big(Y \cap (B_1\cup \cdots \cup B_{i-1})\big)\sm B_i  &= \big(Z\cap (B_1\cup\cdots\cup B_{i-1})\big)\sm B_i ,\\
\big(X \cap (B_{i+1}\cup \cdots \cup B_r)\big)\sm B_{i} &= \big(Z\cap (B_{i+1}\cup\cdots\cup B_r)\big)\sm B_{i},\\
\big(Y \cap (B_{i+1}\cup \cdots \cup B_r)\big)\sm B_{i} &= \big(W\cap (B_{i+1}\cup\cdots\cup B_r)\big)\sm B_{i}.\\
\end{aligned}
\,\,\,\right\}
\end{equation}

It remains to describe how to process the bag $B_i$, for $i=1,\ldots, r$.
First we give a high-level overview, and then a more detailed description.

Let $Z_{\ell}$, $W_{\ell}$, $R_{\ell}$
 denote the current independent set, encoding and set of remembered vertices,
immediately after the processing of bag $B_{i-1}$.
So $\ell = N + j$ for some $j\geq 0$, where the processing of bags $B_1,\ldots, B_{i-1}$
required $j$ steps.
From $Z_{\ell}$, we would like to delete vertices of 
$B_i$ which do not belong to $Y$, 
one by one, and then insert vertices of $B_i$ which belong to $Y$, one by one, to make 
$B_i$ agree with $Y$ by the end
of its processing.   When we delete a vertex from the current independent set
we would like to insert it into the corresponding encoding, and vice-versa.
However, we also want to ensure that the current encoding is an independent set 
at each step. 
Two problems can occur:
\begin{itemize}
\item Sometimes we cannot insert a vertex $u$ into the current independent set,
because a neighbour $w$ of $u$ is present, which will be deleted when a later bag
is processed.  We will remember $u$ by adding it to the set of remembered vertices,
and insert it into the independent set when it is safe to do so.
\item Sometimes when a vertex $u$ is deleted from the current independent set,
we cannot immediately add it to the current encoding, because some neighbours of
$u$ are still present in the current encoding. These neighbours will be inserted
into the independent set (and deleted from the encoding) when a later bag is
processed.  We will remember $u$ by adding it to the set of remembered vertices,
and insert it into the encoding when it is safe to do so.
\end{itemize}
In both of these problem cases, vertex $u$ must also belong to $B_{i+1}$, 
This follows from the definition of a path decomposition,  
since $u$ has at least one neighbour outside $B_1\cup\cdots B_i$ which has not yet been
processed.
We handle these two types of remembered vertices in a ``preprocessing phase'' and 
``postprocessing phase'' which occur before and after the deletion and insertion
steps, respectively. 

Now we give a more detailed description.
We write
\[ R_\ell = R_{\ell}^+ \cup R_{\ell}^-\]
where vertices in $R_{\ell}^+$ are added to $R_\ell$ during the preprocessing
phase (and must eventually be inserted into the current independent set),
and vertices in $R_{\ell}^-$ are added to $R_\ell$ due to a deletion step
(and will go into the encoding during the postprocessing phase).
When $i=0$ we have $\ell = N$ and in particular, $R_N = R_N^+ = R_N^- =\es$.
\begin{enumerate}[itemsep=5pt,topsep=5pt,label=(\arabic*)]
\item \emph{Preprocessing}:\, \textit{We ``forget'' the vertices of $B_i\cap B_{i+1}\cap W_{\ell}$ and add them to $R_{\ell}^+$. \\
    This does not change the current independent set or add to the canonical path.} \\
  \begin{minipage}{\textwidth}
    \begin{tabbing}
       \textbf{for} \=\kill
      \textbf{begin} \\
      \>$R_{\ell}^+ \gets R_{\ell}^+ \cup (B_i\cap B_{i+1}\cap W_{\ell})$; \\
      \>$W_{\ell}   \gets W_{\ell} \sm (B_i\cap B_{i+1})$; \\
      \textbf{end}
    \end{tabbing}
  \end{minipage}
\item \emph{Deletion steps}: \\
  \begin{minipage}{\textwidth}
    \begin{tabbing}
      \textbf{for} \=each $u\in B_i\cap Z_{\ell}$, in lexicographical order,
      \textbf{do}\\
        \>$Z_{\ell+1}\gets Z_\ell\sm \{ u\}$;\\
        \>\textbf{if} $u\not\in B_{i+1}$ \=\textbf{then}
          \=$W_{\ell+1}\gets W_\ell \cup \{u\}$;
          \=$R^-_{\ell+1}\gets R^-_{\ell}$; \\
        \>\>\textbf{else}
          \>$W_{\ell+1}\gets W_\ell$;
          \>$R^-_{\ell+1}\gets R^-_\ell\cup\{u\}$; \\
        \>\textbf{end if} \\
        \> $\ell\gets\ell+1$; \\
      \textbf{end do}
    \end{tabbing}
  \end{minipage}

\item \emph{Insertion steps}: \\
  \begin{minipage}{\textwidth}
    \begin{tabbing}
      \textbf{for} \=each $u\in \big(B_i\cap (W_{\ell}\cup R_{\ell}^+)\big)
      \sm B_{i+1}$, in lexicographic order, \textbf{do} \\
        \>$Z_{\ell+1}\gets Z_\ell\cup\{u\}$;\\
        \>\textbf{if} $u\in W_{\ell}$ \=\textbf{then}
          \=$W_{\ell+1}\gets W_\ell \sm \{u\}$;
          \=$R^+_{\ell+1}\gets R^+_{\ell}$; \\
        \>\>\textbf{else}
          \>$W_{\ell+1}\gets W_\ell$;
          \>$R^+_{\ell+1}\gets R^+_\ell\cup\{u\}$; \\
        \>\textbf{end if} \\
     \>$\ell\gets\ell+1$; \\
     \textbf{end do}
   \end{tabbing}
  \end{minipage}

\item \emph{Postprocessing:}
    \textit{Any elements of $R^-_{\ell+1}$ which do not belong to $B_{i+1}$ can now be safely added to $W_{\ell}$.
    This does not change the current independent set or add to the canonical path.} \\
  \begin{minipage}{\textwidth}
    \begin{tabbing}
      \textbf{for} \=\kill
      \textbf{begin} \\
      \>$W_{\ell}   \gets W_{\ell} \cup (R_{\ell}^- \sm B_{i+1})$; \\
      \>$R_{\ell}^- \gets R_{\ell}^- \cap B_{i+1}$; \\
      \textbf{end}
    \end{tabbing}
  \end{minipage}
\end{enumerate}
By construction, vertices added to $R_{\ell}^+$ are removed from
$W_{\ell}$, so the ``else'' case for insertion is precisely $u\in R_{\ell}^+$.

Observe that both $Z_\ell$ and $W_\ell$ are independent sets at every step.
This is true initially (when $\ell=N$) and remains true by construction.
Indeed, the preprocessing phases removes all vertices
of $B_i\cap B_{i+1}$ from $W_\ell$, which makes more room for other vertices
to be inserted into the encoding later.  A deletion step shrinks the current
independent set and adds the removed vertex into $W_\ell$ or $R_\ell^-$.
A deleted vertex is only added to $R_\ell^-$ if it belongs to $B_i\cap B_{i+1}$,
and so might have a neighbour in $W_{\ell}$.
Finally, in the insertion steps we add vertices from
$\big(B_i\cap (W_\ell\cup R^+_{\ell})\big)\setminus B_{i+1}$
to $Z_\ell$, now that we have made room.  Here $B_i$ is the last bag
which contains the vertex being inserted into the independent set, so any
neighbour of this vertex in $X$ has already been deleted from the current independent
set.  This phase can only shrink the encoding $W_\ell$.

Also observe that (\ref{remember}) holds for $(Z,W,R) = (Z_\ell,W_\ell,R_\ell)$
at every point.  Finally, by construction we have $R_\ell\subseteq B_i$ at
all times.

To give an example of the canonical path construction, we return to the
bipartite graph shown in Figure~\ref{fig:a-j}, which we now treat as the
symmetric difference of two independent sets.
Let $X = \{\mathrm{a},\mathrm{d},\mathrm{e},\mathrm{h},\mathrm{i}\}$ be the set of vertices which are shown as blue diamonds in 
Figure~\ref{fig:a-j} and let $Y = \{\mathrm{b},\mathrm{c},\mathrm{f},\mathrm{g}, \mathrm{j}\}$ be the remaining vertices,
shown as red squares with rounded corners
in Figure~\ref{fig:a-j}.
Table~\ref{tab:ex} illustrates the 10~steps of the canonical path (3 steps to process
bag $B_1$, none to process bag $B_2$, 2 steps to process bag $B_3$, and so on).
In Table~\ref{tab:ex}, blue (diamond) vertices belong to the current independent set $Z$
and red (square) vertices belong to the current encoding $W$.  We only show the vertices
of the bag $B_i$ which is currently being processed, as we can use
(\ref{bag-processed}) for all other vertices.  The white vertices (drawn as circles) are precisely
those which belong to $R$, where elements of $R_{\ell}^-$ are marked ``$-$''.
The column headed ``pre/post processing''  shows the situation directly
\emph{after} the preprocessing phase. Then on the line below, the situation
directly after the postprocessing phase is shown \emph{unless} there is
no change during preprocessing.
During preprocessing and postprocessing, the current independent set does not change, and so these
phases do not contribute to the canonical path.

After processing the last bag, all vertices of $X$ are red (belong to
the final encoding $W$) and all vertices of $Y$ are blue (belong to the final
independent set $Z$), as expected.

\begin{table}[ht!]
  \tikzset{bend angle=45}
  \newcommand{\rb}[1]{\raisebox{2.0mm}{#1}}
  \newcommand{\II}[1]{\rule[-0.45ex]{0pt}{2.05ex}#1\rule[-0.45ex]{0pt}{2.05ex}}
  \centering
  \renewcommand{\arraystretch}{3.0}
  \begin{tabular}{r@{\hspace{3mm}}c@{\hspace{6mm}}c@{\hspace{6mm}}c@{\hspace{6mm}}c}
   $B_i$ &  pre/post processing  & after 1st step & after 2nd step & after 3rd step\\ \hline
     \rb{$B_1$} &
    \begin{tikzpicture}[scale=0.9]
      \foreach[count=\x] \n/\c in {d/B, b/R, a/B, g/U} \node[\c] (\n) at (\x,1) {\II{\n}};
      \draw (d)--(b) -- (a) -- (g);
    \end{tikzpicture} &
    \begin{tikzpicture}[scale=0.9]
      \foreach[count=\x] \n/\c in {d/B, b/R, a/U, g/U} \node[\c] (\n) at (\x,1) {\II{\n}};
      \draw (d)--(b) -- (a) -- (g);
\node[above] at (3,1.2) {$-$};
    \end{tikzpicture} &
    \begin{tikzpicture}[scale=0.9]
      \foreach[count=\x] \n/\c in {d/U, b/R, a/U, g/U} \node[\c] (\n) at (\x,1) {\II{\n}};
      \draw (d)--(b)--(a)--(g);
\node[above] at (1,1.2) {$-$};
\node[above] at (3,1.2) {$-$};
    \end{tikzpicture} &
    \begin{tikzpicture}[scale=0.9]
      \foreach[count=\x] \n/\c in {d/U, b/B, a/U, g/U} \node[\c] (\n) at (\x,1) {\II{\n}};
      \draw (d)--(b)--(a)--(g);
\node[above] at (1,1.2) {$-$};
\node[above] at (3,1.2) {$-$};
    \end{tikzpicture} \\
\hline
    \rb{$B_2$} & \begin{tikzpicture}[scale=0.9]
      \foreach[count=\x] \n/\c in {d/U, c/U, a/U, g/U} \node[\c] (\n) at (\x,1) {\II{\n}};
      \draw (d)--(c)--(a)--(g);
\node[above] at (1,1.2) {$-$};
\node[above] at (3,1.2) {$-$};
    \end{tikzpicture} \\
     & \begin{tikzpicture}[scale=0.9]
      \foreach[count=\x] \n/\c in {d/U, c/U, a/R, g/U} \node[\c] (\n) at (\x,1) {\II{\n}};
      \draw (d)--(c)--(a)--(g);
\node[above] at (1,1.2) {$-$};
    \end{tikzpicture} \\
\hline
    \rb{$B_3$} & \begin{tikzpicture}[scale=0.9]
      \foreach[count=\x] \n/\c in {d/U, c/U, e/B, g/U} \node[\c] (\n) at (\x,1) {\II{\n}};
      \draw (d)--(c)--(e)--(g);
\node[above] at (1,1.2) {$-$};
    \end{tikzpicture} &
    \begin{tikzpicture}[scale=0.9]
      \foreach[count=\x] \n/\c in {d/U, c/U, e/U, g/U} \node[\c] (\n) at (\x,1) {\II{\n}};
      \draw (d)--(c)--(e)--(g);
\node[above] at (1,1.2) {$-$};
\node[above] at (3,1.2) {$-$};
    \end{tikzpicture} &
    \begin{tikzpicture}[scale=0.9]
      \foreach[count=\x] \n/\c in {d/U, c/B, e/U, g/U} \node[\c] (\n) at (\x,1) {\II{\n}};
      \draw (d)--(c)--(e)--(g);
\node[above] at (1,1.2) {$-$};
\node[above] at (3,1.2) {$-$};
    \end{tikzpicture}  \\
\hline
    \rb{$B_4$} & \begin{tikzpicture}[scale=0.9]
      \foreach[count=\x] \n/\c in {d/U, f/U, e/U, g/U} \node[\c] (\n) at (\x,1) {\II{\n}};
      \draw (d)--(f)--(e)--(g);
\node[above] at (1,1.2) {$-$};
\node[above] at (3,1.2) {$-$};
    \end{tikzpicture}  \\
     & \begin{tikzpicture}[scale=0.9]
      \foreach[count=\x] \n/\c in {d/U, f/U, e/R, g/U} \node[\c] (\n) at (\x,1) {\II{\n}};
      \draw (d)--(f)--(e)--(g);
\node[above] at (1,1.2) {$-$};
    \end{tikzpicture}  \\
\hline
    \rb{$B_5$} & \begin{tikzpicture}[scale=0.9]
      \foreach[count=\x] \n/\c in {f/U, d/U, j/U, g/U} \node[\c] (\n) at (\x,1) {\II{\n}};
      \draw (f)--(d)--(j) (g);
\node[above] at (2,1.2) {$-$};
    \end{tikzpicture}  \\
     & \begin{tikzpicture}[scale=0.9]
      \foreach[count=\x] \n/\c in {f/U, d/R, j/U, g/U} \node[\c] (\n) at (\x,1) {\II{\n}};
      \draw (f)--(d)--(j) (g);
    \end{tikzpicture}  \\
\hline
    \rb{$B_6$} & \begin{tikzpicture}[scale=0.9]
      \foreach[count=\x] \n/\c in {f/U, h/B, j/U, g/U} \node[\c] (\n) at (\x,1) {\II{\n}};
      \draw (f)--(h)--(j); \draw (h) edge[bend left] (g);
    \end{tikzpicture} &
    \begin{tikzpicture}[scale=0.9]
      \foreach[count=\x] \n/\c in {f/U, h/U, j/U, g/U} \node[\c] (\n) at (\x,1) {\II{\n}};
      \draw (f)--(h)--(j); \draw (h) edge[bend left] (g);
\node[above] at (2,1.2) {$-$};
    \end{tikzpicture} &
    \begin{tikzpicture}[scale=0.9]
      \foreach[count=\x] \n/\c in {f/B, h/U, j/U, g/U} \node[\c] (\n) at (\x,1) {\II{\n}};
      \draw (f)--(h)--(j); \draw (h) edge[bend left] (g);
\node[above] at (2,1.2) {$-$};
    \end{tikzpicture}  \\
\hline
    \rb{$B_7$} & \begin{tikzpicture}[scale=0.9]
      \foreach[count=\x] \n/\c in {h/U, j/U, i/B, g/U} \node[\c] (\n) at (\x,1) {\II{\n}};
      \draw (h)--(j)--(i)--(g) edge[bend right] (h);
\node[above] at (1,1.2) {$-$};
    \end{tikzpicture} &
    \begin{tikzpicture}[scale=0.9]
      \foreach[count=\x] \n/\c in {h/U, j/U, i/R, g/U} \node[\c] (\n) at (\x,1) {\II{\n}};
      \draw (h)--(j)--(i)--(g) edge[bend right] (h);
\node[above] at (1,1.2) {$-$};
    \end{tikzpicture} &
    \begin{tikzpicture}[scale=0.9]
      \foreach[count=\x] \n/\c in {h/U, j/U, i/R, g/B} \node[\c] (\n) at (\x,1) {\II{\n}};
      \draw (h)--(j)--(i)--(g) edge[bend right] (h);
\node[above] at (1,1.2) {$-$};
    \end{tikzpicture} &
    \begin{tikzpicture}[scale=0.9]
      \foreach[count=\x] \n/\c in {h/U, j/B, i/R, g/B} \node[\c] (\n) at (\x,1) {\II{\n}};
      \draw (h)--(j)--(i)--(g) edge[bend right] (h);
\node[above] at (1,1.2) {$-$};
    \end{tikzpicture}  \\
 &
    \begin{tikzpicture}[scale=0.9]
      \foreach[count=\x] \n/\c in {h/R, j/B, i/R, g/B} \node[\c] (\n) at (\x,1) {\II{\n}};
      \draw (h)--(j)--(i)--(g) edge[bend right] (h);
    \end{tikzpicture}  \\
\hline
  \end{tabular}
  \caption{The steps of the canonical path, processing each bag in order.}
  \label{tab:ex}
\end{table}

\subsection{Analysis of the canonical paths}\label{sec:analysis}

Each step of the canonical path changes the current independent set $Z_i$ by inserting or deleting
exactly one element of $X\sd Y$.
Every vertex of $X\setminus Y$ is removed
from the current independent set at some point, and is never re-inserted, while every vertex of $Y\setminus X$
is inserted into the current independent set once, and is never removed.
Vertices in $X\cap Y$ (respectively $(X\cup Y)^c$) are never altered, and belong to
all (respectively, none) of the independent sets in the canonical path.
Therefore
\begin{equation}
\lmax\leq 2\alpha(G).
\label{lmax-bound}
\end{equation}

Next we provide an upper bound for the number of vertices we need to remember at any particular step.

\begin{lemma}\label{l:claw}
At any transition $(Z,Z')$ which occurs during the processing of bag $B_i$,
the set $R$ of  remembered vertices
satisfies $R\subseteq B_i$, with $|R|\leq p$ unless
$Z\cap B_i = W\cap B_i = \es$.  In this case  $R=B_i$, which gives $|R|\leq p+1$,
and $Z' = Z\cup \{u\}$ for some $u\in B_i$.
\end{lemma}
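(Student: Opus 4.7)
The plan is to establish three facts in sequence: containment $R \subseteq B_i$, a partition identity for $B_i$, and a classification of the transition in the extremal case.

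First, I would verify $R \subseteq B_i$ at every moment during the processing of $B_i$, by tracking $R^+$ and $R^-$ separately. A vertex $u$ enters $R^+$ only during the preprocessing of some bag $B_j$ with $j\leq i$, where by construction $u \in B_j \cap B_{j+1}$, and $u$ leaves $R^+$ only in the first subsequent insertion step at a bag $B_k$ with $u \in B_k \setminus B_{k+1}$. By property~\ref{pw3} of the path decomposition, the indices of bags containing $u$ form an interval, so $u \in B_\ell$ for every $j \leq \ell \leq k$; in particular $u \in B_i$. For $R^-$, the postprocessing step of $B_{i-1}$ already restricts $R^-$ to $B_i$, and the additions during processing of $B_i$ are explicitly drawn from $B_i \cap B_{i+1} \subseteq B_i$.

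Second, I would show that the sets $Z \cap B_i$, $W \cap B_i$, and $R$ partition $B_i$. Since $B_i \subseteq H_a \subseteq X \sd Y$, the invariants $(Z \sd W) \cup R = X \sd Y$ and $(Z \sd W) \cap R = \es$ from \eqref{remember} force every vertex of $B_i$ to lie in exactly one of $Z \sd W$ or $R$. Because $Z \cap W \subseteq X \cap Y$ (so $Z \cap W$ is disjoint from $X \sd Y$ and hence from $B_i$), the three sets $Z \cap B_i$, $W \cap B_i$ and $R \cap B_i = R$ are pairwise disjoint and cover $B_i$. Counting cardinalities yields
\[ |R| \;=\; |B_i| - |Z \cap B_i| - |W \cap B_i| \;\leq\; |B_i| \;\leq\; p+1,\]
with $|R| \leq p$ unless both $Z \cap B_i$ and $W \cap B_i$ are empty, in which case $R = B_i$.

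Third, in the extremal case $Z \cap B_i = W \cap B_i = \es$, the deletion loop of step~(2) iterates over $B_i \cap Z = \es$ and is vacuous, so the transition $(Z, Z')$ cannot be a deletion. It must therefore come from the insertion loop~(3), which adds a single $u \in (B_i \cap (W \cup R^+)) \sm B_{i+1}$; since $W \cap B_i = \es$, such a $u$ lies in $R^+ \subseteq B_i$, yielding $Z' = Z \cup \{u\}$ with $u \in B_i$, as required. The only mild obstacle I anticipate is bookkeeping around preprocessing and postprocessing: the invariants from \eqref{perfect} and \eqref{remember} must be evaluated at the correct moment, namely immediately before each transition, rather than at some intermediate phase where $W$ or $R^+$ has been temporarily altered. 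Once this is handled cleanly, the three steps above combine to give the lemma directly.
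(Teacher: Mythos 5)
Your proposal is correct and takes essentially the same route as the paper's own proof: $R\subseteq B_i$ by construction, $R$ is disjoint from $Z\cup W$ inside $B_i$ by the invariant \eqref{remember} (together with $Z\cap W=X\cap Y$ being disjoint from $B_i$), so $|R|\le p$ unless $Z\cap B_i=W\cap B_i=\es$, in which case $R=B_i$ and the transition can only be an insertion of some $u\in B_i$ since there is nothing of $B_i$ left in $Z$ to delete. The only difference is that you spell out the bookkeeping (the interval property justifying $R^+\subseteq B_i$, and the partition count $|R|=|B_i|-|Z\cap B_i|-|W\cap B_i|$) which the paper compresses into ``by construction'' and ``by definition''.
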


\begin{proof}
By construction, the set $R$ of remembered vertices satisfies $R\subseteq B_i$
throughout the processing of bag $B_i$.  Hence $|R|\leq |B_i| \leq p+1$.
Now $\cB$ is a good path decomposition, and so $B_i\neq B_{i+1}$,
which implies that $|B_i\cap B_{i+1}|\leq p$.   Therefore, whenever
$R\subseteq B_i\cap B_{i+1}$ we have $|R|\leq p$.

Next suppose that $R=B_i$.  By definition, this means that $Z\cap B_i = W\cap B_i
= \es$, so the transition $(Z,Z')$ is an insertion step which inserts some vertex of
$u$.
\end{proof}

Now we establish the unique reconstruction property of the canonical paths,
given the encoding and set of remembered vertices.

\begin{lemma}
Given a transition $(Z,Z')$, the encoding $W$ of $Z$ and the set $R$
of remembered vertices, we can
uniquely reconstruct $(X,Y)$ with $(Z,Z')\in \gamma_{XY}$.
\label{unique}
\end{lemma}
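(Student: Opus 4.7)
The plan is to reconstruct $(X,Y)$ from $(Z,Z',W,R)$ in three stages: recover the sets $X\sd Y$ and $X\cap Y$; identify the component $H_a$ currently being processed; and assign each vertex of $X\sd Y$ to $X$ or to $Y$.

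Stage 1 follows directly from (\ref{remember}): $X\sd Y=(Z\sd W)\cup R$ by definition. For the intersection, any vertex of $X\cap Y$ lies outside every component of $G[X\sd Y]$, is therefore never touched by the algorithm, and so remains in both $Z_0=X$ and $W_0=Y$; hence $X\cap Y\subseteq Z\cap W$. Conversely, every vertex added to $Z$ or to $W$ during processing lies in $X\sd Y$, so $Z\cup W\subseteq X\cup Y$, and $(Z\cap W)\sm R$ is disjoint from $Z\sd W$ and from $R$, hence from $X\sd Y$, so its vertices must lie in $X\cap Y$. This gives $X\cap Y=(Z\cap W)\sm R$, and $V\sm(X\cup Y)$ is determined as the complement.

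Stage 2 exploits the transition itself: the unique vertex $v$ of $Z\sd Z'$ is the element being inserted or deleted, so $v$ lies in the bag $B_i$ currently being processed and hence in $H_a$. Since $X\sd Y$ is now known, the components $H_1,\ldots,H_t$ and their lexicographic order are determined from the graph alone, and the component containing $v$ must be $H_a$. For every $s\neq a$, (\ref{processed}) then reads off $X\cap H_s$ and $Y\cap H_s$ directly from $Z\cap H_s$ and $W\cap H_s$, with the roles swapped between processed ($s<a$) and unprocessed ($s>a$) components.

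Stage 3, handling $H_a$ itself, is the main obstacle. The goal is to show that a deletion transition removes an original $X$-vertex and an insertion transition inserts an original $Y$-vertex; granted this, the side of the bipartition of $H_a$ containing $v$ is identified, and connectedness together with bipartiteness of $H_a$ propagate the labelling to the other side, fixing $X\cap H_a$ and $Y\cap H_a$. I would prove the claim via the interval property (iii) of path decompositions: a $Y$-vertex $u$ inserted during processing of some bag $B_j$ satisfies $u\notin B_{j+1}$, so by (iii) it appears in no later bag and cannot subsequently be deleted; a deleted $X$-vertex either moves directly to $W$ or waits in $R^-$ until it exits the bag window and is flushed to $W$ by postprocessing, so it never re-enters $Z$. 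Consequently the candidates $Z\cap B_i$ for deletion are all $X$-vertices and the candidates $(B_i\cap(W\cup R^+))\sm B_{i+1}$ for insertion are all $Y$-vertices, completing the reconstruction.
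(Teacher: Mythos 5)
Your proposal is correct and follows essentially the same route as the paper's proof: recover $X\sd Y$ and $X\cap Y$ from (\ref{remember}), use the transition vertex to locate the component $H_a$ and apply (\ref{processed}) to the remaining components, and then orient $H_a$ by noting that a deletion removes a vertex of $X\sm Y$ and an insertion adds a vertex of $Y\sm X$, with bipartiteness and connectedness of $H_a$ propagating the labels. The only differences are cosmetic and harmless: you apply the unique $2$-colouring of $H_a$ globally instead of first fixing the vertices outside the current bag via (\ref{bag-processed}) as the paper does, and you spell out (via the interval property of the path decomposition) the insertion/deletion orientation claim that the paper asserts only implicitly by construction.
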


\begin{proof}
By construction, (\ref{remember}) holds.
This identifies all vertices in $X\cap Y$ and $(X\cup Y)^c$ uniquely.
It also identifies the connected components $H_1,\ldots, H_t$ of $X\sd Y$,
and it remains to decide, for all vertices in $\bigcup_{s=1}^t H_s$, whether they belong to $X$ or $Y$.

Next, the transition $(Z,Z')$ either inserts or deletes some vertex $u$.
This uniquely determines the connected component $H_a$ of $X\sd Y$ which contains $u$.
We can use (\ref{processed}) to identify
$X\cap H_s$ and $Y\cap H_s$ for all $s\neq a$.  It remains to decide which
vertices of $H_a$ belong to $X$ and which belong to $Y$.

Let $B_1,\ldots, B_r$ be the lexicographically-least good path
decomposition of $H_a$, which is well-defined.
If $Z'=Z\cup \{u\}$ (insertion) then $u\in Y\setminus X$ and we are processing the
last bag $B_i$ which contains $u$.  If $Z' = Z\setminus \{u\}$ then
$u\in X\setminus Y$ and we are processing the first bag $B_i$ which contains $u$.
Hence we can uniquely identify the bag $B_i$ which is currently being processed.
We know that bags $B_1,\ldots, B_{i-1}$ have already been processed, and bags
$B_{i+1},\ldots, B_r$ have not yet been processed.  So (\ref{bag-processed}) holds,
which uniquely determines $X$ and $Y$ outside $B_i\cap B_{i+1}$.

Finally, for every vertex $x\in B_i\setminus \{ u\}$, there is a path in $G$ from $x$
to $u$ (the vertex which was inserted or deleted in the given transition).
Since $G[H_a]$ is bipartite and connected, and we have decided for all vertices outside
$B_i\setminus \{u\}$ whether they belong to $X$ or $Y$, it follows that
 we can uniquely reconstruct all of $X\cap (B_i\setminus \{u\})$ and
$Y\cap (B_i\setminus \{u\})$.
This completes the proof.
\end{proof}

We are now able to prove our main theorem, which we restate below.

\mainthm*
\begin{proof}
For a given set $A$, let $\binom{A}{\leq p}$ denote the set of all subsets of $A$ with
at most $p$ elements.
Let $(Z,Z')$ be a given transition of the Glauber dynamics.
To bound the congestion of the transition $(Z,Z')$ we must
sum over all possible encodings $W$ and all possible sets $R$
of remembered vertices.  Here $R$ is disjoint from $Z\sd W$ and  in almost
all cases $|R|\leq p$, by Lemma~\ref{l:claw}. In the exceptional case we have
$|R|\leq p+1$ but we also know the identity of a vertex $u\in R$, since
$u$ is the vertex inserted in the transition $(Z,Z')$. Therefore in all cases,
we only need to ``guess'' (choose) at most $p$ vertices for $R$, from a subset
of at most $n$
vertices.

By Lemma~\ref{unique},
the choice of $(W,R)$ uniquely specifies a pair $(X,Y)$ of independent
sets with $(Z,Z')\in \gamma_{XY}$.  Therefore, using
the stationary distribution $\pi$ defined in (\ref{Gibbs}), and
the assumption that $\lambda\geq e/n$, we have
\begin{align*}
\sum_{X,Y:\gamma_{XY}\ni(Z,Z')}\pi(X)\pi(Y)&= \frac{1}{P_G(\lambda)^2}\, \sum_{X,Y:\gamma_{XY}\ni(Z,Z')}\lambda^{|X|+|Y|}\\
&\leq \frac{1}{P_G(\lambda)^2}\, \sum_{W\in\Omega} \,\,\,
                             \sum_{R\in \binom{V(G)\setminus (Z\sd W)}{\leq p}}\,  \lambda^{|Z|+|W| + |R|}\\
&\leq \frac{\lambda^{|Z|}}{P_G(\lambda)}\, \sum_{W\in\Omega} \frac{\lambda^{|W|}}{P_G(\lambda)}\,
    \, e(n\lambda)^p\\
&= e(n\lambda)^p\, \pi(Z)\,  \sum_{W\in\Omega}\pi(W)\\
&= e(n\lambda)^p\, \pi(Z).
\end{align*}
The second inequality follows as
\[ \sum_{k=0}^p \binom{n}{k}\lambda^k\,  < \,\sum_{k=0}^p \frac{n^k\lambda^k}{k!}\,
< \,(n\lambda)^p\sum_{k=0}^p  \frac{1}{k!} \, <\, e(n\lambda)^p \,,\]
since $p$ is a positive integer and $n\lambda\geq 1$.

Then \eqref{eq:congestion}  gives
\begin{align*}
  \varrho 
\leq 2(n\lambda)^p/\min_{(Z,Z')}\tprob(Z,Z')
 &= 2\, (n\lambda)^p\, n(1+\lambda)/\min\{1,\lambda\}\\
  &=\,
2 n^{p+1}\lambda^p\big(1+\max(\lambda,1/\lambda)\big)\,
\end{align*}
using the transition probabilities from (\ref{transitions}).
%
Combining this with (\ref{eq:trel}) and (\ref{lmax-bound}) gives
\begin{equation}
\label{beta1-bound}
(1-\beta_1)^{-1} \leq  2e\alpha(G)\, n^{p+1}\, \lambda^p
\end{equation}
and the result follows, by~\eqref{both-eigvals} and~\eqref{smallest-eigval}.
\end{proof}

\bigskip

The analysis presented above uses the bound $|R|\leq p+1$ for the
number of remembered vertices.
This bound might seem too large but unfortunately, it can be tight, as we now show.
For any integer $k \ge 1$ let $G=P_4^{(k)}$ be the bipartite graph obtained from a $P_4=
(a,b,c,d)$ by replacing its vertices by independent sets $A$, $B$, $C$, $D$
of size $k$, and its edges by complete bipartite graphs between these sets. See Fig.~\ref{fig:P44} for $P_4^{(4)}$.
\smallskip

\begin{figure}[htbp]
  \newcommand{\II}[1]{\rule[-0.45ex]{0pt}{2.05ex}#1\rule[-0.45ex]{0pt}{2.05ex}}
  \centering
  \begin{tikzpicture}[xscale=3,bend angle=45,scale=0.8]
    \foreach \x in {1,2,3,4} {
    \foreach \y in {1,2,3,4} { \draw (\x,\y) node[b] (\x\y) {};}}
    \foreach \x in {1,2,3,4} {
    \foreach \y in {1,2,3,4} { \draw (1\x)--(2\y) (2\x)--(3\y) (3\x)--(4\y);}}
  \end{tikzpicture}
  \caption{$P_4^{(4)}$}
  \label{fig:P44}
\end{figure}

We will first
show that $\pw(G) = 2k-1$, and that the intersection of
bags in any good path decomposition of $G$ has size $k$.

Now $\pw(G) \le 2k-1$ is shown by the
path decomposition $\Pi=(A \cup B, B \cup C, C \cup D)$.
Observe that $\Pi$  is a good decomposition with the property that the intersection
of each pair of consecutive bags has size~$k$.

To show that $\pw(G)\geq 2k-1$, contract the edges of a matching in $(A,B)$
and a matching in $(C,D)$, and delete parallel edges. This results in a graph
$H\cong K_{2k}$, and hence $\pw(G)\geq \pw(H)=2k-1$, and the path decomposition of $H$
has only one bag.

To show that $\Pi$ is the unique good decomposition, note that
this argument for $\pw(G)\geq 2k-1$ implies that $B\cup C$ must be contained in a bag,
and hence must form a bag, since $|B\cup C|=2k$.
Then it is clear that any path decomposition of $G$
with width $2k-1$ must have at least three bags. Otherwise we must have the bag
$A\cup D$. But then all edges in $G[A\cup B]$, $G[C\cup D]$ would not lie in any bag,
a contradiction. Hence $\Pi$ is the only good decomposition of $G$, and it
has the desired properties.

Observe that the bag $B\cup C$ has size $2k$, and has intersection size $k$ with the first, $A\cup B$, and the third, $C\cup D$. So,
suppose we are constructing a canonical path between the independent sets $A\cup C$ and $B\cup D$.  Then, in processing the bag $B\cup C$ as described above, we must remember all vertices in
$B$ until we have deleted all vertices in $C$. Thus we are required to remember all vertices in $B\cup C$; that is, $R=B\cup C$.

\subsection{Vertex weights}\label{ss:vertexweights}

Jerrum and Sinclair~\cite{JS} extended their algorithm for matchings to the case of positive edge weights,
provided these are uniformly bounded by some polynomial in $n$. Edge weights correspond to vertex weights
in line graphs, and line graphs are claw-free. So we might expect that our algorithm extends to graphs with
polynomially-bounded vertex weights. Here we show that this is the case, and we are able to use a somewhat
simpler approach than that used in~\cite{JS}.

To be specific, suppose that $G=(V,E)$ has real-valued \emph{vertex weights} $w(v)\geq 0$ ($v\in V$).
For our purposes, real weights can be approximated by rationals, which can be reduced to weights in $\IN$
by clearing denominators. Vertices of weight zero can be removed from the 
graph, so we will assume that $w(v)\in\IN$ for all $v\in V$.

The weight of an independent set $I\in\cI_k(G)$ is then defined to
be $w(I)=\prod_{v\in I} w(v)$. The total weight of independent sets of size $k$ in $G$ is given by
$W_k(G)=\sum_{I\in\cI_k}w(I)$. Note that, if $w(v) =1$ for all $v\in V$, the \emph{unweighted} case, then $W_k(G)=N_k(G)$.
Thus counting independent sets is a special case.

We say that a graph $G'$ with vertex weights $w'$ is \emph{equivalent} to a graph 
$G$ with vertex weights $w$ if $W_k(G')=W_k(G)$ for all nonnegative integers 
$k\leq \max\{|V(G)|,\, |V(G')|\}$.
In particular, if $G$ and $G'$ are equivalent then $\alpha(G) = \alpha(G')$.

\subsubsection{Equivalence to the unweighted case}\label{unweighted}

If $G=(V,E)$, vertices $u,v\in V$ are called
\emph{true twins} if $\Nb[u]=\Nb[v]$
and \emph{false twins} if $\Nb(u)=\Nb(v)$.

 So let $G=(V,E)$ be a graph with weights $w: V \to \IN$.
We construct an unweighted ``blown up'' version $G_w$ of $G$ by replacing
each vertex $v$ by a clique of size $w(v)$, and each edge by $vw$ by a complete
bipartite graph.
That is, $G_w = (V',E')$ with $V' = \{v_i : v \in V, i \in [w(v)]\}$ and
different vertices $u_i$ and $v_j$ form an edge in $E'$ if $u=v$ or
$uv \in E$, see Fig.~\ref{fig:blowup}.


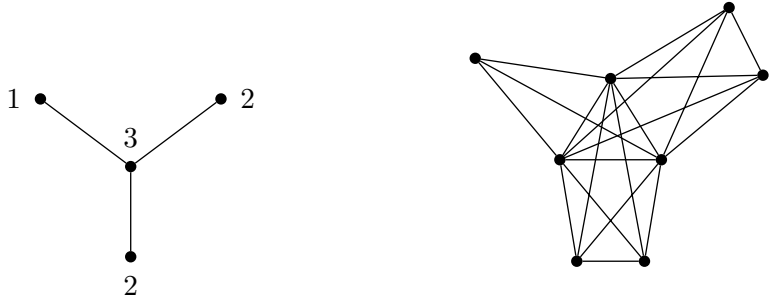
\begin{figure}
\begin{center}
\begin{tikzpicture}[line width=0.5pt,inner sep=5pt,scale=0.6,font=\small]
\node[b,label=below:2] (a1) at (0,0) {} ;
\node[b,label=right:2] (b1) at (2,3.5) {} ;
\node[b,label=above:3] (c1) at (0,2) {} ;
\node[b,label=left:1] (d1) at (-2,3.5) {} ;
\draw (b1)--(c1) ;
\draw (a1)--(c1) ;
\draw (d1)--(c1) ;
\end{tikzpicture}\hspace{1in}
\begin{tikzpicture}[line width=0.5pt,inner sep=5pt,scale=0.9,font=\small]
\node at (0,-.5) {} ;
\node[b] (a1) at (-0.25,0) {} ; \node[b] (a2) at (0.75,0) {} ;
\node[b] (b1) at (2,3.75) {} ; \node[b] (b2) at (2.5,2.75) {} ;
\node[b] (c1) at (-0.5,1.5) {} ; \node[b] (c2) at (1,1.5) {} ; \node[b] (c3) at (0.25,2.7) {} ;
\node[b] (d1) at (-1.75,3) {} ;
\draw (a1)--(a2) (b1)--(b2) (c1)--(c2)--(c3)--(c1) ;
\draw (b1)--(c1)--(b2) (b1)--(c2)--(b2) (b1)--(c3)--(b2) ;
\draw (a1)--(c1)--(a2) (a1)--(c2)--(a2) (a1)--(c3)--(a2) ;
\draw (d1)--(c1) (d1)--(c2) (d1)--(c3) ;
\end{tikzpicture}
\end{center}
\caption{expansion}\label{fig:blowup}
\end{figure}
It is clear that any independent set in $G_w$ contains, for each $v \in V$,
at most one vertex $v_i$, and two vertices $u_i,v_j$ only if $uv\notin E$.
Thus every independent set $S$ in $G$ corresponds to exactly $w(S)$
independent sets in $G_w$. Therefore $N_k(G_w)=W_k(G)$ for all $k\in\IN$.
So the unweighted graph $G_w$ is equivalent to $G$ with weights~$w$.

The transformation from $(G,w)$ to $G_w$ has the following useful property.
\begin{lemma}\label{lem:subgraph}
  Let $\cF$ be a set of graphs, none of which contains true twins.
  Given weights $w:V\to\IN$, the graph $G=(V,E)$  is $\cF$-free if and  only
  if $G_w$ is $\cF$-free.
\end{lemma}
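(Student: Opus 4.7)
The plan is to prove each direction by contrapositive, showing that $G$ contains some $F\in\cF$ as an induced subgraph if and only if $G_w$ does.

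For the easier direction, I would observe that the map $v\mapsto v_1$ embeds $G$ into $G_w$ as an induced subgraph: for distinct $u,v\in V$, the vertices $u_1$ and $v_1$ are adjacent in $G_w$ precisely when $uv\in E$. Thus any induced copy of $F$ in $G$ lifts to an induced copy of $F$ in $G_w$ supported on $\{v_1:v\in V(F)\}$. This direction does not require the hypothesis on $\cF$.

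The reverse direction is where the no-true-twins hypothesis is used, and this is the step I expect to be the main obstacle. Suppose $G_w[S]\cong F$ for some $S\subseteq V'$ and some $F\in\cF$. The key observation is that any two distinct vertices $v_i,v_j$ lying in the same blown-up clique are true twins in $G_w$: their closed neighbourhoods both equal $\{v_l:l\in[w(v)]\}\cup\{u_l:u\in\Nb_G(v),\,l\in[w(u)]\}$. Since true-twinship is preserved when passing to an induced subgraph containing both vertices, the presence of two such $v_i,v_j$ in $S$ would force them to be true twins in $F$, contradicting the assumption on $\cF$. Therefore $S$ meets each blown-up clique in at most one vertex, so I can write $S=\{v^{(1)}_{i_1},\ldots,v^{(k)}_{i_k}\}$ for distinct $v^{(1)},\ldots,v^{(k)}\in V$.

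To finish, I would use the fact that for vertices from different cliques adjacency in $G_w$ exactly mirrors adjacency in $G$: so $T=\{v^{(1)},\ldots,v^{(k)}\}\subseteq V$ satisfies $G[T]\cong G_w[S]\cong F$, placing $F$ as an induced subgraph of $G$. The only delicate point in the whole argument is the true-twin claim together with its preservation under taking induced subgraphs, so that is the step I would write out most carefully; the rest is bookkeeping about the construction of $G_w$.
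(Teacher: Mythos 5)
Your proposal is correct and follows essentially the same route as the paper's proof: the forward direction via the embedding $v\mapsto v_1$, and the converse by noting that two vertices $v_i,v_j$ from the same blown-up clique have equal closed neighbourhoods in $G_w$, hence would be true twins in the induced copy of $F$, forcing the copy to meet each clique at most once and project back to an induced copy in $G$. Your extra remark that true-twinship persists in induced subgraphs containing both vertices is exactly the (implicit) step the paper relies on, so there is nothing to add.
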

\begin{proof}
  First let $H \in \cF$ be a graph and $U \subseteq V$ such that $G[U]\cong H$.
  If $U_1 = \{u_1 : u \in U\}$, then clearly $G_w[U_1]\cong G[U]\cong H$.

  Now let $U' \subseteq V'$ be such that $H'=G_w[U']\cong H$ for
  $H \in \cF$. Suppose that $v_i,v_j\in V[H']$
  for some $v\in V$ and $i,j\in [w(v)]$.
  By the construction of $G_w$, $\Nb[v_1]=\Nb[v_2]$, so $v_i,v_j$
  are true twins in $H'$, contradicting $H'\cong H$.
  It follows that, for each $v \in V$, there is at most one index
  $i \in [w(v)]$ such that $v_i \in U'$. Therefore the set
  $U= \{v\in V: v_i\in U',i \in [w(v)]\}$
  induces a subgraph $G[U]\cong H$.
\end{proof}

We say that a graph class $\cC$ is \emph{expandable} if $G\in \cC$
implies $G_w\in\cC$ for all weight functions $w\in\IN^n$.
Thus Lemma \ref{lem:subgraph} gives a sufficient condition for $\cC$ to be expandable.
We now show that this sufficient condition is also necessary.
\begin{lemma}\label{lem:2-module}
  Let $\cC$ be a hereditary class of graphs and let $\cF$ be the set of
  minimal forbidden subgraphs of $\cC$. For every graph $H \in \cF$ that
  contains true twins, there exists a graph $G =(V,E)\in \cC$
  and weights $w:V\to\IN$ such that $G_w$ is isomorphic to $H$,
  and hence $G_w \notin \cC$.
\end{lemma}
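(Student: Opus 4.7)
The plan is to invert the blow-up construction. Since $G_w$ replaces each vertex of $G$ by a clique of true twins, if a forbidden graph $H$ already contains true twins, the natural $G$ is the ``twin-quotient'' of $H$. Specifically, define an equivalence relation $\sim$ on $V(H)$ by $u \sim v$ iff $\Nb[u]=\Nb[v]$, and let $C_1,\ldots,C_k$ be the equivalence classes. Since $v\in\Nb[u]=\Nb[v]$ forces $uv\in E(H)$, each class $C_i$ induces a clique; moreover, true-twinness is preserved across classes in the strong sense that, for $i\neq j$, either every vertex of $C_i$ is adjacent to every vertex of $C_j$, or no edges run between $C_i$ and $C_j$.

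Pick a representative $v_i\in C_i$ for each class, let $V=\{v_1,\ldots,v_k\}$, and take $G=H[V]$ (the induced subgraph of $H$ on the representatives). Define $w:V\to\IN$ by $w(v_i)=|C_i|$. I would then verify $G_w\cong H$ by sending the $a$-th copy $(v_i)_a$ of $v_i$ in $G_w$ to the $a$-th vertex of $C_i$ (under any fixed labeling): within a class the target vertices form a clique in $H$, matching the clique in $G_w$ introduced by the blow-up; between classes, the previous paragraph shows that adjacency is ``all or nothing'' in $H$ in exactly the way that $G_w$ connects the blow-ups of $v_i$ and $v_j$ according to whether $v_iv_j\in E(G)$.

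Finally, since $H\in\cF$ contains true twins by hypothesis, at least one class $C_i$ has size $\geq 2$, so $|V|<|V(H)|$, and $G$ is a proper induced subgraph of $H$. By minimality of $H$ as a forbidden subgraph for $\cC$, every proper induced subgraph of $H$ lies in $\cC$, and hence $G\in\cC$. On the other hand $G_w\cong H\notin\cC$, which is the conclusion.

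The main obstacle I anticipate is not deep but bookkeeping: one must check carefully that the twin relation $\sim$ really does partition $V(H)$ into cliques with uniform inter-class adjacency, so that the construction of $G_w$ from the representative graph reproduces $H$ on the nose (not just up to some extra or missing edges). Once this ``all-or-nothing'' adjacency between classes is established from $\Nb[u]=\Nb[v]$, the isomorphism $G_w\cong H$ is immediate, and minimality of $H$ handles membership of $G$ in $\cC$.
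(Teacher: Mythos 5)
Your proof is correct, and the key facts you flag as needing care do hold: for true twins defined by closed neighbourhoods, each equivalence class induces a clique, and adjacency between two classes is all-or-nothing, so the blow-up of the representative graph with $w(v_i)=|C_i|$ reproduces $H$ exactly; minimality of $H\in\cF$ then gives $G\in\cC$ since $G$ is a proper induced subgraph. The paper's own proof follows the same underlying idea (invert the blow-up and invoke minimality) but in a leaner form: it picks a single pair of true twins $u_1,u_2$, sets $G=H\sm u_2$ with $w(u_1)=2$ and $w(v)=1$ otherwise, so that $G_w\cong H$ is immediate and no analysis of the full twin partition is needed. Your full twin-quotient construction does more than necessary — it produces a possibly much smaller $G$ and requires verifying the uniform inter-class adjacency — but it is a valid and slightly more structural route to the same conclusion; the one-vertex deletion suffices because the lemma only asks for \emph{some} $G\in\cC$ and weights with $G_w\cong H$.
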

\begin{proof}
  Let $H=(U,F)$ and let $u_1$ and $u_2$ be true twins in $H$.
  We define $G=H \sm u_2$, $w(u_1)=2$
  and $w(v)=1$ for all $v \in G\sm u_1$. By minimality of $H$ we must
  have $G \in \cC$, and the weights are fixed such that $G_w \cong H$.
\end{proof}
Our main application here is to the class of graphs $\cC_p$ with bipartite pathwidth at most $p\geq 1$. The set $\cF$ of minimal excluded subgraphs for $\cC_p$ is a subset of the set of all connected bipartite graphs with pathwidth at least $p+1$. The only connected bipartite graph that contains true twins
is a single edge, which has pathwidth 1, so is not in $\cF$. Thus $\cC_p$ is expandable. It follows that all our results for $\cC_p$ carry
over to polynomially-bounded vertex weights. Since we show in Lemma~\ref{lem:C_2} that the claw-free graphs
considered in Section~\ref{s:claw-free} are a subclass of $\cC_2$, these are expandable. This can also be seen directly from Lemma~\ref{lem:subgraph}.

Many other hereditary graph classes are characterized by a set $\cF$ of forbidden
induced subgraphs that meets the condition of Lemma~\ref{lem:subgraph}.
Among them are fork-free graphs and \fast graphs considered
in Section~\ref{s:classes}. Also chordal graphs and perfect graphs are expandable.

Chordal graphs are expandable because their excluded subgraphs are all cycles of size 4 or more, which have no true twins (though cycles of length 4 have two pairs of false twins). Similarly the excluded subgraphs for perfect graphs are odd cycles of length at least 5 and their complements. These odd cycles have no true or false twins, and it is easy to see that a graph $G$ has true twins $u,v$ if and only if $u,v$  are false twins in the complement $\overline{G}$. Thus there cannot be any true twins in the complements of the odd cycles.

Planar graphs are not expandable by Lemma~\ref{lem:2-module}:
The complete graph $K_5$ is a minimal excluded subgraph, and each
pair of vertices are true twins.

Triangle-free graphs are also not expandable, because their single excluded subgraph is the triangle, and  any pair of vertices of a triangle are true twins.
Hence bipartite graphs are not expandable, since the triangle is a minimal excluded subgraph.

Line graphs are also not expandable, because each of the graphs $\mathrm{G}_i$ ($i\in\{2,3,4,5,6,7\}$) in Beineke's list~\cite{beineke} of excluded subgraphs contains a pair of true twins. Thus the transformation $(G,w)\mapsto G_w$ was not available to Jerrum and Sinclair~\cite{JS} in their work on weighted matchings, and so they
had to re-prove their results for the weighted case.

\subsubsection{Application to sampling}
The use of the transformation  $(G,w)\mapsto G_w$  here is as follows.
If all the weights $w(v)$ are bounded by a small polynomial, say $w(v)=O(n^r)$ for all $v\in [n]$,
then we can use the transformation as a way to extend an algorithm for the unweighted case to an algorithm for the
weighted case. We run the unweighted algorithm on a graph of size $O(n^{r+1})$,
so the running time remains polynomial.

However, to carry out the Glauber dynamics, as we do in Section~\ref{ss:glauber}, it is unnecessary to construct $G_w$ explicitly. We can simulate the chain on $G_w$ using only $G$. Let $w_+(V)=\sum_{v\in V}w(v)$. Then, at each step of the dynamics, we choose vertex $v\in V$ with probability $w(v)/w_+(V)$. This corresponds to choosing a vertex of $G_w$ uniformly, as in Section~\ref{ss:glauber}. Then, if $Z$ is the current independent set in $G$,
\begin{itemize}[itemsep=0pt,topsep=0pt]
\item If $v\in Z$ then $Z'\gets Z\sm\{v\}$ with probability $1/(2w(v))$. This corresponds to selecting the unique vertex in the independent set in the $w(v)$-clique in $G_w$ and deleting it with probability $\nicefrac12$.
\item If $v\notin Z$ and $Z\cup\{v\}\in\cI(G)$ then  $Z'\gets Z\cup\{v\}$ with probability $\nicefrac12$. This corresponds to selecting a random vertex in the currently unoccupied $w(v)$-clique in $G_w$ and inserting it with probability $\nicefrac12$.
\item Otherwise $Z'\gets Z$.
\end{itemize}
The analysis is identical to that leading to Theorem~\ref{thm:main-Cp}, except that $n$ must now be replaced by $w_+(V)$, the number of vertices in the expanded graph.

\subsection{Graphs with large complete bipartite subgraphs}\label{s:largebip}

In Lemma~\ref{lem:notes} we observed that if a graph
$G$ contains $K_{d,d}$ as an induced subgraph then its pathwidth is at
least $d$.  Thus our argument does not guarantee rapid mixing for
any graph $G$ which contains a large induced complete bipartite subgraph.
In this section we show that the absence of large induced complete bipartite
subgraphs appears to be a necessary condition for rapid mixing.

Suppose that the graph $G$ consists of $k$ disjoint induced copies of $K_{d,d}$. So
$n=2kd$.  The state space $\cI_G$ of independent sets in $G$ has
$|\cI_G|=2^{k+d}$.  The mixing time for the Glauber dynamics on $G$ is clearly
at least $k$ times the mixing time on $K_{d,d}$.

Now consider the $K_{d,d}$ with vertex bipartition $L\cup R$.
The state space $\cI_K$ of independent sets of $K_{d,d}$ comprises two
sets $\cI_L=\{I\in \cI_k \, :\,  I\cap R=\es\}$
and  $\cI_R=\{I\in \cI_k \, :\,  I\cap L=\es\}$.
Now $\cI_L\cap\cI_R = \{\es\}$, $|\cI_L|=|\cI_R|=2^d$, and
$|\cI_L\cap\cI_R|=1$. It follows that the conductance of the Glauber dynamics
on $G$ is $O(2^{-d})$, and so $\tau_\es(\eps)=\Omega(2^d\log(1/\eps))$.
(See~\cite{jerrumbook} for the definition of conductance.)
If $d=\omega\log_2 n$, where $\omega\to\infty$ as $n\to\infty$,  then
$\tau_\es(\eps)=\Omega(n^\omega\log(1/\eps))$, so the Glauber dynamics is not an FPAUS.

Note that, if $d=O(\log n)$ then the Glauber dynamics has quasipolynomial mixing
time, from Theorem~\ref{thm:main-Cp}, whereas our lower bound remains
polynomial. Our techniques are insufficient to distinguish between
polynomial and quasipolynomial mixing times.

These graphs are not connected, but the analysis is little changed if the
$K_{d,d}$ components are loosely connected, for example sequentially by a single edge.

Theorem~\ref{thm:main-Cp} shows that the Glauber dynamics for independent sets
is rapidly mixing for any graph $G$ in the class $\cC_p$, where $p$ is a fixed positive integer.
However, it is not clear a priori which graphs belong to $\cC_p$,  and the
complexity of recognising membership in the class $\cC_p$ is currently unknown,
though Mann and Mathieson~\cite{MM} report that this problem is W[1]-hard
in the worst case.

Therefore, in the remainder of this paper we consider
(hereditary) classes of graphs which are determined by small excluded subgraphs. These
classes clearly have polynomial time recognition, though we will not be concerned with
the efficiency of this. Note that, in view of Section~\ref{s:largebip}, we must always
explicitly exclude large complete bipartite subgraphs, where this is not already implied
by the other excluded subgraphs.

The three classes we will consider are nested. The third includes the second, which
includes the first. However, we will obtain better bounds for pathwidth in the smaller classes,
and hence better mixing time bounds in Theorem~\ref{thm:main-Cp}.  Therefore we consider them
separately.  The first of these classes, claw-free graphs,
was considered by Matthews~\cite{JM} and forms the motivation for this work.
Our results on claw-free graphs are found in Section~\ref{s:claw-free},
while the other two classes are described in Section~\ref{s:classes}.

%
%
%

\section{Claw-free graphs}\label{s:claw-free}

Claw-free graphs exclude the following induced subgraph, the \emph{claw}.
\begin{center}
  \begin{tikzpicture}[scale=0.35]
    \node[w] (v4) at (2,0) {};
    \node[w] (vv) at (4,1) {};
    \node[w] (v0) at (4,3) {};
    \node[w] (v2) at (6,0) {};
    \draw (v4)--(vv)--(v0)  (vv)--(v2);
  \end{tikzpicture}
\end{center}
Claw-free graphs are important because they are a simply characterised superclass of \emph{line graphs}~\cite{beineke},
and independent sets in line graphs are \emph{matchings}.

For claw-free graphs, the key observation is as follows.

\begin{lemma}\label{lem:claw}
Let $G$ be a claw-free graph with independent sets $X, Y\in\cI(G)$.
Then $G[X\sd Y]$ is a disjoint union of paths and cycles.
\end{lemma}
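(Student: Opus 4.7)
The plan is to show that every vertex of $G[X \sd Y]$ has degree at most $2$; since a graph with maximum degree at most $2$ is automatically a disjoint union of paths and cycles, this suffices.

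First I would recall the bipartite structure already noted in Section~\ref{ss:glauber}: because $X$ and $Y$ are independent sets, no edge of $G$ joins two vertices of $X \sm Y$, and no edge joins two vertices of $Y \sm X$. Hence every edge of $G[X\sd Y]$ goes between $X\sm Y$ and $Y\sm X$, and these two sides form the natural bipartition.

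Next, the main step. Take any vertex $v \in X \sd Y$, say $v \in X \sm Y$ (the other case is symmetric). All its neighbours in $G[X \sd Y]$ lie in $Y \sm X$. Suppose for contradiction that $v$ has three such neighbours $u_1, u_2, u_3 \in Y \sm X$. Since $Y\in\cI(G)$, the vertices $u_1, u_2, u_3$ are pairwise non-adjacent in $G$. Then the induced subgraph $G[\{v, u_1, u_2, u_3\}]$ is exactly a claw, contradicting the assumption that $G$ is claw-free. Therefore every vertex of $G[X \sd Y]$ has degree at most $2$.

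There is no real obstacle here: the argument is a one-line application of the claw-free hypothesis to the already bipartite graph $G[X\sd Y]$. The only thing to be careful about is noting that the three neighbours really do form an independent set (which is immediate from $Y$ being independent), so that together with $v$ they truly induce a claw rather than merely containing one as a subgraph.
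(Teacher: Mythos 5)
Your proof is correct and follows essentially the same route as the paper: both arguments show that every vertex of the bipartite graph $G[X\sd Y]$ has degree at most $2$ by observing that three neighbours of a vertex would, by claw-freeness, have to contain an edge, which is impossible since the neighbours all lie on one side of the bipartition (equivalently, in the independent set $Y\sm X$ or $X\sm Y$). The only cosmetic difference is that you invoke the independence of $Y$ directly where the paper invokes bipartiteness of $G[X\sd Y]$, which amounts to the same thing.
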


\begin{proof}
We know that $G[X\sd Y]$ is an induced bipartite subgraph of $G$.
Since $G$ is claw-free, any three neighbours of a given vertex must span at
least one triangle (3-cycle).  But this is impossible, since $G[X\sd Y]$
is bipartite. Hence every vertex in $G[X\sd Y]$
has degree at most 2, completing the proof.
\end{proof}

\begin{lemma}\label{lem:C_2}
  Claw-free graphs are a proper subclass of $\cC_2$.
\end{lemma}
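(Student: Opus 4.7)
The plan is to prove two things: (a)~every claw-free graph lies in $\cC_2$, and (b)~the containment is strict.

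For (a), I would strengthen the observation in Lemma~\ref{lem:claw}. Let $G$ be claw-free and let $H=G[U]$ be any induced bipartite subgraph of $G$. I claim every vertex of $H$ has degree at most $2$ in $H$. Indeed, if some $v\in U$ had three neighbours $a,b,c\in U$ in $H$, then because $H$ is bipartite these three vertices lie on the opposite side of the bipartition and so span no edges in $H$; since $H$ is an \emph{induced} subgraph of $G$, there are also no edges among $a,b,c$ in $G$. But then $\{v,a,b,c\}$ induces a claw in $G$, a contradiction. So $H$ has maximum degree at most $2$, i.e.\ $H$ is a disjoint union of paths and cycles. By~\eqref{eq:pw}, each such component has pathwidth at most $2$, and taking the union of the corresponding path decompositions (placed one after another) yields a path decomposition of $H$ of width at most $2$. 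Hence $\bpw(G)\leq 2$, so $G\in\cC_2$.

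For (b), I would exhibit a single graph in $\cC_2$ that is not claw-free. The simplest candidate is the claw itself, $K_{1,3}$. Every induced subgraph of $K_{1,3}$ is bipartite, and the whole graph is a subgraph of $K_{1,3}$, which has pathwidth $\min\{1,3\}=1$ by~\eqref{eq:pw}. By Lemma~\ref{l:pw is monotone}, every induced bipartite subgraph therefore has pathwidth at most $1$, so $\bpw(K_{1,3})=1\leq 2$, and $K_{1,3}\in\cC_2$. Of course, $K_{1,3}$ is not claw-free.

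I do not anticipate a real obstacle here. The only subtle point is that Lemma~\ref{lem:claw} as stated only concerns bipartite subgraphs of the form $G[X\sd Y]$, whereas $\bpw$ is defined via \emph{all} induced bipartite subgraphs; the claw-freeness argument above handles the general case uniformly, so this is a minor strengthening rather than a genuine difficulty.
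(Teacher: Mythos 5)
Your proof is correct and follows essentially the same route as the paper: claw-freeness forces every induced bipartite subgraph to have maximum degree two, hence to be a disjoint union of paths and cycles of pathwidth at most $2$ by \eqref{eq:pw}, and then a single non-claw-free graph of small bipartite pathwidth witnesses properness. The only differences are cosmetic: you re-derive the degree bound for arbitrary induced bipartite subgraphs instead of invoking Lemma~\ref{lem:claw} directly (a reasonable precaution, though any induced bipartite subgraph is of the form $G[A\sd B]$ for its two colour classes, which are independent sets, so the lemma applies as stated), and you use $K_{1,3}$ where the paper uses $K_{2,b}$ with $b\geq 3$.
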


\begin{proof}
  From Lemma~\ref{lem:claw}, the bipartite subgraphs of $G$ are path and cycles.
  From \eqref{eq:pw}, these have pathwidth at most 2. So $\bpw(G)\leq 2$.
  On the other hand, there are many bipartite graphs with pathwidth 2 which are not claw-free.
  For example $K_{2,b}$ has pathwidth~2, from \eqref{eq:pw}, but contains
  claws if $b\geq 3$.
\end{proof}
Since $G[X\sd Y]$ is a union of paths and even cycles, the
Jerrum--Sinclair~\cite{JS} canonical paths for matchings can be adapted, and
the Markov chain has polynomial mixing time. This was the idea employed
by Matthews~\cite{JM}. Theorem~\ref{thm:main-Cp} generalises his result to $\cC_p$
for any fixed integer $p\geq 3$.

However, claw-free graphs have more structure than an arbitrary graph in $\cC_2$,
and this structure was exploited for matchings in~\cite{JS}. Note that when $G$
is claw-free, we can compute the
size $\alpha(G)$ of the largest independent set in $G$  in polynomial time~\cite{Minty},
just as we can compute the size of the largest matching~\cite{edmonds}.

In Section~\ref{ss:sample-size} we strengthen and extend the results of~\cite{JS} to all claw-free graphs.
Our main extension is to show how to sample almost uniformly from
$\cI_k(G)$ more directly for arbitrary $k$. Jerrum and Sinclair's procedure is to estimate $|\cI_i(G)|$
successively for $i=1,2,\ldots,k$, which is extremely cumbersome. However, we should add
that their main objective is to estimate $|\cI_\alpha(G)|$, rather than to sample.

\subsection{Sampling independent sets of a given size in claw-free graphs}\label{ss:sample-size}

Hamidoune~\cite{hamidoune} proved that in a claw-free graph $G$, the numbers $N_i$ of
independent sets of size $i$ in $G$ forms a log-concave sequence.
Chudnovsky and Seymour~\cite{CS} strengthened this, by showing that all the roots of $P_G(x)=\sum_{i=0}^\alpha N_ix^i$ are real. If $\lambda>0$, let $M_i=\lambda^iN_i$  for $i=0,1,\ldots, \alpha$,
so $P_G(\lambda)=\sum_{i=0}^{\alpha}M_i$. Clearly the polynomial $P_G(\lambda x)=\sum_{i=0}^\alpha M_ix^i$ also has real roots, as does the polynomial $x^\alpha P_G(1/x)=\sum_{i=0}^\alpha N_{\alpha-i}\, x^i$.  Thus we can equivalently use the sequences $\{M_i\}$, $\{M_{\alpha-i}\}$.

Since $P_G(\lambda x)$ has only real roots, it follows
(see~\cite[Lemma~7.1.1]{Branden})
 that $\{ M_i/\binom{\alpha}{i}\} $ is a log-concave sequence, for a fixed value of
$\lambda$.
From this we have, for $1\leq i\leq \alpha-1$,
\begin{equation}
 \frac{M_{i-1}}{M_i}\ \leq\ \frac{i(\alpha-i)}{(i+1)(\alpha-i+1)}\frac{M_i}{M_{i+1}}\ \leq\ \frac{i}{i+1}\frac{M_i}{M_{i+1}}.\label{eq:real}
\end{equation}
We use this to strengthen an inequality deduced in~\cite{JS} for log-concave functions.
\begin{lemma}\label{logconcave}
For any $m\in [n]$ and $k\in [m]$,
\[ \frac{M_{m-k}}{M_m}\, \leq \, e^{-(k-1)^2/2m}\left(\frac{M_{m-1}}{M_m}\right)^k\,.\]
\end{lemma}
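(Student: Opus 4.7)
The plan is to exploit the log-concavity inequality \eqref{eq:real} in the form: the sequence $f(i) := M_{i-1}/M_i$ satisfies $f(i)/i \le f(i+1)/(i+1)$, i.e., $f(i)/i$ is non-decreasing in $i$. Iterating this from $i = m-j$ up to $i = m$ (all indices lie in the valid range $1 \le i \le \alpha - 1$, using $k \le m \le \alpha$) gives
\[ \frac{M_{m-j-1}}{M_{m-j}} \;\le\; \frac{m-j}{m}\cdot\frac{M_{m-1}}{M_m} \qquad (0 \le j \le k-1). \]

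Next I would write $M_{m-k}/M_m$ as a telescoping product
\[ \frac{M_{m-k}}{M_m} \;=\; \prod_{j=0}^{k-1} \frac{M_{m-j-1}}{M_{m-j}} \]
and apply the pointwise bound from the previous step to obtain
\[ \frac{M_{m-k}}{M_m} \;\le\; \left(\frac{M_{m-1}}{M_m}\right)^{k} \prod_{j=0}^{k-1} \left(1 - \frac{j}{m}\right). \]

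Finally I would convert the product on the right into the required exponential by taking logarithms and using $\ln(1-x) \le -x$ for $x \in [0,1)$:
\[ \sum_{j=0}^{k-1} \ln\!\left(1 - \frac{j}{m}\right) \;\le\; -\frac{1}{m}\sum_{j=0}^{k-1} j \;=\; -\frac{k(k-1)}{2m} \;\le\; -\frac{(k-1)^2}{2m}, \]
where the last step uses $k \ge 1$. Exponentiating yields the claimed bound.

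There is no serious obstacle: the whole argument is a two-line telescoping once \eqref{eq:real} is rephrased as monotonicity of $f(i)/i$. The only point requiring a modicum of care is verifying that the index range needed in the telescoping lies within the range $1 \le i \le \alpha - 1$ for which \eqref{eq:real} applies, which follows from the hypotheses $k \in [m]$ and (implicitly) $m \le \alpha$ — outside this range the statement is vacuous since $M_m = 0$. The slight slack between $k(k-1)$ and $(k-1)^2$ in the final inequality is harmless and matches the form of the bound stated in \cite{JS}.
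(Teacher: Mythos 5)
Your proof is correct and rests on exactly the same ingredients as the paper's: the weakened form of \eqref{eq:real}, the intermediate bound $\frac{M_{m-k}}{M_m}\le\frac{(m)_k}{m^k}\bigl(\frac{M_{m-1}}{M_m}\bigr)^k$ (your product $\prod_{j=0}^{k-1}(1-j/m)$ is precisely $(m)_k/m^k$), and the same $1-x\le e^{-x}$, $k(k-1)\ge(k-1)^2$ estimate. The only difference is presentational --- you obtain the intermediate bound by telescoping the monotone sequence $M_{i-1}/(i\,M_i)$ where the paper runs a double induction on $(m,k)$ --- so this is essentially the paper's proof, with your remark about the range $m\le\alpha$ (needed for \eqref{eq:real}) handled correctly.
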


\begin{proof}
We first prove by induction on $m$ and $k$ that
\begin{equation}
\label{ind-hyp}
 \frac{M_{m-k}}{M_m} \leq \frac{(m)_{k}}{m^{k}}\, \left(\frac{M_{m-1}}{M_m}\right)^k
\end{equation}
for all $m\in [n]$ and all $k\in [m]$.

If $k=1$ then (\ref{ind-hyp}) holds with equality.
So the base cases for the induction are $(m,1)$ for all $m\in [n]$.
For the inductive step, we assume that the result holds for $(m-1,k)$ for some
$k\in [m-1]$,  and wish to conclude that the result holds for $(m,k+1)$.
Now
\[
 \frac{M_{m-(k+1)}}{M_{m}}\ =\ \frac{M_{(m-1)-k}}{M_{m-1}}\cdot \frac{M_{m-1}}{M_{m}}
 \leq\ \frac{(m-1)_{k}}{(m-1)^{k}}\, \left(\frac{M_{m-2}}{M_{m-1}}\right)^k\,
  \frac{M_{m-1}}{M_{m}}\,,
\]
using the inductive hypothesis for $(m-1,k)$.
Now apply (\ref{eq:real}) with $i=m-1$ to obtain
\begin{align*}
 \frac{M_{m-(k+1)}}{M_{m}}\ &
   \leq\ \frac{(m-1)_{k}}{(m-1)^{k}}\, \left(\frac{m-1}{m}\right)^k\,
   \left(\frac{M_{m-1}}{M_{m}}\right)^{k+1} \\
   &=\ \frac{(m-1)_{k}}{m^k}\,   \left(\frac{M_{m-1}}{M_{m}}\right)^{k+1}
   \ =\ \frac{(m)_{k+1}}{m^{k+1}}\left(\frac{M_{m-1}}{M_{m}}\right)^{k+1}.
\end{align*}
This shows that (\ref{ind-hyp}) holds for $(m,k+1)$, completing the inductive step.
Hence (\ref{ind-hyp}) holds for all $m\in [n]$ and $k\in [m]$, and the lemma
follows since
\begin{align*}
\frac{(m)_{k}}{m^{k}}\ =\ \prod_{j=0}^{m-1} \left(1-\frac{j}{m}\right)
&\leq\ \exp\bigg( -\dfrac{1}{m} \sum_{j=0}^{m-1} j\bigg)\,\qquad\text{using }1-x\leq e^{-x},\\
&=\  \exp\bigg(-\frac{k(k-1)}{2m}\bigg)\ \leq\ e^{-(k-1)^2/2m}\,.
\end{align*}
\end{proof}

\bigskip

Now suppose that $M_m=\max_i M_i$. Then Lemma~\ref{logconcave} implies that
$M_{m-k}\leq e^{-(k-1)^2/2m}\, M_m$
for all $k\in[m]$.  Since the polynomial $\sum_{i=0}^\alpha M_{\alpha-i}x^i$ only
has real roots, applying Lemma~\ref{logconcave} to this polynomial gives
$M_{m+k}\leq e^{-(k-1)^2/2m}\, M_m$  for all $k\in[\alpha-m]$.
Therefore
\begin{align*}
 P_G(\lambda)=\sum_{i=0}^\alpha M_i
  &\leq M_m\left(1 + \sum_{k=1}^m e^{-(k-1)^2/(2m)} + \sum_{k=1}^{\alpha-m} e^{-(k-1)^2/(2m)}\right)\\
 &\leq  2 M_m\int_{0}^{\infty}\!e^{-x^2/2m}\,\mathrm{d}x = M_m\sqrt{2\pi m}\, .
\end{align*}
Thus, if $Z$ is  a random independent set drawn from the stationary distribution $\pi$ of the Glauber dynamics, then
\begin{equation}\label{p_m}
\Pr(|Z|=m)\,=\,\frac{M_m}{P_G(\lambda)}\,\geq\,\frac{1}{\sqrt{2\pi m}}\,.
\end{equation}
By choosing $\lambda$ appropriately, we can take $m$ to be any value  $i\in[\alpha]$.

\medskip

Define $\lambda_i=N_{i-1}/N_i$ for all $i\in [\alpha]$.
Now (\ref{eq:real}) implies that
\begin{equation}
\label{monotonic-lambda}
\lambda_1 < \lambda_2 < \cdots < \lambda_{\alpha-1} < \lambda_\alpha.
\end{equation}
For $i$ to be the maximiser we require
\[  N_{i-1}\lambda^{i-1}\leq N_i\lambda^i\quad \text{ and } \quad
   N_{i+1}\lambda^{i+1}\leq N_i\lambda^i,\]
that is, $\lambda_i\leq\lambda\leq \lambda_{i+1}$.
A suitable value of $\lambda$ exists, by (\ref{monotonic-lambda}).
With this value of $\lambda$, we need $O(\sqrt{m})$ repetitions of the chain to obtain one
sample from $\cI_m(G)$. We explain below how to determine an appropriate value for $\lambda$.

For $m=\alpha$, we need $\lambda \geq \lambda_\alpha=N_{\alpha-1}/N_\alpha$. Following~\cite{JS}, we may take $\lambda=2\lambda_\alpha$.
Then $M_{\alpha-1}/M_\alpha\ =\nicefrac12$, and hence from Lemma~\ref{logconcave}, $M_{\alpha-k}/M_\alpha < 1/2^k$ for $k\in[\alpha]$. Hence $\sum_{i=0}^\alpha M_i < 2M_{\alpha}$,
and so, for $Z$ in stationarity, $\Pr(|Z|=\alpha)>\nicefrac12$.
Thus we need only $O(1)$ repetitions of the chain to get
one almost-uniform sample from $\cI_\alpha(G)$. Of course, the Markov chain only gives approximate samples,
but the small distance from stationarity does not alter this conclusion.

From Theorem~\ref{thm:main-Cp}, the Glauber dynamics will have polynomial mixing time in $n$ if $\lambda$ is polynomial in $n$. Thus we will require $2\lambda_\alpha\leq n^q$, for some constant $q$, in the family of graphs we consider. Jerrum and Sinclair~\cite{JS} called such a family of graphs \emph{$n^q$-amenable}. Clearly not all claw-free graphs are $n^q$-amenable, for any constant $q$, since there exist line graphs which are not $n^q$-amenable~\cite{JS}. Also, to apply the algorithm, we need an explicit polynomial bound for~$\lambda_\alpha$. We consider below how this can be obtained.

We have determined the best interval of $\lambda$ for sampling independent sets of size $m$, but we need to find this interval.
To this end we must consider how $\Pr(|Z|=m)$ varies with $\lambda$. We will denote this by $p_m(\lambda)$, or by $p_m$ if $\lambda$ is fixed.

\begin{lemma}\label{unimodal} Fix $m\in [\alpha]$.
Then $p_m(\lambda)=\Pr(|Z|=m)$ is a unimodal function of $\lambda > 0$.
\end{lemma}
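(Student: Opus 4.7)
The plan is to work directly with the explicit formula $p_m(\lambda) = N_m\lambda^m/P_G(\lambda)$ and use the Chudnovsky--Seymour real-rootedness result recalled just above the lemma. Since $P_G$ has nonnegative coefficients and only real roots, all its roots are negative; write them as $-r_1,\ldots,-r_\alpha$ with $r_j>0$, so that
\[
P_G(\lambda) \;=\; N_\alpha\prod_{j=1}^\alpha (\lambda + r_j).
\]
In particular $P_G(\lambda)>0$ for all $\lambda>0$, so $p_m(\lambda)$ is smooth on $(0,\infty)$.

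Next I would compute the logarithmic derivative. Since $p_m(\lambda) = N_m\lambda^m/P_G(\lambda)$,
\[
\lambda\,\frac{p_m'(\lambda)}{p_m(\lambda)} \;=\; m \;-\; \lambda\,\frac{P_G'(\lambda)}{P_G(\lambda)} \;=\; m \;-\; h(\lambda),
\qquad\text{where}\qquad
h(\lambda) \;=\; \sum_{j=1}^\alpha \frac{\lambda}{\lambda+r_j}.
\]
Because each summand $\lambda/(\lambda+r_j) = 1 - r_j/(\lambda+r_j)$ is strictly increasing in $\lambda>0$ (as $r_j>0$), the function $h$ is strictly increasing on $(0,\infty)$, with $h(0^+) = 0$ and $\lim_{\lambda\to\infty} h(\lambda) = \alpha$. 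Hence $h$ is a continuous bijection $(0,\infty)\to(0,\alpha)$.

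From this the unimodality follows immediately. Since $\mathrm{sign}(p_m'(\lambda)) = \mathrm{sign}(m-h(\lambda))$, for $1\le m\le\alpha-1$ there is a unique $\lambda^\ast\in(0,\infty)$ with $h(\lambda^\ast)=m$; then $p_m$ is strictly increasing on $(0,\lambda^\ast)$ and strictly decreasing on $(\lambda^\ast,\infty)$, which is exactly unimodality. In the boundary case $m=\alpha$, we have $h(\lambda)<\alpha$ for all finite $\lambda$, so $p_\alpha$ is strictly increasing on $(0,\infty)$, which is still unimodal. This handles all $m\in[\alpha]$.

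I do not anticipate a genuine obstacle here: the whole argument is essentially algebraic once real-rootedness is invoked, and the monotonicity of $h$ is a one-line calculation. The only place to be careful is to remember that unimodality permits monotone functions (covering the $m=\alpha$ corner), and to confirm that the $\lambda>0$ restriction means we never divide by a vanishing $P_G(\lambda)$ or encounter $h$ outside its range.
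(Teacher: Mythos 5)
Your proof is correct and follows essentially the same route as the paper: both invoke the real-rootedness of $P_G$ to write it as a product over negative roots and then study the logarithmic derivative of $p_m$, whose monotone part $h(\lambda)=\sum_j \lambda/(\lambda+r_j)$ is exactly what the paper shows is increasing by proving $L''(x)<0$ after substituting $\lambda=e^x$. Your direct sign analysis of $m-h(\lambda)$ (including the monotone case $m=\alpha$) is just a slightly more explicit presentation of the paper's concavity argument.
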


\begin{proof}
Since all roots of
$P_G(\lambda)$ are real and negative, we can write $P_G(\lambda) = \prod_{i=1}^{n}(\sigma_i+\lambda)$ for positive constants $\sigma_1,\ldots, \sigma_n$.
Since $\lambda > 0$ we can write $\lambda=e^x$.  Then
\[ p_m(\lambda)\,=\,\frac{N_m\lambda^m}{P_G(\lambda)}\,=\,\frac{N_m\lambda^m}{\prod_{i=1}^{n}(\sigma_i+\lambda)} \,=\,\frac{N_m e^{mx}}{\prod_{i=1}^{n}(\sigma_i+e^x)}\,=\,f(x)\,, \]
say. Thus
\begin{align*}
  L(x)\,&=\,\ln f(x)\,=\,-\sum_{i=1}^{n}\ln(\sigma_i+e^x)+mx+\ln N_m\,, \\
  L'(x)\,&=\,-\sum_{i=1}^{n}\frac{e^x}{(\sigma_i+e^x)}+m\,=\,\sum_{i=1}^{n}\frac{\sigma_i}{(\sigma_i+e^x)}+m-n\,,\\
  L''(x)\,&=\,-\sum_{i=1}^{n}\frac{\sigma_ie^x}{(\sigma_i+e^x)^2}\,<\,0.
\end{align*}
Thus $L(x)$ is concave, and hence a unimodal function of $x$. Since $\lambda=e^x$ is an increasing function of $x$, this implies $p_m(\lambda)$ is also unimodal as a function of $\lambda > 0$.
\end{proof}

As noted above (\ref{eq:real}), the sequence $\{ M_i/\binom{\alpha}{i} \}$ is log-concave,
for a fixed value of $\lambda >0$.  It follows from this (see~\cite[Lemma~7.1.1]{Branden})
that for a fixed value of $\lambda$,
the sequences $\{ M_i\}$ and $\{ p_i(\lambda)\}$ are both log-concave.

We can now return to sampling from $\cI_m(G)$ for any $m\leq\alpha$.
We will determine a suitable $\lambda$ for sampling from the Gibbs distribution by using bisection
to approximately maximise the unimodal function $p_m(\lambda)$.
The algorithm will work well whenever $\alpha \gg \log^2 n$.

\subsubsection{The bisection process}\label{ss:bisection}

The bisection process will use information obtained from the Glauber dynamics to
update a left marker $\kappa_0$ and a right marker $\kappa_1$
which satisfy
\[ 0 < \kappa_0<\lambda_m<\kappa_1<2\lambda_m<n^q.\]
We will describe below how to select the initial values of $\kappa_0$, $\kappa_1$.
At each bisection step, the midpoint of the interval $[\kappa_0,\kappa_1]$ will
become the new left marker (respectively, new right marker), if the current value of
$\lambda$ is too small (respectively, too large).

We would ideally wish to find a point in the interval [$\lambda_m,\lambda_{m+1}$],
as then
$p_m(\lambda)\geq 1/\sqrt{2\pi m} \approx 0.399/\sqrt{m}$. However, since we can only approximate $p_m(\lambda)$,
we merely
seek a $\lambda$ such that $p_m(\lambda)\geq 0.156/\sqrt{\alpha}$, with moderately
high probability.
By Lemma~\ref{unimodal} and \eqref{p_m}, given a positive constant $c < 1/\sqrt{2\pi}$
and any $m\in [\alpha]$,
the values of $\lambda$ such that $p_m(\lambda)>c/\sqrt{\alpha}$ form a non-empty interval
$\Lambda$, and $[\lambda_m,\lambda_{m+1}]\subseteq\Lambda$ by \eqref{p_m}.

To estimate how many steps of bisection might be required, we need to show that the
interval $\Lambda$ cannot be too short. In the worst case,
$\Lambda = [\lambda_m,\lambda_{m+1}]$. From \eqref{eq:real} with $i=m$, we have
\begin{equation}\label{eq:interval}
  \lambda_{m+1}-\lambda_m\ \geq\ \lambda_m/m\,.
\end{equation}
Similarly to~\cite{JS} and above, we will require that $\lambda_m\leq n^q$, for some constant $q$.
Note that $ \lambda_m\leq \lambda_\alpha$ by (\ref{monotonic-lambda}), so this is implied by $n^q$-amenability, though the converse may not hold.
Then, since the initial interval will have width less than $2\lambda_m$, we can locate a
point in $[\lambda_m,\lambda_{m+1}]$ in at most
$\log_2(2m\lambda_m)\leq \log_2(2m\kappa^*)$
iterations of bisection, where $\kappa^*$ is the initial value of $\kappa_1$.
Since $2\kappa^*\leq n^q$ for some constant $q$, this is at most $(q+1)\log_2 n$
 bisection steps.

We now describe how to carry out a step of the bisection.

\medskip

{\bf Bisection:}\  Let $\lambda = (\kappa_0+\kappa_1)/2$.\\[0.5ex]
{\bf Burn-in:}\ Run the Glauber dynamics
for $\tau_\es(1/n)$ steps, with this $\lambda$, from initial state $\es$.\\[0.5ex]
{\bf Estimation:}\  Run the Glauber dynamics for a further
$N=\lceil 9\alpha\, (1-\beta_1)^{-1}\rceil$ steps (still with this $\lambda$), obtaining a sample
of $N$ independent sets $X_1,X_2,\ldots,X_N$.

Let $\zeta_{j,i}$ be the indicator variable which is~1 if $X_j\in\cI_i$, and~0 otherwise.
Thus $\eta_i=\sum_{j=1}^N\zeta_{j,i}$ is the number of occurrences of an element of
$\cI_i(G)$ during these $N$ steps.
Let $\xi_i = \eta_i/N$ for $i\in [\alpha]$, and note that
$\xi_i$ is our estimate of $p_i(\lambda)$.
\begin{enumerate}[itemsep=0pt,label=(\arabic*)]
\item Let $\Xi=\{i \in [\alpha]\, :\, \xi_i\geq 0.328/\sqrt{\alpha}\}$,\quad $\Xi'=\{i\in [\alpha]\, :\, \xi_i\geq 0.207/\sqrt{\alpha}\}$.
Then $\Xi\subseteq \Xi'$, and note that $\Xi\neq \emptyset$ by definition of
$N$.
\item If $m\in \Xi'$ then stop.  We will conclude that $p_m(\lambda)\geq 0.156/\sqrt{\alpha}$
and that the current value of $\lambda$ is suitable for sampling from $\mathcal{I}_m(G)$.
\item Otherwise, choose $k\in\Xi$ uniformly at random.
\item If $k>m$ then we conclude that $\lambda_m<\lambda$ and
move left in the bisection
by setting the right marker $\kappa_1$ to $\lambda$.
\item If $k<m$ then we conclude that $\lambda_m>\lambda$ and
move right in the bisection
by setting the left marker $\kappa_0$ to $\lambda$.
\item If we have performed more than $\log_2(2m\kappa^*)$ iterations then stop:
the bisection process has failed.
\item Otherwise, begin the next bisection step with the new values of $\kappa_0$, $\kappa_1$.
\end{enumerate}
Note that there is a small probability that the conclusions that we make
in steps (2), (4) and (5) are wrong.

\bigskip

We now analyse this bisection process.
Suppose that $\lambda\in[\lambda_s,\lambda_{s+1}]$ during the current bisection step.
Also assume that we do not terminate during line (2), so $m\not\in \Xi'$.
For ease of notation, write $p_i=p_i(\lambda)=M_i/P_G(\lambda)$ for $i\in[\alpha]$, using
the current value of $\lambda$.
Then $p_s\geq 1/\sqrt{2\pi\alpha}$ from~\eqref{p_m}.
It follows from~\cite[(4.7)]{Aldous}  that
$\E[\xi_i]=p_i$ and
\[\var(\xi_i)=\E[(\xi_i-p_i)^2]< 2  \left(1+\frac{1}{n}\right) \left(1+\frac{e^{-9\alpha}}{9\alpha}\right)\frac{p_i}{9\alpha} < \frac{3p_i}{9\alpha}=\frac{p_i}{3\alpha}\,\]
for all $i\in[\alpha]$.
Thus, using the Chebyshev inequality,
\[
 \Pr\big(|\xi_i-p_i|>\sqrt{p_i}/(9\alpha^{1/4})\big) < \frac{81\sqrt{\alpha}}{p_i} \cdot
\frac{p_i}{3\alpha}=\frac{27}{\sqrt{\alpha}}
\]
for all $i\in [\alpha]$. Letting $p_i=c_i/\sqrt{\alpha}$ for $i\in [\alpha]$,
we can rewrite the above inequality as
\begin{equation}\label{eq:concentration}
 \Pr\Big(\xi_i\notin (c_i\pm\sqrt{c_i}/9)/\sqrt{\alpha}\Big) < \frac{27}{\sqrt{\alpha}}\,.
\end{equation}
In the remainder of this section, we write ``with high probability'' to mean ``with probability at least $1-27/\sqrt{\alpha}$'', unless otherwise stated.

In particular, \eqref{eq:concentration} implies that
$\xi_s > (c_s-\sqrt{c_s}/9)/\sqrt{\alpha}$ with high probability.
Since $c_s\geq 1/\sqrt{2\pi}>0.39$, it follows from (\ref{eq:concentration}) that
\begin{equation}
\label{xis-bound}
\xi_s>0.32/\sqrt{\alpha}
\end{equation}
with high probability.
Thus, we have $s\in\Xi$ with high probability.

If $m\in\Xi$, then clearly $m\in\Xi'$, and we would have terminated in line~(2).
So $m\notin\Xi$.

Now the lower bound on $\xi_i$ from (\ref{eq:concentration}) can be rearranged to give
\[ c_i^2 - (2b_i + \dfrac{1}{81})c_i + b_i^2 \leq 0,\]
where $\xi_i = b_i/\sqrt{\alpha}$.  Solving for $c_i$ tells us that with high probability,
\begin{equation}
\label{inverse-concentration}
 c_i \in b_i + \dfrac{1}{162} \pm  \sqrt{\frac{b_i}{81}  + \frac{1}{162^2}}.
\end{equation}
Since $k$ is chosen randomly from $\Xi$, we must have $\xi_k > 0.328/\sqrt{\alpha}$
and hence, by (\ref{inverse-concentration}) we conclude that
\begin{equation}
\label{ck-bound}
c_k>0.27
\end{equation}
with high probability.

Now $m\neq k$ since we did not terminate in line~(2).  Suppose that $m > k$.
(The case $m< k$ is symmetrical and hence will be omitted.)

Recall that $\{ p_i(\lambda)\}$ is a log-concave sequence, for a fixed $\lambda$, and hence is unimodal.
Therefore, if $k<m<s$ then $c_m\geq \min\{c_k,c_s\}$, which (assuming
(\ref{ck-bound}) holds) implies that
$c_m>0.27$.   Applying (\ref{eq:concentration}) shows that in this case we have
\begin{equation}
\label{xim-bound}
\xi_m>0.21/\sqrt{\alpha}
\end{equation}
with high probability.  Now (\ref{xim-bound}) implies that $m\in\Xi'$, which contradicts
the fact that the process did not terminate in line~(2).
Thus, we can assume that $m\geq s$, but again $m=s$ is impossible, as we did not terminate in
line~(2).  Therefore $m\geq s+1$, and so by (\ref{monotonic-lambda}) we have
$\lambda_m\geq \lambda_{s+1}>\lambda$. Hence the bisection in line~(5) is correct,
since the current value of $\lambda$ is indeed too small.

Finally, suppose that we do terminate in line~(2) in the current bisection step.
This happens only when $\xi_m \geq 0.207/\sqrt{\alpha}$,
and by (\ref{inverse-concentration}) we can conclude that
\begin{equation}
\label{cm-bound}
 c_m > 0.162
\end{equation}
with high probability.

We can make an error at any step of the bisection by terminating early, or by failing to
terminate. This will happen only if  one of $\xi_s$, $c_k$, $\xi_m$ (or $c_m$, in the
very last step) does not satisfy its
assumed bounds. That is, a non-final step will give an error only if
one of (\ref{xis-bound}), (\ref{ck-bound}), (\ref{xim-bound}) fails,
while the final step fails only if (\ref{cm-bound}) fails.
Thus the failure probability at each step is at most $3\times 27/\sqrt{\alpha}=81/\sqrt{\alpha}$. If $\kappa^*<n^q$ for some $q=O(1)$, and we bisect for $\log_2 (2m\kappa^*)\leq (q+1)\log_2 n$ steps, then the overall probability of an error during the bisection process is
at most
\[ 81\big((q+1)\log_2 n + 1\big)/\sqrt{\alpha}\]
of making any error during the bisection. This is small for large enough $n$,
provided that $\alpha\gg \log^2n$.

If $\alpha = O(\log^2 n)$, adapting the ``incremental'' approach of \cite{JS} would be
little worse than the bisection method, so we will not consider this case further.

We can now use this value of $\lambda$ to sample from $\cI_m(G)$. If, during this sampling,
we detect that $p_m < 0.162/\sqrt{\alpha}$, so the bisection terminated incorrectly,
we repeat the bisection process.

Clearly, very few repetitions are needed until the overall probability of failure is as small as desired.
Since $\alpha \leq n$ and $q=O(1)$, it follows that the total time required for this bisection process is only
$O(\log n\cdot \tau_\es(\eps))$.

\subsubsection{Finding the initial left and right marker}

We now show how to find initial values for $\kappa_0,\kappa_1$ using a standard ``doubling''
device. We use the approach of the bisection.  Initially, let $\lambda = 2e/n$,
and iterate as follows.

With the current value of $\lambda$, we perform a step of the bisection process above.
If we conclude that $\lambda_m<\lambda$, then we stop, setting
the values of the initial left marker $\kappa_0 = \lambda/2$ and the
initial right marker $\kappa^*=\kappa_1 = \lambda$.
If this occurs at the first step, with $\lambda =2e/n$, we are in the situation
where $\lambda_m<e/n$, discussed in Section~\ref{ss:notation}, so we proceed
deterministically, as outlined there.

Otherwise, we double $\lambda$ and repeat this with the new value of~$\lambda$.
After $i$ iterations of this doubling we have $\lambda = 2^{i+1}\, e/n$.
Therefore, when $i=\lceil \log_2(n\lambda_m/(2e))\rceil$
we have $\lambda/2 < \lambda_m < \lambda$ with high probability.

Of course, there is a small probability that we stop too early;
leading to $\kappa^* = \kappa_1<\lambda_m$.
If this occurs, we would detect it when the subsequent bisection
process fails. Then we would simply repeat the doubling process.

\subsubsection{Running time}

The initial phase (to find the initial left and right marker) requires only
$O(\log_2(n\kappa^*))$ iterations.  So if $\lambda_m=O(n^q)$ for $q=O(1)$ then only $O(\log n)$ iterations are required to find the initial left and right marker.

For each bisection step, the time required  for the ``burn in'' is $\tau_\es(\eps)$
and then $N$ further Markov chain steps are performed, to find the samples
$X_1,\ldots, X_N$.
The time required for this further
sampling will be $O(\tau_\es(\eps))$ if $n\lambda > e$, 
by (\ref{both-eigvals}), assuming that $\beta_{\max} = \beta_1$.

We saw earlier that $O(\ln n)$ bisection steps are required to find a suitable
$\lambda$, and that the failure probability is $o(1)$ if $\alpha\gg \log^2 n$.

Therefore, the total running time of the whole bisection process is then $O(\tau_\es(\eps)\cdot\log n)$, and throughout the bisection process the  Markov chain is always run using values of $\lambda$ which satisfy $\lambda\leq 2\lambda_m$. Thus the mixing time bound is at most $2^{p+1}=8$ times larger than that with $\lambda=\lambda_m$, from Theorem~\ref{thm:main-Cp}. This is a modest constant factor, so our bisection procedure compares very favourably with the running
time of $\Omega( \tau_\es(\eps)\cdot\alpha^2)$ obtained using the bisection method given
by Jerrum and Sinclair in~\cite{JS}.
In their bisection process, the Markov chain is run only with values $\lambda\leq\lambda_m$.

A final remark: In most practical situations we do not know the relaxation time or the mixing time exactly;
rather, we know an upper bound $R$ on the relaxation time and $T(\eps)$ on the mixing time.
If the bound $T(1/n)$ is obtained using (\ref{both-eigvals}) then
\[ T(\eps) =R \big(\alpha \ln(n\lambda) + \ln(n) + 1\big),\]
assuming that $\beta_{\max} = \beta_1$.
In this case we should take $N = \lceil 9\alpha R\rceil$ samples in each bisection step,
and it is still true that this additional sampling takes at most a constant times longer than the burn-in,
assuming that $\lambda > e/n$.

\subsubsection{Vertex weights}\label{sec:realroots}
Using the results of Section~\ref{ss:vertexweights}, we can extend the above analysis to the
case where all weights $w(v)$ are polynomially bounded in $n$. Then we can use the bisection method above to estimate $W_k(G)$, provided that $W_{\alpha-1}/W_\alpha$ is also polynomially bounded. This will complete our generalisation of all the results of \cite{JS} from line graphs to claw-free graphs.

The independence polynomial $P_G(\lambda)$ of $G$ can be generalised to the weighted case by defining
\[
P^w_G(\lambda)=\sum_{I\in\cI(G)}w(I)\lambda^{|I|}=\sum_{k=0}^{\alpha(G)} W_k(G)\, \lambda^k.
\]
For claw-free graphs, our algorithm for approximating $N_k(G)$ is based on Chudnovsky and Seymour's result~\cite{CS} that $P_G(\lambda)$ has only negative real roots in $\lambda$. This implies the strong form of log-concavity for the coefficients $N_k(G)$ that we used in Section~\ref{ss:bisection}. We will now show that the transformation of Section~\ref{ss:vertexweights} allows us to extend the result of~\cite{CS} to $P^w_G(\lambda)$.
\begin{lemma}
Suppose that $P_G(\lambda)$ has only negative real roots for all $G$ in an expandable class $\cC$.
Then, for any $G\in\cC$ and real weights $w(v)\geq 0$, the polynomial $P^w_G(\lambda)$ has only negative real roots.
\end{lemma}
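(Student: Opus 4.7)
The plan is to reduce to the unweighted case using the blow-up $(G,w)\mapsto G_w$ of Section~\ref{ss:vertexweights}, and then to extend from integer to rational to real weights by a scaling step and a continuity argument. The key identity is that when $w: V(G)\to\IN$ is integer-valued, the relation $N_k(G_w) = W_k(G)$ established in Section~\ref{unweighted} immediately yields
\[
  P^w_G(\lambda) \;=\; \sum_{k=0}^{\alpha} W_k(G)\,\lambda^k \;=\; \sum_{k=0}^{\alpha} N_k(G_w)\,\lambda^k \;=\; P_{G_w}(\lambda).
\]
Since $\cC$ is expandable we have $G_w\in\cC$, and so by hypothesis $P^w_G = P_{G_w}$ has only negative real roots. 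This settles the integer case.

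Next, for rational weights $w$, pick a common denominator $q\in\IN$; then $qw$ is integer-valued, and since $(qw)(I) = q^{|I|}w(I)$ for every independent set $I$, a direct expansion gives $P^{qw}_G(\lambda) = \sum_k q^k W_k(G)\lambda^k = P^w_G(q\lambda)$. Hence the roots of $P^w_G$ are exactly $q$ times the roots of $P^{qw}_G$; since $q>0$ and the latter are all negative real by the integer case, so are the roots of $P^w_G$.

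Finally, for arbitrary real nonnegative $w$, approximate $w$ coordinatewise by a sequence of strictly positive rational weight functions $w^{(t)} \to w$. Each $P^{w^{(t)}}_G$ has degree exactly $\alpha(G)$ with only negative real roots (by the rational case), and the coefficients of $P^{w^{(t)}}_G$ converge to those of $P^w_G$. By standard continuity of roots (equivalently, Hurwitz's theorem applied on disks disjoint from the nonpositive real axis), every root of the limit polynomial $P^w_G$ must lie in the closure $(-\infty, 0]$ of $(-\infty,0)$. The constant term $P^w_G(0) = W_0(G) = 1$ rules out $\lambda = 0$, so all roots are strictly negative. The only point that requires a little care is that $P^w_G$ may have degree strictly less than $\alpha(G)$ when some $w(v) = 0$, in which case a few roots of $P^{w^{(t)}}_G$ escape to infinity; this does not affect the conclusion, since the finite limit roots still fall on the negative real axis. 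This limiting step is the most technically delicate part of the proof, though it uses only standard facts about continuous dependence of roots on coefficients.
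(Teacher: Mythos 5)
Your proposal is correct and follows essentially the same route as the paper: the integer case via the equivalence $P^w_G=P_{G_w}$ and expandability, the rational case via rescaling ($P^{w}_G(\lambda)=P^{w'}_G(\lambda/q)$, which is your $P^{qw}_G(\lambda)=P^w_G(q\lambda)$ read in the other direction), and the real case via continuity of roots together with $P^w_G(0)=1$ to exclude a root at zero. Your explicit treatment of the possible degree drop when some $w(v)=0$ is a minor refinement of the paper's appeal to continuous dependence of roots on coefficients, but the argument is the same in substance.
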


\begin{proof}

First, suppose that all $w(v)$ are integers. Then the conclusion follows
from the equivalence of $G$ and $G_w$.

Next, suppose that all $w(v)$ are rational. Let $q$ be the lowest common divisor of the weights $w(v)$,
so $w(v)=w'(v)/q$ for some integer $w'(v)$, for all $v\in V$. Then $W_k=W'_k/q^k$, where $W'_k$ is calculated
using the integer weights $w'(v)$. Thus
\begin{equation*}
P^w_G(\lambda)=\sum_{k=0}^{\alpha} W_k\, \lambda^k= \sum_{k=0}^{\alpha} (W'_k/q^k)\, \lambda^k = P^{w'}_G(\lambda/q)\,.
\end{equation*}
Hence, since $P^{w'}_G(\lambda)$ has only negative real roots, so does $P^w_G(\lambda)$.

Finally, suppose that at least one $w(v)$ is irrational. For each $v\in V$, let
$\{w_i(v)\}$ ($i=1,2,\ldots$) be a sequence of rational numbers which converge to $w(v)$.
Then, from~\cite{HaMa}, the negative real roots of $P^{w_i}_G(\lambda)$ converge to the roots of $P^w_G(\lambda)$,
which must therefore be real and nonpositive.  However, $P^w_G(0) = N_0(G) = 1$, so 0 is not a root of
$P^w_G(\lambda)$ and hence all roots of $P^w_G(\lambda)$ are negative, completing the proof.
\end{proof}

\section{Other recognisable subclasses of $\cC_p$} \label{s:classes}

We now consider two more classes of graphs which we will show to have bounded bipartite pathwidth.

\subsection{Graphs with no fork or complete bipartite subgraph}\label{ss:fork}

Fork-free graphs exclude the following induced subgraph, the fork:
\begin{center}
  \begin{tikzpicture}[scale=0.35]
    \node[w] (v5) at (0,1) {};
    \node[w] (v4) at (2,1) {};
    \node[w] (vv) at (4,1) {};
    \node[w] (v0) at (6,2) {};
    \node[w] (v2) at (6,0) {};
    \draw (v5)--(v4)--(vv)--(v0)  (vv)--(v2);
  \end{tikzpicture}
\end{center}
We characterise fork-free bipartite graphs. Recall that two vertices $u$ and $v$ are
\emph{false twins} if $\Nb(u)=\Nb(v)$.
In Figure \ref{fig:cbff}, vertices to which false twins can be added are
filled (black). Hence each graph containing a black vertex represents
an infinite family of augmented graphs. For instance, $P_2^*$ represents all complete
bipartite graphs, $P_4^*$ represents graphs obtained from $K_{a,b}$ by
removing one edge; removing two edges leads to an augmented domino.

\begin{figure}[htbp]
  \hspace*{\fill}
  \begin{tikzpicture}[xscale=0.4, yscale=0.5] 
    \foreach \x in {1,3,5,7,9} \node[w] (\x) at (\x,0) {};
    \foreach \x in {2,4,6,8}   \node[w] (\x) at (\x,1) {};
    \foreach[count=\n] \x in {2,3,...,9} \draw (\n)--(\x);
  \end{tikzpicture}
  \hfill
  \begin{tikzpicture}[scale=0.6] 
    \foreach \a in {1,2,...,9} \node[w] (\a) at (22.5+45*\a:1) {};
    \foreach[count=\n] \a in {2,3,...,9} \draw (\n)--(\a);
  \end{tikzpicture}
  \hfill
  \begin{tikzpicture}[scale=0.6] 
    \foreach \a in {1,2,...,7} \node[w] (\a) at (60*\a:1) {};
    \foreach[count=\n] \a in {2,3,...,7} \draw (\n)--(\a);
    \node[b] (c) at (0,0) {};
    \foreach \a in {2,4,6} \draw (c)--(\a);
  \end{tikzpicture}
  \hfill
  \begin{tikzpicture}[scale=0.5] 
    \node[w] (s) at (1,2) {};
    \foreach[count=\y] \n in {a,b,c} \node[w] (\n) at (2,\y) {};
    \foreach[count=\y] \n in {d,e,f} \node[w] (\n) at (3,\y) {};
    \node[w] (t) at (4,2) {};
    \draw (a)--(s)--(b)--(d)--(a)--(e)--(c)--(f)--(b)
          (s)--(c)  (f)--(t)--(e)  (t)--(d);
  \end{tikzpicture}
  \hfill
  \begin{tikzpicture}[scale=0.7] 
    \foreach[count=\x] \c in {w,b,w} {
      \node[\c] (u\x) at (\x,1) {};
      \node[\c] (l\x) at (\x,0) {};
      \draw (l\x)--(u\x);
    };
    \foreach \y in {u,l} \draw (\y1)--(\y2)--(\y3);
  \end{tikzpicture}
  \hspace*{\fill}

  \quad

  \hspace*{\fill}
  \begin{tikzpicture}[scale=0.5] 
    \node[b] (1) at (1,0) {};  \node[b] (2) at (2,0) {};
    \draw (1)--(2);
  \end{tikzpicture}
  \hfill
  \begin{tikzpicture}[scale=0.5] 
    \node[w] (1) at (1,0) {};  \node[b] (2) at (2,0) {};
    \node[b] (3) at (3,0) {};  \node[w] (4) at (4,0) {};
    \draw (1)--(2)--(3)--(4);
  \end{tikzpicture}
  \hfill
  \begin{tikzpicture}[scale=0.5] 
    \node[w] (1) at (1,0) {};  \node[w] (2) at (2,0) {};
    \node[b] (3) at (3,0) {};  \node[w] (4) at (4,0) {};
    \node[w] (5) at (5,0) {};
    \draw (1)--(2)--(3)--(4)--(5);
  \end{tikzpicture}
  \hspace*{\fill}

  \caption{The path $P_9$, the cycle $C_8$, the augmented bipartite wheel $BW^*_3$,
    the cube $Q_3$, an augmented domino, followed by the augmented paths $P_2^*$,
    $P_4^*$ and $P_5^*$.
  }
  \label{fig:cbff}
\end{figure}
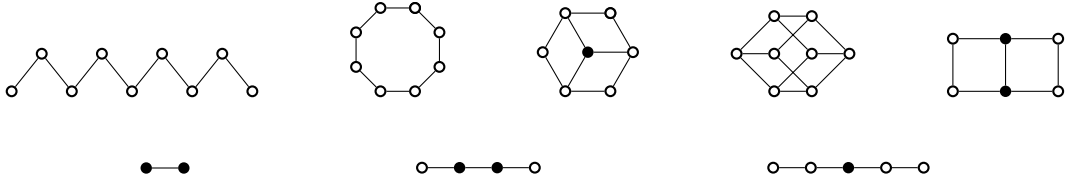

\begin{lemma} \label{l:bip fork-free}
  A bipartite graph is fork-free if and only if every connected component is a
  path, a cycle of even length, a $BW^*_3$, a cube $Q_3$, or can be obtained
  from a complete bipartite graph by removing at most two edges that form a
  matching, see Fig.~\ref{fig:cbff}.
\end{lemma}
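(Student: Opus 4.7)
The strategy is to prove the two directions separately. For the backward (``if'') direction, I verify that each listed family is fork-free. Paths and even cycles have maximum degree~$2$, whereas the fork has a vertex of degree~$3$; for $K_{a,b}$ with at most two disjoint edges removed, for $BW_3^*$, and for $Q_3$, the verification reduces to checking the numerical criterion of the lemma below by direct inspection of the adjacency structure.

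For the forward direction, let $H$ be a connected bipartite fork-free graph. The key lemma is: for every $v\in V(H)$ with $\deg(v)\ge 3$ and every $w\ne v$ at distance~$2$ from $v$ in $H$, we have $|N(v)\setminus N(w)|\le 1$. Indeed, if $u\in N(v)\cap N(w)$ and $N(v)\setminus N(w)$ contained two distinct vertices $u_i,u_j$, then $\{v,u,u_i,u_j,w\}$ would induce a fork in $H$, with centre $v$, pendant arm $uw$, and remaining leaves $u_i$ and $u_j$.

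With this lemma in hand, if $\Delta(H)\le 2$ then bipartiteness forces $H$ to be a path or an even cycle. Otherwise fix $v$ with $d=\deg(v)=\Delta(H)\ge 3$ and set $B=N(v)$; every second-neighbour of $v$ misses at most one vertex of $B$. For $d\ge 4$, this near-universal attachment, propagated by applying the lemma iteratively on both sides of the bipartition, forces $H$ to be $K_{a,b}$ minus a matching, and a further use of the lemma bounds the matching size by~$2$. For $d=3$ with $B=\{u_1,u_2,u_3\}$, each second-neighbour $w$ of $v$ either has $N(w)\cap B = B$ (a ``twin'' of~$v$ with respect to~$B$), or $|N(w)\cap B|=2$. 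Enumerating how these two types can coexist, and symmetrically applying the lemma to vertices of $B$ of degree $\ge 3$, shows that the only surviving components are $Q_3$, the $BW_3^*$ family, and subgraphs of $K_{3,b}$ missing a matching of size at most~$2$.

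The principal obstacle will be the $d=3$ case. Both $Q_3$ and $BW_3^*$ arise as genuinely distinct fork-free structures, so the case analysis must carefully distinguish the arrangement of type-$2$ and type-$3$ second-neighbours of $v$, together with their further neighbours beyond $B$. The delicate part is ruling out every other configuration: I expect to argue that a type-$3$ vertex (a twin of~$v$) coexisting with a type-$2$ vertex forces the $BW_3^*$ pattern; three type-$2$ vertices whose pairwise intersections cover $B$ in a balanced way produce $Q_3$; and any other attachment of an extra vertex or edge introduces a fork somewhere in~$H$. Performing this final ``near-miss'' sweep is where the most careful bookkeeping around the three-element set $B$ will be required.
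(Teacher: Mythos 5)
Your key lemma is correct and is a genuinely different starting point from the paper's: the five vertices $\{v,u,u_i,u_j,w\}$ do induce a fork, so any vertex $w$ at distance two from a vertex $v$ of degree at least $3$ misses at most one neighbour of $v$. The paper instead anchors on a longest induced cycle $C$ (after disposing of trees), shows every vertex has a neighbour on $C$ (``every cycle is dominating''), and then splits according to whether $C$ has length at least six or all induced cycles are $4$-cycles, finishing the latter case with a maximal complete bipartite subgraph $X\cup Y$ and the sets $W=\Nb(X)\sm Y$, $Z=\Nb(Y)\sm X$. However, your forward direction is a plan rather than a proof: the assertions that for $d\ge 4$ the lemma ``propagated iteratively'' forces $K_{a,b}$ minus a matching, and that the $d=3$ enumeration leaves only $Q_3$, $BW_3^*$ and near-complete bipartite graphs, are precisely the substance of the theorem and are not argued — you explicitly defer the ``near-miss sweep''. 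Moreover, nothing in the sketch controls vertices at distance three or more from your chosen maximum-degree vertex (your lemma only constrains distance-two vertices), whereas the paper obtains the corresponding global control from the dominating-cycle observation and a separate tree case.

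More seriously, one announced step cannot be carried out: ``a further use of the lemma bounds the matching size by~2'' must fail, because $K_{a,b}$ minus an \emph{arbitrary} matching is always fork-free — every vertex then has at most one non-neighbour on the opposite side, while an induced fork requires a vertex (the pendant) non-adjacent to two of the three chosen neighbours of the centre. Concretely, $K_{4,4}$ minus three independent edges (equivalently $Q_3$ plus one edge) is fork-free, yet it is not a path, cycle, $BW^*_3$, $Q_3$, or a complete bipartite graph minus at most two matching edges; so no local argument can cap the matching at two. Note that the paper derives the bound~$2$ only inside the sub-case where all induced cycles have length four (three independent non-edges would create an induced $C_6$), and handles graphs containing an induced $C_6$ in a separate branch — and even there its conclusion ``$BW^*_3$ or $Q_3$'' overlooks the example just given, so your difficulty in fact points at a soft spot in the statement itself (the natural clean statement is ``complete bipartite minus a matching''). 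In summary: a sound and promising opening lemma, but the decisive case analysis is missing, and the matching-size step as described is unachievable.
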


Note that a graph is a $P_2^*$ if and only if it is complete bipartite, and
a $P_4^*$ if it can be obtained from a complete bipartite graph by removing
one edge. If we remove two independent edges we obtain the graphs represented
by the augmented domino in Fig.~\ref{fig:cbff}. $P_4^*$ and $P_5^*$ are induced subgraphs
of the augmented dominoes, and $P_2^*$ is in the same way contained in $P_4^*$. Clearly
every path is an induced subgraph of a suitable even cycle. Finally, $C_6$ is
a $K_{3,3}$ minus a perfect matching, and $Q_3$ is a $K_{4,4}$ minus a perfect
matching.

\begin{proof}
  It is easy to see that none of the connected bipartite graphs
  depicted in Fig.~\ref{fig:cbff} contains a fork as induced subgraph.
  (Adding false twins cannot produce a fork where no $P_4$ ends.)

  To prove the other implication we consider a connected bipartite graph
  $H=(V,E)$ that does not contain a fork. First we suppose that $H$ is a tree.
  If $H$ contains no vertex of degree three or more then $H$ is a path, and we
  are done. Otherwise, let $v$ be a vertex in $H$ of degree at least three. If
  any vertex in $H$ has distance at least two from $v$ then we have an induced
  fork since $H$ is a acyclic. Otherwise, every vertex is distance at most one
  from $v$ and $H$ is a star $K_{1,b}$ for some integer $b\ge3$. All these
  stars are complete bipartite graphs.

  Now suppose that $H$ contains a cycle, and let $C=(v_1,v_2,\dots,v_{2\ell})$
  (as $H$ is bipartite, $C$ must have even length) be a longest induced cycle
  in $H$. If $C=H$ then we are done. Otherwise, there is a vertex $v$ in $C$
  with degree at least three.

  Assume that there exists a vertex $w$ in $H$ in distance two from $C$. That is, there is a
  path $(w,u,v)$ where, without loss of generality, $v=v_2$. Now
  $\{v_1,v_2,v_3,u,w\}$ induces a fork in $G$ since $w$ has no neighbour in
  $C$, which is a contradiction. Hence every vertex of $H$ that does not belong to $C$ has
  a neighbour on $C$, and the same is true for every other cycle in $H$.
  The phrase ``every cycle is dominating" will refer to this property.

  We distinguish cases depending on the length of the longest cycle $C$ in $H$.
  First assume that $\ell \ge 3$; that is, $C$ has length at least six. We consider
  a vertex $u$ that does not lie on $C$. Let $v_2$ be a neighbour of $u$.
  If $\{v_{2i} \, :\,  i \in [\ell]\} \sm \Nb(u) \ne \es$ then we may assume
  that
  $u \in \Nb(v_2) \sm \Nb(v_4)$, which would cause a fork in $G$ induced by $v_1$,
  $v_2$, $v_3$, $v_4$ and $u$.
  Hence we have $\{v_{2i} \, :\, i \in [\ell]\} \subseteq \Nb(u)$. If $\ell \ge 4$
  then $v_2$, $v_4$, $u$, $v_6$ and $v_7$ induce a fork. Therefore we have
  $\ell=3$. We have $\{v_{2i-1} \, :\, i \in [\ell]\} \subseteq \Nb(u)$ or
  $\{v_{2i} \, :\, i \in [\ell]\} \subseteq \Nb(u)$ for every vertex $u$ of $H$
  that does not belong to $C$. Hence $H$ is a $BW^*_3$ or a cube $Q_3$,
  because adding a false twin to $Q_3$ would cause a fork. To see this,
note that $Q_3$ contains $C_6$, by removing two opposite vertices.
Then adding a false twin to any vertex in $C_6$ produces a fork.

  In the remaining case, every induced cycle of $H$ has length four. If $H$ is
  complete bipartite we are done. Otherwise we choose a maximal complete
  bipartite subgraph of $H$. More precisely, let $X$ and $Y$ be independent
  sets of $H$ such that
  \begin{enumerate}
  \item[(a)] \label{XY1} every vertex in $X$ is adjacent to every vertex in $Y$
        (that is, $X$ and $Y$ induce a complete bipartite subgraph in $H$),
  \item[(b)] \label{XY2} $|X| \ge 2$ and $|Y| \ge 2$
        (this is possible because $H$ contains a $4$-cycle),
  \item[(c)] \label{XY3} with respect to (a) and (b) the set 
        $X \cup Y$ has maximum size.
  \end{enumerate}
  Let $W = \Nb(X) \sm Y$ and $Z = \Nb(Y) \sm X$. Since every cycle in $H$ is
  dominating, $(W,X,Y,Z)$ is a partition of $V$. We split $Y$ into those
  vertices that have a non-neighbour in $Z$ and those
  that are adjacent to all vertices in $Z$ by setting
  $Y'  = Y \sm  \bigcap_{z \in Z} \Nb(z)$ and
  $Y'' = Y \cap \bigcap_{z \in Z} \Nb(z)$. Similarly
  $X'  = X \sm  \bigcap_{w \in W} \Nb(w)$ and
  $X'' = X \cap \bigcap_{w \in W} \Nb(w)$.
  Every vertex $z \in Z$ has a non-neighbour in $Y'$, otherwise $z$ would
  belong to $X$. If $z$ has two non-neighbours $y_1, y_2 \in Y'$ then, for
  every pair of vertices $x_1, x_2 \in X$, the $4$-cycle $(x_1,y_1,x_2,y_2)$
  would not dominate $z$. Therefore every vertex $z \in Z$ has exactly one
  non-neighbour in $Y'$. If $Z$ contains three vertices $z_1$, $z_2$ and
  $z_3$ with different non-neighbours $y_1, y_2, y_3 \in Y'$ then $(y_1, z_2,
  y_3, z_1, y_2, z_3)$ is a chordless $6$-cycle in $H$.
  But this contradicts the assumption of
  this case (namely, that all chordless cycles have length~$4$).

  For every vertex $z \in Z$ with neighbour $y_1 \in Y$ and non-neighbour
  $y_2 \in Y$ and every vertex $w \in W$ with neighbour $x \in X$, the
  vertices $w$, $y_2$, $x$, $y_1$ and $z$ induce a fork, unless $w$ and $z$
  are adjacent.
  Hence every vertex $w \in W$ is adjacent to every vertex $z \in Z$.
  Consequently the graph $H$ is `almost complete bipartite' with bipartition
  $W \cup Y$ and $X \cup Z$. The missing edges are between $W$ and $X$ or
  between $Z$ and $Y$. These non-edges form a matching, and therefore
  there are at most two of them, because the endpoints of three independent
  non-edges would induce a $C_6$.
\end{proof}

\begin{lemma} \label{l:bpw fork-free}
  For all integers $d \ge 1$ the fork-free graphs without
  induced $K_{d+1,d+1}$ have bipartite pathwidth at most $\max(4,d+2)$.
\end{lemma}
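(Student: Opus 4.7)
The plan is to reduce, via Lemma~\ref{l:bip fork-free}, to a pathwidth computation on each of the handful of ``building block'' bipartite graphs listed there.

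Fix $G$ fork-free with no induced $K_{d+1,d+1}$, and let $H$ be an arbitrary bipartite induced subgraph of $G$. Both properties are hereditary, so $H$ is fork-free bipartite and contains no induced $K_{d+1,d+1}$. Since a path decomposition of a disconnected graph can be built by concatenating decompositions of its components, $\pw(H)$ equals the maximum pathwidth over its components, and it suffices to bound $\pw(H')$ for an arbitrary connected component $H'$ of $H$ by $\max(4,d+2)$.

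By Lemma~\ref{l:bip fork-free}, $H'$ falls into one of five families: a path, an even cycle, the augmented bipartite wheel $BW_3^*$, the cube $Q_3$, or a graph obtained from $K_{a,b}$ by deleting at most two edges forming a matching. For paths and even cycles, \eqref{eq:pw} gives $\pw(H')\le 2$. For $BW_3^*$, deleting the three vertices of the $C_6$ that are the common neighbours of every centre yields an independent set, so Lemma~\ref{l:pw is monotone} gives $\pw(BW_3^*)\le 0+3=3$. For $Q_3$, deleting one of its (size-$4$) colour classes again leaves an independent set, so Lemma~\ref{l:pw is monotone} yields $\pw(Q_3)\le 4$. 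In each of these cases, $\pw(H')\le 4\le\max(4,d+2)$.

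The remaining and main case is when $H'$ is obtained from some $K_{a,b}$ by deleting at most two matching edges. Lemma~\ref{l:pw is monotone} together with \eqref{eq:pw} then gives $\pw(H')\le\pw(K_{a,b})=\min(a,b)$, so it suffices to prove $\min(a,b)\le d+2$. I will argue this by contradiction: if $\min(a,b)\ge d+3$, then after removing from each side the (at most two) endpoints of the deleted matching edges, the remaining subgraph of $H'$ is a complete bipartite graph with both sides of size at least $d+1$, giving an induced $K_{d+1,d+1}$ in $H'$ and hence in $G$, contrary to hypothesis. Thus $\pw(H')\le d+2\le\max(4,d+2)$, completing the proof.

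The key step is invoking the structural classification Lemma~\ref{l:bip fork-free}; with that available, there is no real obstacle, only a finite verification for the exceptional components $BW_3^*$ and $Q_3$ and the small argument above to convert the $K_{d+1,d+1}$-free hypothesis into a bound on $\min(a,b)$. The constant $4$ in the statement comes from $\pw(Q_3)$, and the $d+2$ from the almost-complete-bipartite case.
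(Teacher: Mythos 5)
Your proof is correct and follows essentially the same route as the paper: reduce to the connected bipartite components classified by Lemma~\ref{l:bip fork-free} and bound the pathwidth of each family, getting $4$ from $Q_3$ and $d+2$ from the almost-complete-bipartite case. The only (cosmetic) differences are that the paper bounds $BW_3^*$ and $Q_3$ by embedding them in $K_{3,b+3}$ and $K_{4,4}$ rather than deleting a small vertex set, and handles the augmented dominoes via the sandwich $K_{a,b}\subseteq D\subseteq K_{a+2,b+2}$ instead of your direct argument that $\min(a,b)\le d+2$.
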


\begin{proof}
  The (bipartite) pathwidth of a disconnected graph is the maximum (bipartite)
  pathwidth of its connected components. Therefore we just need to check all
  the possibilities for connected induced bipartite subgraphs as listed in
  Lemma~\ref{l:bip fork-free}. For $n \ge 2$ the path $P_n$ has pathwidth $1$,
  $\pw(P_1)=0$, and for $n \ge 3$ the cycle $C_n$ has pathwidth $2$, see
  \eqref{eq:pw}.
  The other graphs from the list we embed into suitable complete
  bipartite graphs to bound their pathwidth using Lemma~\ref{l:pw is monotone}.

  We have $BW^*_3 \subseteq K_{3,b+3}$ where $b\ge1$ is the number of central
  vertices. This implies that $\pw(BW^*_3) \le \pw(K_{3,b+3}) = 3$. Similarly,
  $Q_3 \subseteq K_{4,4}$ and therefore $\pw(Q_3) \le \pw(K_{4,4}) = 4$.

  For each augmented domino $D$ there exists positive integers $a$ and $b$ such that
  $K_{a,b} \subseteq D \subseteq K_{a+2,b+2}$. Since $D$ does not contain
  $K_{d+1,d+1}$ we have $\min\{a,b\} \le d$, hence $\pw(D) \le
  \pw(K_{a+2,b+2}) \le d+2$. All other graphs from Lemma~\ref{l:bip fork-free}
  are subgraphs of augmented dominoes.
  Thus we see that every possible connected bipartite induced subgraph of a
  fork-free graph without induced $K_{d+1,d+1}$ has pathwidth at most
  $\max(4,d+2)$.
\end{proof}

\subsection{
 Graphs free of armchairs, stirrers and tripods} 

Let a \emph{hole} in a graph be a chordless cycle of length five or more.  The
induced subgraphs depicted in Fig.~\ref{fig:tas} are called \emph{armchair},
\emph{stirrer} and \emph{tripod}. A \emph{\fast graph} is a graph that
contains none of these three as an induced subgraph.
(Here ``\fast'' stands for ``free of armchairs,
stirrers and tripods''.)
This extends the class of \emph{monotone graphs}~\cite{DJM},
which also excludes all holes.

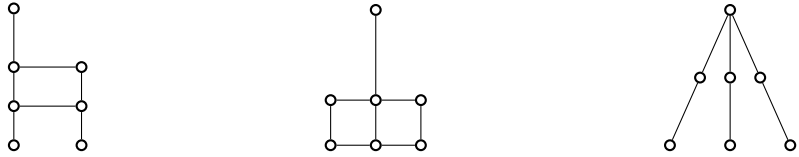
\begin{figure}[htbp]
  \hspace*{\fill}
  \begin{tikzpicture}[xscale=0.3,yscale=0.26]             
    \foreach \y in {0,4} \node[w] (0!\y) at (0,\y) {};
    \foreach \y in {2,7} \node[w] (0!\y) at (0,\y) {};
    \foreach \y in {0,4} \node[w] (3!\y) at (3,\y) {};
    \node[w] (3!2) at (3,2) {};
    \draw (0!0)--(0!2)--(0!4)--(0!7)  (3!0)--(3!2)--(3!4)--(0!4)  (0!2)--(3!2);
  \end{tikzpicture}
  \hspace*{\fill}
  \begin{tikzpicture}[scale=0.3]                          
    \node[w] (hh) at (3,6) {};
    \foreach \x in {1,5} {
      \draw (\x,2) node[w] (\x!2) {} (\x,0) node[w] (\x!0) {};
    };
    \draw (3,0) node[w] (3!0) {} (3,2) node[w] (3!2) {} ;
    \draw (hh)--(3!2)--(3!0)--(1!0)--(1!2)--(3!2)--(5!2)--(5!0)--(3!0);
  \end{tikzpicture}
  \hspace*{\fill}
  \begin{tikzpicture}[xscale=0.4,yscale=0.3]              
    \node[w] (hh) at (3,6) {};                            
    \node[w] (l1) at (2,3) {}; \node[w] (l2) at (1,0) {}; 
    \node[w] (m1) at (3,3) {}; \node[w] (m2) at (3,0) {}; 
    \node[w] (r1) at (4,3) {}; \node[w] (r2) at (5,0) {}; 
    \foreach \leg in {l,m,r} \draw (hh)--(\leg1)--(\leg2);
  \end{tikzpicture}
  \hspace*{\fill}
  \caption{The armchair, the stirrer and the tripod.}
  \label{fig:tas}
\end{figure}

\begin{lemma} \label{l:nmah} 
  For every vertex $w$ of a connected bipartite \fast graph $G$, the graph
  $G \sm \Nb(w)$ is hole-free.
\end{lemma}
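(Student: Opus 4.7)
The plan is to argue by contradiction. Assume $G \sm \Nb(w)$ contains a hole $C = v_1 v_2 \cdots v_{2k}$ with $k \geq 3$ (the length is even since $G$ is bipartite). Since $V(C) \cap \Nb(w) = \es$, the vertex $w$ itself is not on $C$. As $G$ is connected, I can choose a shortest path $P\colon w = u_0, u_1, \ldots, u_t = v_i$ from $w$ to $V(C)$; by construction $t \geq 2$, and minimality forces every $u_j$ with $j < t$ to lie outside $V(C)$, and every $u_j$ with $j < t-1$ to have no neighbour in $V(C)$. Write $w' = u_{t-1}$ and $w'' = u_{t-2}$ (so $w'' = w$ when $t = 2$). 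The critical bipartite observation is that all neighbours of $w'$ on $C$ lie in a single bipartition class of $C$, since $w'$ sits in the opposite class. I then split on $|\Nb(w') \cap V(C)|$ and in each case exhibit a $7$-vertex induced subgraph isomorphic to a tripod or an armchair, contradicting the hypothesis that $G$ is \fast.

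In Case A, when $\Nb(w') \cap V(C) = \{v_i\}$, I take $S = \{w'', w', v_{i-2}, v_{i-1}, v_i, v_{i+1}, v_{i+2}\}$ (seven distinct vertices, using $k \geq 3$). Routine edge-checking --- bipartite parity rules out $w' v_{i \pm 1}$, the Case A hypothesis rules out $w' v_{i \pm 2}$, path-minimality rules out $w''$-edges to $V(C)$, and chordlessness of the hole rules out any hole chord --- shows that $G[S]$ is a tripod centred at $v_i$ with legs $(v_i, v_{i-1}, v_{i-2})$, $(v_i, v_{i+1}, v_{i+2})$, $(v_i, w', w'')$. In Case B, when $|\Nb(w') \cap V(C)| \geq 2$, let $d$ be the minimum cyclic arc distance on $C$ between two $C$-neighbours of $w'$; the parity observation forces $d$ to be even and at least $2$. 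If $d = 2$, I pick $C$-neighbours $v_a, v_{a+2}$ of $w'$ and consider the induced subgraph on $\{w'', w', v_{a-1}, v_a, v_{a+1}, v_{a+2}, v_{a+3}\}$: it is the $4$-cycle $w' v_a v_{a+1} v_{a+2}$ with pendants $w'', v_{a-1}, v_{a+3}$ at three consecutive corners, i.e.\ an armchair. If $d \geq 4$, I pick any $C$-neighbour $v_a$ of $w'$ and consider $\{w'', w', v_{a-2}, v_{a-1}, v_a, v_{a+1}, v_{a+2}\}$: here $w'$ fails to be adjacent to $v_{a \pm 1}$ by parity and to $v_{a \pm 2}$ because either adjacency would give a gap $2 < d$, so $G[S]$ is a tripod centred at $v_a$. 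The only degenerate possibility worth checking is $k = 3$ in the $d \geq 4$ subcase, but it does not arise: two $C$-neighbours of $w'$ with minimum gap $\geq 4$ cannot fit on a $6$-cycle.

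I expect the main technical obstacle to be identifying the correct $7$-vertex window in each case and verifying that no spurious edges appear inside it. Every such verification rests on three simple ingredients: bipartite parity restricting $w'$'s $C$-neighbours to one side, chordlessness of $C$ for non-adjacencies among hole vertices, and shortest-path minimality forbidding $w''$ to touch $V(C)$. Once the three windows --- symmetric around $v_i$ or around $v_a$ in the two tripod cases, and asymmetric around the gap-$2$ pair in the armchair case --- are chosen correctly, each case is essentially automatic. Notably, the stirrer exclusion does not appear to enter this particular proof.
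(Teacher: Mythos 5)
Your proposal is correct and follows essentially the same route as the paper: take a shortest path from $w$ to the hole, note that the second-to-last path vertex has no neighbours on the hole by minimality, and exhibit an induced tripod (when the attachment vertex has no second hole-neighbour at cyclic distance two) or an armchair (when it does) in a seven-vertex window. Your case split by minimum cyclic gap is only a cosmetic reorganisation of the paper's check on $u_{i\pm 2}$, and, as in the paper, the stirrer is not needed.
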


\begin{proof}
  For a contradiction, assume that there exists a vertex $w$ such that
  $G \sm \Nb(w)$ contains a hole $C$. Since $G$ is bipartite we have
  $C=(u_1,u_2,\dots,u_{2\ell})$ with $\ell\geq 3$. As $G$ is connected there
  is a path from $w$ to $C$, which must contain at least~2 edges.
  Let $P = (w,\ldots, w',v,u_i)$ be a shortest path from $w$ to $C$, where
  $i\in [2\ell]$.  Observe that $w'$ has no neighbours on $C$ (or a shorter
  path would exist from $w$ to $C$.)

  First suppose that (at least one of) $u_{i-2}$ or $u_{i+2}$ are also
  adjacent to $v$. We claim that $\{u_{i-3},\dots, u_{i+1},v,w'\}$ and
  $\{u_{i-1},\dots,u_{i+3},v,w'\}$, respectively, induce an armchair in $G$.
  This follows since $G$ is bipartite and $w'$ has no neighbours on $C$.
  Otherwise, neither $u_{i-2}$ nor $u_{i+2}$ is adjacent
  to $v$ and $\{u_{i-2},\dots,u_{i+2},v,w'\}$ induces a tripod. Since both
  subgraphs are forbidden in \fast graphs, such a hole $C$ does not exist.
\end{proof}

\subsubsection{Maximum degree bound}

\begin{lemma} \label{l:pw monotone}
  For every bipartite hole-free \fast graph $G=(A,B,E)$ we have \\
  \[ \pw(G) \le \min\{\max\{d(v) \, :\,  v \in A\},\,  \max\{d(v) \, :\,  v \in B\}\}.\]
\end{lemma}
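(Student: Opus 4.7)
We prove $\pw(G)\le \Delta_A$ under the assumption $\Delta_A:=\max\{d(v):v\in A\}\ls \Delta_B:=\max\{d(v):v\in B\}$; the other direction follows by symmetry. We may assume $G$ is connected, since the pathwidth of a disconnected graph equals the maximum pathwidth over its components and the analogous statement for $\Delta_A,\Delta_B$ is immediate.

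The plan is to build the path decomposition from a linear ordering of $A$. Given any ordering $a_1,\ldots,a_s$ of $A$, write $\mathrm{first}(b)$ and $\mathrm{last}(b)$ for the extreme indices $i$ with $a_i\in \Nb(b)$, and set
\[B_i=\{a_i\}\cup\{b\in B:\mathrm{first}(b)\ls i\ls \mathrm{last}(b)\}\qquad (i\in[s]).\]
Axioms \ref{pw1}--\ref{pw3} are immediate: each $a_i$ sits in $B_i$; each $b\in B$ sits exactly in the interval of bags $[\mathrm{first}(b),\mathrm{last}(b)]$; and every edge $a_ib$ lies in $B_i$. Moreover, if the ordering has the \emph{consecutive property} — that is, $\{i:a_i\in\Nb(b)\}$ is an interval in $[s]$ for every $b\in B$ — then the ``straddling'' set $\{b:\mathrm{first}(b)\ls i\ls \mathrm{last}(b)\}$ collapses to $\Nb(a_i)$, so $|B_i|=1+d(a_i)\ls 1+\Delta_A$ and $\pw(G)\ls\Delta_A$. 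Everything therefore reduces to finding a consecutive ordering of $A$.

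To produce such an ordering I would proceed by induction on $|A|$. Using hole-freeness (i.e.\ chordal-bipartiteness), there is always a \emph{simple} vertex $a^*\in A$: one whose neighbours $\Nb(a^*)=\{b_1,\ldots,b_k\}$ have neighbourhoods totally ordered by inclusion, $\Nb(b_1)\subseteq\Nb(b_2)\subseteq\cdots\subseteq\Nb(b_k)$. Apply the inductive hypothesis to the (still bipartite, hole-free, \fast) graph $G\sm\{a^*\}$ to obtain a consecutive ordering of $A\sm\{a^*\}$, and then insert $a^*$ at the position dictated by the nested chain above — concretely, immediately adjacent to the vertex $a'\in A\sm\{a^*\}$ which attains $\mathrm{last}(b_k)$ in the induced ordering. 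Since every $b_j\in\Nb(a^*)$ has $\Nb(b_j)\setminus\{a^*\}\subseteq \Nb(b_k)\setminus\{a^*\}$, contiguity at every $b_j$ is preserved, and contiguity at every other $b$ is unchanged.

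The main obstacle is proving that this insertion is always legitimate; that is, that no $b\in\Nb(a^*)$ has its former neighborhood split around the insertion point. If this were to fail, then some $b_j\in\Nb(a^*)$ would have two neighbours $a',a''$ in $A\sm\{a^*\}$ sitting on opposite sides of the chosen insertion slot, with some neighbour $a'''$ of another $b_{j'}\in\Nb(a^*)$ wedged between them. Combining $a^*$, the chain of nested neighbourhoods along $\Nb(a^*)$, the three separating $A$-vertices and a short path through $G$ witnessing the connectivity, I would extract an induced armchair, stirrer or tripod, contradicting the \fast hypothesis: two non-contiguous neighbours of $b_j$ together with the nested chain produce a $C_4$ with pendants realising an armchair or stirrer, while three independent ``branches'' through distinct $b_j$'s produce a tripod. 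Carrying out this case analysis — matching each failure mode to one of the three forbidden subgraphs — is the technical heart of the proof.
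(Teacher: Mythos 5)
Your reduction of the lemma to finding a consecutive (convex) ordering of $A$, with bags $\{a_i\}\cup\Nb(a_i)=\Nb[a_i]$, is sound and is exactly the decomposition the paper uses; the paper obtains the ordering by quoting that bipartite hole-free \fast graphs are monotone (staircase bi-adjacency matrix), so the entire content of the lemma is the existence of such an ordering, and it is precisely this step that your induction does not establish. Two local problems first: nestedness $\Nb(b_1)\subseteq\cdots\subseteq\Nb(b_k)$ only gives containment of the index intervals, not a common right endpoint, so inserting $a^*$ immediately after the vertex attaining $\mathrm{last}(b_k)$ splits any $b_j\in\Nb(a^*)$ with $\mathrm{last}(b_j)<\mathrm{last}(b_k)$ --- the justification ``subset, hence contiguity is preserved'' is a non sequitur; and you never address vertices $b\notin\Nb(a^*)$, whose intervals are destroyed whenever $a^*$ lands strictly inside them, so ``contiguity at every other $b$ is unchanged'' is false in general.

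More seriously, the induction itself is broken, not just the insertion rule: the inductive hypothesis hands you \emph{some} consecutive ordering of $A\sm\{a^*\}$, and that particular ordering may admit no legal insertion point at all, with no forbidden subgraph to blame. Concretely, take $A=\{a^*,u,x,w\}$, $B=\{b^*,c_1,c_2\}$ and edges $a^*b^*,\ b^*x,\ c_1u,\ c_1x,\ c_1w,\ c_2x,\ c_2w$. This graph is connected, bipartite, hole-free and \fast (its only cycle is the $4$-cycle $xc_1wc_2$, there are only two vertices adjacent to exactly one cycle vertex, so no armchair; only one $4$-cycle, so no stirrer; and every candidate tripod is killed by the edge $c_1w$ or by the degree-one vertex $u$), and $a^*$ is simple. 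The ordering $(u,x,w)$ is consecutive for $G\sm\{a^*\}$, but $b^*$ forces $a^*$ to sit immediately next to $x$, and both such slots lie strictly inside the interval of the non-neighbour $c_1$ (one also inside $c_2$'s), so no insertion of $a^*$ works; yet $(a^*,x,w,u)$ is a consecutive ordering of $G$, so the failure is an artefact of the ordering produced by the induction and cannot be converted into an induced armchair, stirrer or tripod. To repair the argument you must carry a stronger invariant through the induction (for instance a staircase/monotone ordering, choosing both the eliminated vertex and the insertion slot canonically), which amounts to re-proving the characterisation of monotone graphs that the paper simply invokes; in addition, the existence of a simple vertex in the prescribed side $A$ is asserted but not justified. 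As it stands, the ``technical heart'' you defer is not merely unfinished --- the reduction ``insertion fails $\Rightarrow$ forbidden subgraph'' is false as formulated.
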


\begin{proof}
  A bipartite hole-free \fast graph is monotone. We rename the vertices in $A$
  by $1, 2, \dots, a$ and those in $B$ by $1', 2', \dots, b'$ such that the
  bi-adjacency matrix of $G$ with rows and columns in this order has the
  characteristic form of a staircase. Both $(\Nb[i])_{i=1}^{a}$ and
  $(\Nb[j'])_{j=1}^{b}$ are path decompositions of $G$ of width
  $\max\{d(v) \,:\,  v \in A\}$ and $\max\{d(v) \, :\,  v \in B\}$, respectively.
\end{proof}

\begin{lemma}
  A bipartite \fast graph $G$ has pathwidth at most $2\Delta(G)$.
\end{lemma}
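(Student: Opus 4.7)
The plan is to combine the previous two lemmas together with the deletion-bound part of Lemma~\ref{l:pw is monotone}. Since pathwidth of a graph equals the maximum pathwidth of its connected components, I may assume without loss of generality that $G=(A,B,E)$ is connected (if $G$ has an isolated vertex the bound is trivial). Since the class of \fast graphs is defined by excluded induced subgraphs, it is hereditary, and so is the class of bipartite graphs; both properties survive taking induced subgraphs.

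First, pick an arbitrary vertex $w \in V(G)$ and set $W = \Nb(w)$. By Lemma~\ref{l:nmah}, the induced subgraph $G\sm W = G \sm \Nb(w)$ is hole-free. It is also bipartite and \fast, being an induced subgraph of $G$. Hence $G\sm W$ satisfies the hypotheses of Lemma~\ref{l:pw monotone}, so
\[
  \pw(G\sm W) \;\le\; \min\bigl\{\max\{d_{G\sm W}(v) : v \in A\},\ \max\{d_{G\sm W}(v) : v \in B\}\bigr\} \;\le\; \Delta(G),
\]
where the last inequality uses that vertex-deletion cannot increase degrees.

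Second, invoke the second statement of Lemma~\ref{l:pw is monotone} with this set $W$ to obtain
\[
  \pw(G) \;\le\; \pw(G \sm W) + |W| \;\le\; \Delta(G) + |\Nb(w)| \;\le\; 2\Delta(G),
\]
since $|\Nb(w)| = d(w) \le \Delta(G)$. This completes the bound.

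The only subtle point is checking that Lemma~\ref{l:nmah} (stated for connected bipartite \fast graphs) applies; this is why I reduce to the connected case up front. There is no real obstacle beyond assembling the two preceding lemmas; the main conceptual content has been packaged into Lemma~\ref{l:nmah} (which removes holes by deleting a neighbourhood) and Lemma~\ref{l:pw monotone} (which controls pathwidth of bipartite hole-free \fast graphs by a maximum degree of one side). The factor $2$ in $2\Delta(G)$ arises naturally: one $\Delta(G)$ pays for the deleted neighbourhood and the other bounds the pathwidth of the hole-free remainder.
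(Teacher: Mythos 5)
Your proof is correct and follows essentially the same route as the paper: delete $\Nb(w)$ for an arbitrary vertex $w$ of the connected graph, apply Lemma~\ref{l:nmah} and Lemma~\ref{l:pw monotone} to bound $\pw(G\sm\Nb(w))$ by $\Delta(G)$, and then add $\Nb(w)$ back. The only cosmetic difference is that the paper adds $\Nb(w)$ to every bag explicitly, whereas you cite the second statement of Lemma~\ref{l:pw is monotone}, whose proof is exactly that construction.
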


\begin{proof}
  If $G$ is disconnected then its pathwidth is the maximum pathwidth of
  its connected components.  So we may assume that $G$ is connected, and choose any
vertex $w$.  Now $G \sm \Nb(w)$ is hole free, by Lemma~\ref{l:nmah},
and so $G \sm \Nb(w)$ has a path decomposition $(B_i)_{i=1}^{t}$
  of width at most $\Delta(G)$, by Lemma~\ref{l:pw monotone}.
Consequently $(B_i \cup \Nb(w))_{i=1}^{t}$
  is a path decomposition of $G$ and its width is at most $2\Delta(G)$.
\end{proof}

\begin{corollary}
  For every positive integer $d$, a bipartite graph that does not contain
  a tripod, an armchair, a stirrer or a star $K_{1,d+1}$ as induced subgraph
  has pathwidth at most $2d$.
\end{corollary}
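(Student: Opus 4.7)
The plan is to reduce this statement directly to the previous lemma, which asserts that any bipartite \fast graph $G$ has $\pw(G) \le 2\Delta(G)$. The only remaining content is to identify the induced $K_{1,d+1}$ exclusion with a degree bound.

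First I would observe that in any bipartite graph $G=(A,B,E)$, the neighbourhood $\Nb(v)$ of every vertex is an independent set: if $v\in A$ then $\Nb(v)\subseteq B$, and $B$ contains no edges. Consequently, for any vertex $v$ with $d(v)\ge d+1$, choosing any $(d+1)$-element subset $S\subseteq \Nb(v)$ gives an induced subgraph $G[\{v\}\cup S]\cong K_{1,d+1}$. Hence the exclusion of $K_{1,d+1}$ as an induced subgraph is equivalent, for bipartite graphs, to the degree bound $\Delta(G)\le d$.

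Since $G$ is assumed to exclude the tripod, armchair and stirrer, it is a bipartite \fast graph and the previous lemma applies, yielding $\pw(G)\le 2\Delta(G)\le 2d$, which is the required conclusion.

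No step here is a real obstacle: the whole content is the observation that in the bipartite setting, ``no induced $K_{1,d+1}$'' collapses to ``maximum degree at most $d$'', after which the preceding lemma finishes the argument.
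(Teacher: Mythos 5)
Your proposal is correct and matches exactly the intended argument: the paper states this corollary without proof as an immediate consequence of the preceding lemma ($\pw(G)\le 2\Delta(G)$ for bipartite \fast graphs), and the only missing observation is precisely the one you supply, namely that in a bipartite graph neighbourhoods are independent sets, so excluding an induced $K_{1,d+1}$ is equivalent to $\Delta(G)\le d$.
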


This implies that \fast graphs with degree bound $d$ have bipartite pathwidth at most $2d$.
However, we will improve on this below.

\subsubsection{Bound on the size of complete bipartite subgraphs}

For a positive integer $n$ let $[n] = \{1,2,\dots,n\}$ and for $n \in [m]$
let $[n,m] = \{n,n+1,\dots,m\}$. We consider a monotone graph $G=(L,R,E)$
where $L = [\ell]$ and $R = [r]'$, where the latter is $\{j' \, :\,  j \in [r]\}$.
For $d \ge 0$ let $L_i^d = [i,i+d-1]$ and $R_j^d = [j,j+d-1]'$. If $d$ is fixed
then we abbreviate $x+d-1$ by $\hat{x}$.

We define $\psi(G) = \max \{d \, :\, K_{d,d} \subseteq G\}$. Since
$\pw(K_{d,d}) = d$ holds for all $d \ge 1$, it follows from Lemma~\ref{l:pw is monotone}
that $\psi(G) \le \pw(G)$
for all graphs $G$ with at least one edge.

\begin{lemma} \label{l:d<2b}
  For a bipartite \fast graph we have $\delta(G) \le 2\,\psi(G)$.
\end{lemma}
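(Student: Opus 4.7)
The plan is to take a minimum-degree vertex and extract a complete bipartite subgraph of size at least $\delta(G)/2$ from its neighbourhood structure, using the tripod-freeness as the main lever and the armchair/stirrer conditions to tighten the argument.

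Let $d = \delta(G)$ and pick a vertex $v$ attaining this minimum, with $v \in L$ (WLOG) and $N(v) = W = \{w_1,\ldots,w_d\} \subseteq R$.  Set $A_i = N(w_i)\setminus\{v\} \subseteq L$; since every $w_i$ has degree at least $d$, we have $|A_i| \geq d-1$ for all $i$.  My first step is the following tripod-based observation: if three indices $i_1,i_2,i_3$ admit vertices $u_k \in A_{i_k} \setminus \bigcup_{j \neq i_k} A_j$ for $k=1,2,3$, then $\{v, w_{i_1}, w_{i_2}, w_{i_3}, u_1, u_2, u_3\}$ induces a tripod centred at $v$ (using bipartiteness to rule out cross-edges among the $u_k$ and among the $w_{i_k}$).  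Consequently, at most two of the sets $A_i$ contain a ``private'' vertex, so for at least $d-2$ indices $i$, every $u \in A_i$ also lies in some $A_j$ with $j \neq i$.

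Next I would build the biclique.  Consider the auxiliary graph $\Gamma$ on vertex set $W$ with $w_i w_j$ an edge when $|A_i \cap A_j| \geq d/2 - 1$.  The goal is to show that $\Gamma$ contains a clique of size $\lceil d/2 \rceil$: the corresponding $w_i$'s together with $v$ and any $d/2 - 1$ shared vertices would form a $K_{d/2, d/2}$, giving $\psi(G) \geq d/2$.  The tripod step above shows that most $w_i$'s share substantially with the others; to promote ``substantial sharing'' to a clique in $\Gamma$, I would use armchair-freeness and stirrer-freeness.  The armchair configuration arises whenever a $C_4$ on $\{v, w_i, u, w_j\}$ (coming from a shared second-neighbour $u \in A_i \cap A_j$) is extended by three distinct pendant vertices, so armchair-freeness limits how many ``extra'' neighbours the $w_i$'s can have outside common pairs.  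The stirrer (two $C_4$'s sharing an edge with a pendant) similarly constrains how two overlapping $4$-cycles through $v$ can coexist.  Combining these with the tripod bound should force the ``high-sharing'' pairs in $\Gamma$ to dominate, yielding the required clique.

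The main obstacle will be the armchair/stirrer bookkeeping in this last step: the tripod argument alone guarantees only that at most two of the $w_i$ have exclusive private neighbours, which is not enough to produce balanced sharing.  One has to show quantitatively that the $d-2$ ``non-private'' sets $A_i$ overlap with at least $\lceil d/2 \rceil - 1$ fixed vertices in $L$, and this is where the configurations ruled out by the armchair (the $C_4$ plus three pendants) and the stirrer (two $C_4$'s plus pendant) come in — they prevent the $A_i$'s from being spread out in ``many small overlaps''.  Once the balanced overlap of size $\geq d/2-1$ is established, appending $v$ produces the $K_{\lceil d/2\rceil, \lceil d/2\rceil}$ biclique and the lemma follows.
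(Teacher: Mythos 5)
There is a genuine gap, and it sits exactly where you flag it. Your tripod observation is correct (the seven vertices you list do induce a tripod, by bipartiteness and the privacy assumption), but it only yields that at most two of the sets $A_i$ contain a vertex lying in no other $A_j$, which gives essentially no quantitative control on how the $A_i$ overlap. The auxiliary-graph step then has a structural flaw: even if $\Gamma$ contained a clique of size $\lceil d/2\rceil$, this only guarantees \emph{pairwise} intersections $|A_i\cap A_j|\ge d/2-1$, not $d/2-1$ vertices common to \emph{all} the chosen $A_i$, which is what building a $K_{\lceil d/2\rceil,\lceil d/2\rceil}$ through $v$ requires; pairwise large overlap does not imply large common overlap for general set systems, and no argument is offered to bridge this. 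The armchair/stirrer ``bookkeeping'' that is supposed to supply the missing control is never carried out, and it is doubtful that purely local reasoning around one vertex can suffice: in the extremal construction of Fig.~\ref{fig:C6}, which attains $\delta=2\psi$, the sets $A_i$ are staggered intervals whose common intersection is tiny, so any correct argument must \emph{select} a suitable subfamily of the $A_i$ before intersecting, and your proposal contains no mechanism for doing so.

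The paper's proof takes a different, more global route. It exploits the monotone (staircase) ordering of the bi-adjacency matrix available in this setting (cf.\ Lemma~\ref{l:pw monotone}): assuming $\delta(G)\ge 2k+1$ with $k=\psi(G)$, pick a vertex $i$ with a block of $2k+1$ consecutive neighbours centred at $j'$, let $u$ be the largest shift with $i-u$ still adjacent to $j'$; the staircase pattern then forces a $(u+1)\times(k+1)$ and a $(2k-u+1)\times(k+1)$ all-ones block, one of which contains a $K_{k+1,k+1}$, contradicting $\psi(G)=k$. In effect the interval structure of the neighbourhoods (which have the Helly-type behaviour your plan silently needs) is doing the real work; if you want to salvage your approach you would have to derive a comparable interval or ordering property from the forbidden subgraphs, rather than hoping the armchair and stirrer exclusions alone prevent ``many small overlaps''.
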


\begin{proof}
Let $k=\psi(G)$, and suppose, for a contradiction, that $\delta(G) \ge 2k+1$.
Let $i\in L$ be any vertex,
and $j'$ be such that $\{i,t'\}\in E$ for all $t\in[j-k,j+k]$. Such an $j$ exists
as $G$ is monotone and bipartite and
$\deg(i)\geq 2k+1$. Now let $u=\max\{s\, :\, \{i-s,j'\}\in E\}$.
Clearly $u\geq 0$, and $\{t,j'\}\in E$ for all $t\in[i-u,i-u+2k]$. This interval for $t$
is guaranteed by $\deg(i)\geq 2k+1$. Now, by the monotone property of \fast graphs, we have
$\{i-u,j'-k\}\in E$ and $\{i-u+2k,j'+k\}\in E$. Hence, again by the monotone property,
$\{t,r'\}\in E$ for all $t\in[i-u,i]$, $r\in[j'-k,j]$, and for all $t\in[i,i+2k-u]$, $r\in[j',j'+k]$.
The former is a $(u+1)\times(k+1)$ biclique, and the latter is a $(2k-u+1)\times(k+1)$ biclique, as illustrated in Figure~\ref{fig:d<2b}.
If $u\geq k$ then the former contains a $(k+1)\times (k+1)$ biclique,
and, if $u\leq k$ then the latter contains a $(k+1)\times (k+1)$ biclique,
contradicting $\psi(G)=k$.

 \begin{figure}[htbp]
\begin{center}
    \hspace*{\fill}
    \begin{tikzpicture}[scale=0.4,font=\small]
    \draw (2.75,-1) rectangle (14,9) ;
      \foreach \n/\y in {i-u/8, i/6, i-u+2k/2} \node[left] (\n) at (2.0,\y) {$\n$};
      \foreach \n/\x in {(j-k)/4, j/8,(j+k)/12} \node[i] (\n) at (\x,10) {$\n'$};
      \draw[-,thin] (4,-1)--(4,9) (8,-1)--(8,9) (12,-1)--(12,9) ;
      \draw[-] (2.75,2)--(14,2) (2.75,6)--(14,6) (2.75,8)--(14,8) ;
      \draw[fill=lightgray!60!white] (4.0,6.0) rectangle (8.0,8.0);                         ;
      \draw[fill=lightgray!60!white] (8.0,2.0) rectangle (12.0,6.0);
\node at  (17,0) {\phantom{hi}};
  \end{tikzpicture}
    \hspace*{\fill}
    \caption{Two bicliques which must be present if $\delta(G)>2\psi(G).$}
    \label{fig:d<2b}
\end{center}
  \end{figure}
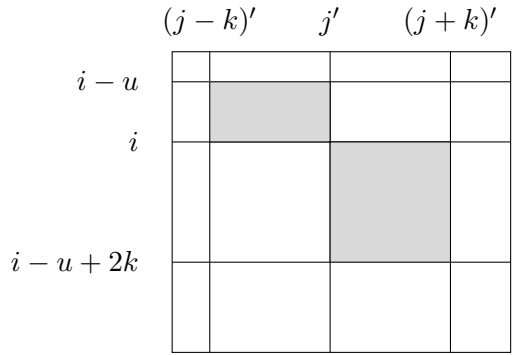
\end{proof}
The bound in Lemma~\ref{l:d<2b} is tight, by the following construction. Let $G$ have $L=[n]$, $R=[n]'$  and
$E=\{(i,j') \, :\, i\in[n], j'\in[i',(i+\delta \mod n)']\}$. See Figure~\ref{fig:C6}. It is easy to see that this graph has minimum degree $\delta$, and the largest $k\times k$ biclique has $k\leq\delta/2$, so $\delta\geq2\psi$.
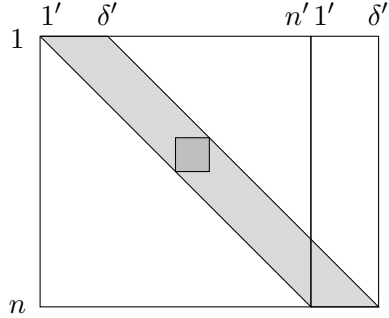
\begin{figure}[htbp]
  \centering
  \begin{tikzpicture}[scale=0.3,font=\small]
  \draw[fill=lightgray!60!white] (0,12)--(12,0)--(15,0)--(3,12)--(0,12);
  \draw (0,0) rectangle (12,12) ;
  \draw (12,0) rectangle (15,12) ;
  \draw[fill=lightgray] (6,6) rectangle (7.5,7.5) ;
    \draw (-1,0) node {$n$} (-1,11.9) node {$1$};
  \draw (0.5,13) node {$1'$} (3,13) node {$\delta'$};
  \draw (11.4,13) node {$n'$} (12.6,13) node {$1'$} (15,13) node {$\delta'$};
    \end{tikzpicture}
  \caption{The bound in Lemma \ref{l:d<2b} is tight.}
  \label{fig:C6}
\end{figure}


\begin{thm}\label{thm:mono}
  For every monotone graph $G$ with at least one edge
  $\pw(G) \le 2\,\psi(G) - 1$ holds.
\end{thm}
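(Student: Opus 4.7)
Let $k=\psi(G)\ge 1$. I will construct an explicit path decomposition of $G$ of width at most $2k-1$ by sweeping a sliding window along the bi-adjacency staircase. Writing $\Nb(i)=[\alpha(i),\beta(i)]'$ for each $i\in L=[\ell]$, both $\alpha$ and $\beta$ are non-decreasing. A key preliminary observation is that, in the monotone setting, the hypothesis $\psi(G)\le k$ is equivalent to
\[ \beta(i)-\alpha(i+k)\le k-1 \qquad \text{for every } i\in[\ell-k], \]
because any $K_{k+1,k+1}$ subgraph may be taken to have consecutive rows $\{i,\dots,i+k\}$, whose common neighbourhood is exactly $[\alpha(i+k),\beta(i)]'$.

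The path decomposition is indexed by a monotone lattice path $(i_t,j_t)_{t=0}^T$ from $(1,1)$ to $(\ell,r)$ in which each step advances one coordinate by $1$, with
\[ B_t = \{\max(1,i_t-k+1),\dots,i_t\}\;\cup\;\{\max(1,j_t-k+1)',\dots,j_t'\}.\]
Then $|B_t|\le 2k$, giving width at most $2k-1$, and the interval property is automatic because both coordinates of the scan are non-decreasing: vertex $i\in L$ lies in $B_t$ exactly when $i_t\in[i,i+k-1]$, and similarly for $R$. I encode the scan by a non-decreasing transition function $\rho$ whose value $\rho(i)$ is the coordinate $j_t$ at the moment $i_t$ first becomes $i$. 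Since $j_t$ takes every integer value in $[\rho(i_0),\rho(i_0+k)]$ while $i_t\in[i_0,i_0+k-1]$, an edge $(i_0,j_0')$ with $j_0\in[\alpha(i_0),\beta(i_0)]$ is covered by some $B_t$ iff this range meets $[j_0,j_0+k-1]$; examining the extreme values $j_0=\alpha(i_0)$ and $j_0=\beta(i_0)$ reduces edge-coverage to the pair of inequalities
\[ \rho(i)\le\alpha(i)+k-1 \qquad \text{and} \qquad \rho(i+k)\ge\beta(i). \]

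The key point is that these two inequalities can be satisfied simultaneously by a non-decreasing $\rho$ \emph{precisely because} of the bound $\beta(i)-\alpha(i+k)\le k-1$; concretely, taking $\rho(i):=\max(1,\beta(i-k))$ (with the conventions $\beta(i):=1$ for $i\le 0$, and letting $j_t$ continue from $\rho(\ell)$ up to $r$ after $i_t$ reaches~$\ell$) produces a valid scan. The delicate step is verifying that this $\rho$ is non-decreasing and meets both inequalities simultaneously, including at the boundaries $i\le k$ and $i>\ell-k$ — this is where the hypothesis $\psi(G)\le k$ is used crucially. Given $\rho$, the bag-size, interval-property and edge-coverage checks each follow by direct inspection, yielding the claimed path decomposition of width at most $2k-1$.
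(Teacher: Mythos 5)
Your construction is correct and is essentially the paper's own proof: both sweep a $\psi\times\psi$ window monotonically across the staircase bi-adjacency matrix, obtaining bags of size $2\psi$ (hence width $2\psi-1$), with $K_{\psi+1,\psi+1}$-freeness being exactly what guarantees the window can cover every $1$-entry. The only difference is bookkeeping: the paper steers the window by a local greedy rule (move right or down according to adjacency at the window's corners, both moves being forced would create a $K_{\psi+1,\psi+1}$), whereas you prescribe the route explicitly via $\rho(i)=\max(1,\beta(i-k))$ and verify the two coverage inequalities from the equivalent condition $\beta(i)-\alpha(i+k)\le k-1$; those verifications, and the reduction to connected components implicit in your interval-neighbourhood assumption, do go through as claimed.
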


\begin{proof}
  If $G$ is disconnected then its pathwidth is the maximum pathwidth of its
  connected components. Therefore we may assume that $G$ is connected.
If $G$ has no edges then $G$ is an isolated
vertex, and we can take a path decomposition consisting of one bag containing
this vertex.  From now on we assume that $G$ has at least one edge.

Denote the
  partite sets of $G$ by $L=[1,n]$ and  $R=[1,m]'$, and let $d = \psi(G)$. As before
  let $\hat{j} = j+d-1$.
  We assume $L$ and $R$ are numbered such that the bipartite adjacency
  matrix $A$ of $G$ does not contain the following submatrices:
  \[ \left(\begin{array}{cc} 1 & 1 \\ 1 & 0 \end{array} \right) \qquad
     \left(\begin{array}{cc} 0 & 1 \\ 1 & 0 \end{array} \right) \qquad
     \left(\begin{array}{cc} 0 & 1 \\ 1 & 1 \end{array} \right) \]
  We construct a path decomposition $(B_i)_{i=1}^{t}$ of $G$ where
  $t=n+m-2d+1$.
  Each bag $B_i$ is of the form $[\ell_i,\hat{\ell}_i] \cup  [r_i,\hat{r}_i]'$.
  For the first bag we have $\ell_1=1$ and $r_1=1$, and the last bag has
  $\hat{\ell}_t=n$ and $\hat{r}_t=m$. For two consecutive bags $B_i$ and
  $B_{i+1}$ we have either
\begin{equation}
\label{taxi-driver}
 \ell_{i+1} = \ell_i \,\, \text{ and  }\,\, r_{i+1} = r_i + 1,\,\,
\text{ or } \ell_{i+1} = \ell_i+1 \,\, \text{ and }\,\, r_{i+1} = r_i.
\end{equation}
That is, we move a window
  of size $d \times d$ over $A$ from the top-left position to the bottom-right
  one. In each step we move it either one unit to the right or one unit down.
  To obtain a path decomposition we have to do this in such a way that every
  entry $1$ in $A$ is covered by the window at least once. This way we ensure
  that for every edge $e$ of $G$ there is an index $i \in [1,t]$ such that
  $e \subseteq B_i$.

  Let $B_i = [\ell_i,\hat{\ell}_i] \cup [r_i, \hat{r}_i]$.
  If  $\ell_i \in L$ is adjacent to $(r_i+d)'$ then we move the window to the right,
  that is,  $\ell_{i+1} = \ell_i$ and $r_{i+1} = r_i + 1$.
  If $r_i' \in R$ is adjacent to $\ell_i+d$ then we move the window down,
  that is, $\ell_{i+1} = \ell_i+1$ and $r_{i+1} = r_i$.  If both of these conditions hold then
  $[\ell_i,\ell_i+d] \cup [r_i, r_i+d]'$ induces a $K_{d+1,d+1}$
  in $G$, contradicting $\psi(G)=d$. If neither $\{\ell_i,(r_i+d)'\}$ nor
  $\{\ell_i+d, r_i'\}$ is an edge then it does not matter whether we move the
  window down or right, as long as $\hat{\ell}_{i+1} \le n$ and
  $\hat{r}_{i+1} \le m$.  (When both directions are possible we choose one arbitrarily.)

  We now check that $(B_i)_{i=1}^t$ is indeed a path decomposition of $G$.
  Condition \ref{pw1} from the definition is fulfilled because we start
  at the top-left corner of the adjacency matrix ($\ell_1=1$ and $r_1=1$), stop
  at the bottom-right corner ($\hat{\ell}_t=n$ and $\hat{r}_t=m$), and using
 (\ref{taxi-driver}).
  Since $G$ is connected we know that $\{ 1, 1'\}, \{ n, m'\}$ are both edges of $G$,
and by construction and definition of $\psi$, the window sweeps over every entry of
$A$ which equals~1.  (For example, if the current window is at $(\ell_i,r_i')$ then the
window will not move down if $\{ \ell_i,(r_i+1)'\}$ is an edge: rather, the window will
move to the right and cover this entry.)
This shows that Condition~\ref{pw2} holds.
Condition~\ref{pw3} is satisfied because
  we move the window only to the right or down, never to the left or up.
  Finally, all the bags $B_i$ have size $2d$. Therefore $\pw(G) \le 2d-1$.
\end{proof}

Together with Lemmas \ref{l:nmah} and \ref{l:d<2b} we obtain the following
corollary.

\begin{thm}\label{thm:tasf}
  For every integer $d \ge 1$, a bipartite graph that does not contain
  an armchair, a stirrer, a tripod or a $K_{d+1,d+1}$ as an induced subgraph
  has pathwidth at most $4d-1$.
\end{thm}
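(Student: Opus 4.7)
The plan is to assemble the theorem from the three preceding results: Lemma~\ref{l:nmah}, Lemma~\ref{l:d<2b}, and Theorem~\ref{thm:mono}, glued together by the second statement of Lemma~\ref{l:pw is monotone}. Since the pathwidth of a disconnected graph is the maximum pathwidth over its connected components, and since the class of bipartite \fast graphs with no induced $K_{d+1,d+1}$ is hereditary (in particular closed under taking connected components), I may assume $G$ is connected. If $G$ has no edges, then $\pw(G)=0\leq 4d-1$, so I may also assume $G$ has at least one edge.

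Next, I choose a vertex $w$ of \emph{minimum} degree in $G$. Since $G$ is bipartite and \fast, Lemma~\ref{l:d<2b} gives $\delta(G)\leq 2\,\psi(G)$. The hypothesis that $G$ has no induced $K_{d+1,d+1}$ means $\psi(G)\leq d$, so $|\Nb(w)|=\delta(G)\leq 2d$. This is the key numerical point: picking $w$ of minimum degree (rather than arbitrary) is what makes $|\Nb(w)|$ small enough.

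Now set $H=G\setminus\Nb(w)$. The class of bipartite \fast graphs is hereditary, so $H$ is bipartite and \fast. By Lemma~\ref{l:nmah}, $H$ is hole-free. As observed in the proof of Lemma~\ref{l:pw monotone}, a bipartite hole-free \fast graph is monotone, so $H$ is monotone. Since $H$ is an induced subgraph of $G$, $\psi(H)\leq\psi(G)\leq d$. If $H$ has no edges its pathwidth is at most $0\leq 2d-1$; otherwise Theorem~\ref{thm:mono} gives
\[
  \pw(H)\,\leq\, 2\,\psi(H)-1\,\leq\, 2d-1.
\]

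Finally I apply the second statement of Lemma~\ref{l:pw is monotone} with $W=\Nb(w)$ to conclude
\[
  \pw(G)\,\leq\,\pw(H)+|\Nb(w)|\,\leq\,(2d-1)+2d\,=\,4d-1,
\]
which is the desired bound. The whole argument is essentially a bookkeeping exercise; the only ``choice'' with any content is the decision to delete the neighbourhood of a minimum-degree vertex, and the only place where a subtlety could arise is the passage from $G$ to $H$, where one must simultaneously verify that $H$ is bipartite, \fast, hole-free (so monotone), and has $\psi(H)\leq d$. Each of these is immediate from heredity and the lemmas already proved, so I do not expect a serious obstacle.
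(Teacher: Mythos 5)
Your proposal is correct and is essentially the paper's own proof: delete the neighbourhood of a minimum-degree vertex, use Lemma~\ref{l:nmah} and Theorem~\ref{thm:mono} to bound the pathwidth of the remainder by $2\psi-1\le 2d-1$, bound $\delta\le 2\psi\le 2d$ by Lemma~\ref{l:d<2b}, and combine via the second statement of Lemma~\ref{l:pw is monotone}. Your treatment of the connectedness and edgeless corner cases is just a slightly more explicit rendering of the same argument.
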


\begin{proof}
  We have $\psi\leq d$. Without loss of generality, we assume that $G$ is
  connected.  Let $v$ be a vertex of minimum degree in $G$, so $v$ has degree
  $\delta =\delta(G)$.  By Lemma~\ref{l:nmah}, the graph $G \sm \Nb(v)$ is
  hole-free, and hence $\pw(G \sm \Nb(v))\leq 2\psi- 1$, by
  Theorem~\ref{thm:mono}.  Then applying the second statement of
  Lemma~\ref{l:pw is monotone} shows that $\pw(G)\leq 2\psi- 1+\delta\leq
  4\psi-1$, using the fact that $|\Nb(v)| = \delta\leq 2\psi$, by
  Lemma~\ref{l:d<2b}.
\end{proof}

The bound of Theorem~\ref{thm:tasf} is almost tight.
Let $G$ be the bipartite \fast graph depicted in Fig.~\ref{fig:C6}.
We claim that the pathwidth of $G$ is $4d-2$.
Let $S$ be a set which contains the intersection of the neighbourhoods of two successive
vertices $i,i+1$, or  $j',(j+1)'$. Thus $|S|\geq d-1$.
Then, by Lemma~\ref{l:pw is monotone},
\[ \pw(G) \leq \pw(G \sm S) + |S|\leq d -1 + 2\psi-1\leq 4d-2,\]
since the graph of Fig.~\ref{fig:C6} is tight for Lemma~\ref{l:d<2b}.

\section{Conclusions and further work}

It is clearly NP-hard in general to determine the bipartite pathwidth of a graph,
since it is NP-complete to determine the pathwidth of a bipartite graph. However,
we need only determine whether $\bpw(G)\leq d$ for some constant $d$. The complexity of
this question is  not currently known, though as mentioned earlier, Mann and Mathieson~\cite{MM}
report that the problem is W[1]-hard in the worst case.
 Bodlaender~\cite{HLB:pw} has shown that the question $\pw(G)\leq d$,
can be answered in $O(2^{d^2}n)$ time. However, this implies nothing about $\bpw(G)$, since we have
seen that $\bpw$ may be bounded for graph classes in which $\pw$ is unbounded.

We have therefore examined some classes of graphs where we can guarantee that $\bpw(G)\leq d$,
for some known constant $d$. Here our recognition algorithm is simply detection of excluded
induced subgraphs,
and we leave open the possibility of more efficient recognition.

In the case of claw-free graphs we have obtained stronger sampling results using log-concavity.
This raises the question of how far log-concavity extends in this setting.
For example, does it hold for fork-free graphs? More ambitiously, does some generalisation
of log-concavity hold for graphs of bounded bipartite pathwidth?

Where log-concavity holds, it allows us to approximate the number of independent sets of a given size.
However, there is still the requirement of ``amenability''~\cite{JS}. Jerrum, Sinclair and Vigoda~\cite{JSV}
have shown that this can be dispensed with in the case of matchings in bipartite graphs. A natural question is:
how far does this extend to claw-free graphs?
In~\cite{DJMV}, it is shown that the results of~\cite{JSV} carry over to the class (fork,\,odd hole)-free graphs, which strictly contains the class of line graphs of bipartite graphs.

An extension would be to consider \emph{bipartite treewidth}, $\btw(G)$. Since $\tw(G)= O(\pw(G)\log n)$ \cite[Thm.~66]{HLB:arb},
our results here immediately imply that bounded bipartite treewidth implies \emph{quasipolynomial} mixing time for the Glauber dynamics. Can this be improved to polynomial time, or can some other approach give this?

Finally, can other approaches to approximate counting be employed for the independent set problem in
these graph classes? As mentioned earlier, Patel and Regts~\cite{PR} used the 
Taylor expansion approach for claw-free graphs, and it follows from
a result of Bencs~\cite{bencs} that the Taylor expansion approach can be applied to 
bounded-degree fork-free graphs, giving a FPTAS for approximating $P_\lambda(G)$ in
these cases.  Can these results be extended to larger classes of 
graphs?

\subsubsection*{Acknowledgements} 
We would like to thank the anonymous referee for their helpful comments and for
bringing~\cite{alg,clv} to our attention. We also thank Ferenc Bencs for letting
us know that the results of~\cite{bencs} lead to an FPTAS
for the independence polynomial of bounded-degree fork-free graphs.


\begin{thebibliography}{99}

\bibitem{Aldous} D.~Aldous,
  On the Markov chain simulation method for uniform combinatorial distributions
  and simulated annealing,
  \textsl{Probability in the Engineering and Informational Sciences} \textbf{1} (1987), 33--46.

\bibitem{Alekseev} V.\,E.~Alekseev,
  Polynomial algorithm for finding the largest independent sets in graphs without forks,
  \textsl{Discrete Applied Mathematics} \textbf{135} (2004), 3--16.

\bibitem{alg}
N.~Anari, K.~Liu and S.\,O.~Gharan,
Spectral independence in high-dimensional expanders and applications to the
hardcore model, 
in \textsl{Proc.\ 61st Annual Symposium on Foundations of Computer Science} (FOCS 2020),
IEEE, 2020. 

\bibitem{barvinok-book}
A.~Barvinok, \emph{Combinatorics and Complexity of Partition Functions},
Springer, Cham, 2016.

\bibitem{barvinok} A.~Barvinok,
  Computing the partition function of a polynomial on the Boolean cube,
  in: M.~Loebl, J.~Ne\v{s}et\v{r}il and R.~Thomas (eds)
  \textsl{A Journey Through Discrete Mathematics}, Springer, 2017, pp.~135--164.

\bibitem{BGKNP}
  M.~Bayati, D.~Gamarnik, D.~Katz, C.~Nair and P.~Tetali,
  Simple deterministic approximation algorithms for counting matchings,
  \textsl{Proc.\ 39th ACM Symposium on Theory of Computing} (STOC 2007), ACM, 122--127.

\bibitem{beineke} L.~Beineke,
  Characterizations of derived graphs,
  \textsl{Journal of Combinatorial Theory} \textbf{9} (1970), 129--135.

\bibitem{bencs}
F.~Bencs, Christoffel--Darboux type identities for the independence polynomial,
\emph{Combinatorics, Probability and Computing} {\bf 27(5)} (2018), 716--724.

\bibitem{HLB:pw} H.\,L.~Bodlaender,
  A linear time algorithm for finding tree-decompositions of small treewidth,
  \textsl{SIAM Journal on Computing} \textbf{25} (1996), 1305--1317.

\bibitem{HLB:arb} H.\,L.~Bodlander,
  A partial $k$-arboretum of graphs with bounded treewidth,
  \textsl{Theoretical Computer Science} \textbf{209} (1998), 1--45.

\bibitem{BKW} P.~Bonsma, M.~Kami{\' n}ski and M.~Wrochna,
  Reconfiguring independent sets in claw-free graphs,
  in: 
  \textsl{Algorithm Theory -- SWAT 2014}.
  Springer LNCS \textbf{8503} 2014, 86--97.

\bibitem{Branden} P.~Br\"and\'en,
  Unimodality, log-concavity, real–rootedness and beyond,
  Chapter~7 in \textsl{Handbook of Enumerative Combinatorics}, Chapman and Hall/CRC, 2015.

\bibitem{survey} A.~Brandst\"adt, V.\,B.~Le, and J.~Spinrad,
  \textsl{Graph Classes: A Survey},
  SIAM Monographs on Discrete Mathematics and Application,
  Philadelphia, 1999.

\bibitem{clv}
Z.~Chen, K.~Liu and E.~Vigoda,
Rapid mixing of Glauber dynamics up to uniqueness via contraction,
in \textsl{Proc.\ 61st Annual Symposium on Foundations of Computer Science} (FOCS 2020),
IEEE, 2020. 

\bibitem{CS} M.~Chudnovsky and P.~Seymour,
  The roots of the independence polynomial of a clawfree graph,
  \textsl{Journal of Combinatorial Theory} (\textsl{Series B}) \textbf{97} (2007), 350--357.

\bibitem{Curt} R.~Curticapean,
  Counting matchings of size $k$ is \#W[1]-hard,
  in: \textsl{Automata, Languages, and Programming (ICALP 2013)}.
  Springer LNCS \textbf{7965}, 2013, 352--363.

\bibitem{DS93} P.~Diaconis and L.~Saloff-Coste,
  Comparison theorems for reversible {M}arkov chains.
  \textsl{Annals of Applied Probability}, 3:696--730, 1993.

\bibitem{DS91} P.~Diaconis and D.~Stroock.
  Geometric bounds for eigenvalues of {M}arkov chains.
  \textsl{Annals of Applied Probability}, 1:36--61, 1991.

\bibitem{diestel} R.~Diestel, \textsl{Graph Theory},
  4th Edition, Graduate Texts in Mathematics \textbf{173}, Springer, 2012.

\bibitem{DGGJ03} M.~Dyer, L.\,A.~Goldberg, C.~Greenhill and M.~Jerrum,
  On the relative complexity of approximate counting problems,
  \textsl{Algorithmica} \textbf{38} (2003) 471--500.

\bibitem{DG} M.~Dyer and C.~Greenhill,
  On Markov chains for independent sets,
  \textsl{Journal of Algorithms} \textbf{35} (2000), 17--49.

\bibitem{DM} M.~Dyer and H.~M{\"u}ller,
  Counting independent sets in cocomparability graphs,
  \textsl{Information Processing Letters} \textbf{144} (2019), 31--36.

\bibitem{DGM} M.~Dyer, C.~Greenhill and H.~M{\"u}ller,
  Counting independent sets in graphs with bounded bipartite pathwidth,
  in: \textsl{Graph-Theoretic Concepts in Computer Science (WG 2019)},
  Springer LNCS \textbf{11789}, 2020, 298--310.

 \bibitem{DJM} M.~Dyer, M.~Jerrum and H.~M{\"u}ller,
  On the switch Markov chain for perfect matchings,
 \textsl{Journal of the Association for Computing Machinery} \textbf{64} (2017), Art.~12.

\bibitem{DJMV} M. Dyer, M. Jerrum, H. M\"uller and K. Vu\v{s}kovi\'{c},
  Counting weighted independent sets beyond the permanent,
     Preprint, 2019. \arxiv{1909.03414}

\bibitem{edmonds} J.~Edmonds, Paths, trees, and flowers,
  \textsl{Canadian Journal of Mathematics} \textbf{17} (1965), 449--467.

\bibitem{EHSVY} C.~Efthymiou, T.~Hayes, D.~\v{S}tefankovi\v{c}, E.~Vigoda and Y.~Yin,
  Convergence of MCMC and loopy BP in the tree uniqueness region for the hard-core model,
  in \textsl{Proc.\ 57th Annual Symposium on Foundations of Computer Science} 
(FOCS 2016), IEEE, 2016, 704--713.

\bibitem{greenhill} C.~Greenhill,
  The complexity of counting colourings and independent sets in sparse graphs and hypergraphs,
  \textsl{Computational Complexity} \textbf{9} (2000), 52--72.

\bibitem{hamidoune} Y.\,O.~Hamidoune,
  On the numbers of independent $k$-sets in a claw free graph,
  \textsl{Journal of Combinatorial Theory} (\textsl{Series B}) \textbf{50} (1990), 241--244.

\bibitem{HaMa} G.~Harris and C.~Martin,
  The roots of a polynomial vary continuously as a function of the coefficients,
  \textsl{Proc.~of the AMS} \textbf{100} (1987), 390--392.

\bibitem{HSV} N.\,J.\,A.~Harvey, P.~Srivastava and J.~Vondr\'{a}k,
  Computing the independence polynomial: from the tree threshold down to the roots,
   in \textsl{Proc.\ of the 29th Annual ACM-SIAM Symposium on Discrete Algorithms}
   (SODA 2018), 1557--1576

\bibitem{jerrumbook} M.~Jerrum,
  \textsl{Counting, Sampling and Integrating: Algorithms and Complexity},
  Lectures in Mathematics -- ETH Z\"{u}rich, Birkh\"{a}user, Basel, 2003.

\bibitem{JS} M.~Jerrum and A.~Sinclair,
  Approximating the permanent,
  \textsl{SIAM Journal on Computing} \textbf{18} (1989), 1149--1178.

\bibitem{JSV} M.~Jerrum, A.~Sinclair and E.~Vigoda, A polynomial-time approximation algorithm
  for the permanent of a matrix with non-negative entries,
  \textsl{Journal of the ACM} \textbf{51} (2004), 671--697.

\bibitem{JVV} M.\,R.~Jerrum, L.\,G.~Valiant and V.\,V.~Vazirani,
  Random generation of combinatorial structures from a uniform distribution,
  \textsl{Theoretical Computer Science} \textbf{43} (1986), 169--188.

\bibitem{MM} R.~Mann and L.~Mathieson, personal communication.

\bibitem{JM} J.~Matthews,
  Markov chains for sampling matchings,
  PhD Thesis, University of Edinburgh, 2008.

\bibitem{LV} M.~Luby and E.~Vigoda, Approximately counting up to four,
  in \textsl{Proc.\ 29th Annual ACM Symposium on Theory of Computing} (STOC 1995),
  ACM, 1995, 150--159.

\bibitem{Minty} G.\,J.~Minty,
  On maximal independent sets of vertices in claw-free graphs,
  \textsl{Journal of Combinatorial Theory, Series B}, \textbf{28} (1980), 284--304,

\bibitem{PR} V.~Patel and G.~Regts,
  Deterministic polynomial-time approximation algorithms for partition
  functions and graph polynomials, \textsl{SIAM Journal on Computing} 46 (2017), 1893--1919.

\bibitem{PrBa83} J.\,S.~Provan and M.\,O.~Ball,
  The complexity of counting cuts and of computing the probability that a
  graph is connected,
  \textsl{SIAM Journal on Computing} \textbf{12} (1983) 777--788.

\bibitem{RS} N.~Robertson and P.\,D.~Seymour,
  Graph minors I: Excluding a forest,
  \textsl{Journal of Combinatorial Theory, Series B}, \textbf{35} (1983), 39--61.

\bibitem{sinclair} A.~Sinclair,
  Improved bounds for mixing rates of Markov chains and multicommodity flow,
  \textsl{Combinatorics, Probability and Computing} \textbf{1} (1992), 351--370.

\bibitem{sly} A.~Sly,
  Computational transition at the uniqueness threshold, in \textsl{Proc.\ 51st
  IEEE Symposium on Foundations of Computer Science} (FOCS 2010), IEEE, 2010, 287--296.

\bibitem{Vadhan} S.\,P.~Vadhan,
  The complexity of counting in sparse, regular, and planar graphs,
  \textsl{SIAM Journal on Computing} \textbf{31} (2001), 398--427.

\bibitem{weitz} D.~Weitz,
  Counting independent sets up to the tree threshold,
  in \textsl{Proc.\ 38th ACM Symposium on Theory of Computing} (STOC 2006), ACM, 2006, 140--149.

\end{thebibliography}
\end{document}